\title{Towards a geometry for syntax}
\author{Jonathan Sterling\thanks{University of Cambridge}}
\begin{document}

\maketitle

\begin{abstract}
It often happens that \emph{free} algebras for a given theory satisfy useful
reasoning principles that are not preserved under homomorphisms of algebras,
and hence need not hold in an arbitrary algebra. For instance, if $M$ is the
free monoid on a set $A$, then the scalar multiplication function $A \times M \rightarrow M$
is injective. Therefore, when reasoning in the \emph{formal theory of
monoids} under $A$, it is possible to use this injectivity law to make sound
deductions even about monoids under $A$ for which scalar multiplication is not
injective --- a principle known in algebra as the \emph{permanence of identity}.
Properties of this kind are of fundamental practical importance to the
logicians and computer scientists who design and implement computerized proof
assistants like Lean and Coq, as they enable the formal reductions of
equational problems that make type checking tractable.

As type theories have become increasingly more sophisticated, it has become
more and more difficult to establish the useful properties of their free models
that enable effective implementation. These obstructions have facilitated a
fruitful return to foundational work in type theory, which has taken on a more
geometrical flavor than ever before. Here we expose a modern way to prove a
highly non-trivial injectivity law for free models of Martin-L\"of type theory,
paying special attention to the ways that contemporary methods in type theory
have been influenced by three important ideas of the Grothendieck school: the
\emph{relative point of view}, the language of \emph{universes}, and the
\emph{recollement} of generalized spaces.
\end{abstract}

\nocite{grothendieck:rs}
\nocite{artin:2011}

\paragraph{Comment and acknowledgment}

This paper is an interpretation of the ideas of
\citet{awodey:2018:natural-models,bocquet-kaposi-sattler:2021,coquand:2019,fiore:2002,fiore:2012,newstead:2018,rijke-shulman-spitters:2020,shulman:blog:scones-logical-relations,shulman:2015,sterling:2021:thesis,uemura:2021:thesis,uemura:2022:coh}
as well as several other cited authors; the results described in this paper are
not new, but their explanation might be. In addition to the cited authors, I am
also greatly indebted to Mathieu Anel, Carlo Angiuli, Lars Birkedal, Daniel
Gratzer, and Robert Harper for years of enlightening conversations on these
topics. I thank also Chris Gossack for their helpful comments and suggestions.

This work was funded by the European Union under the Marie Sk\l{}odowska-Curie
Actions Postdoctoral Fellowship project
\href{https://cordis.europa.eu/project/id/101065303}{\emph{TypeSynth: synthetic
    methods in program verification}}. Views and opinions expressed are however
those of the authors only and do not necessarily reflect those of the European
Union or the European Commission. Neither the European Union nor the granting
authority can be held responsible for them.

\bigskip

\section{Introduction}

The purpose of this paper is to explain several ways in which the Grothendieck
school has influenced theoretical computer science, focusing on the
subdiscipline of \emph{type theory} and the study of its free models.

\subsection{Type theory and the \texorpdfstring{\DefEmph{relative point of view}}{relative point of view}}\label{sec:relative-pov}

Type theory is, of course, the study of Types; but much like other important
scientific and philosophical categories such as Space and Number, there is not
a single definition of what a Type is.  Although the field of type theory is
often said to have been born with Russell's
investigations~\citep{russell:1937,russell:1908} into a syntactic way to avoid the
eponymous ``paradox'', it must be said that type theory today has very little
in common with this early line of research. Type theory in the sense studied by
professionals is rather aimed to provide both informal and formal mathematical
language to speak of objects and structures varying ``continuously'' over a
base --- in other words, to define the mathematical foundations to
operationalize Grothendieck's \DefEmph{relative point of view}.

\begin{quote}
  The \DefEmph{relative point of view} states that instead of studying (\eg)
  schemes $X$ in the absolute, we should always study \DefEmph{relative} schemes
  $X\in \Kwd{Sch}/B$ for an arbitrary base $B$.
\end{quote}

Category theory implements the relative point of view by means of
\DefEmph{fibrations}; but this language is greatly obfuscated in
comparison to the simplicity of working with non-relative objects. The goal of
type theory is to reconcile the expressivity of the relative point of view with
the simplicity of the global point of view, by providing a language that makes
movement between different fibers (base change) seamless. Because type theory
is built up from very simple and abstract axiomatics, many categories of
interest possess \emph{type theoretic internal languages} which provide
streamlined accounts of relative objects~\citep{maietti:2005}.

\begin{example}[Relative schemes, type theoretically]
  We recall that a \emph{relative scheme} over a scheme $B$ is conventionally
  defined to be a morphism $\Mor{E}{B}$ in the category of schemes. In
  contrast, the type theoretic viewpoint turns the display of $E$ over $B$ on
  its side: in the type theoretic internal language of the (gros) Zariski
  topos~\citep{blechschmidt:2017}, a ``scheme'' $B$ is nothing more than a type
  satisfying certain conditions and a relative scheme over $B$ is nothing more
  than a scheme $E\prn{x}$ varying in a parameter $x:B$. The constraints of type
  theoretic language \emph{automatically} ensure that all the $E\prn{x}$ can be
  glued together to form a single scheme $\Sum{x:B}{E\prn{x}}$ and moreover
  that the projection $\Mor{\Sum{x:B}{E\prn{x}}}{B}$ is in fact a genuine
  morphism of schemes.  In this way, type theoretic language more directly
  captures the base intuitions of the relative point of view while minimizing
  bureaucratic overhead.
\end{example}

\NewDocumentCommand\ESP{}{\Kwd{Esp}}

\subsection{Universes in type theory and category theory}

Type theory replaces the display of relative objects $\Mor{E}{B}$ with
\emph{families} of objects $E\prn{x}$ varying in a formal parameter $x:B$.
This is achieved by postulating an imaginary ``object of all objects'' or a
\DefEmph{universal object} such that $B$-indexed families of objects can be
phrased in terms of functions \emph{into} the universal object. To start with,
this idea of postulating an imaginary universal object seems quite dangerous; for
instance, if types are interpreted as sets then this postulate seems to imply a
``set of all sets'' in which indexed-families of sets can be valued. More
concerningly, if types are interpreted as (\eg) topological spaces, to
postulate a universal object seems to imply a ``topological space of all
topological spaces'', which makes even less sense than a set of all sets.

It is a fundamental result of the field of type theory, however, that the
extension of a given theory by a universal object in this sense is
\DefEmph{conservative}.

\begin{theorem}[\citet{lumsdaine-warren:2015,awodey:2018:natural-models}]\label{thm:local-universes}
  Let $\CCat$ be a category with a class of morphisms
  $\mathcal{D}$ stable under pullback. Then the Yoneda embedding
  $\EmbMor[\Yo]{\CCat}{\Psh{\CCat}}$ of $\CCat$ into the (larger) category of
  presheaves on $\CCat$ has a \DefEmph{universal family}
  $\Mor[\Proj[\mathcal{D}]]{\EL[\mathcal{D}]}{\TP[\mathcal{D}]}$ such that
  every $\Mor[p]{E}{B}\in\mathcal{D}$ arises from it by pullback:
  \[
    \DiagramSquare{
      nw/style = pullback,
      nw = \Yo{E},
      sw = \Yo{B},
      ne = \EL[\mathcal{D}],
      se = \TP[\mathcal{D}],
      east = \Proj[\mathcal{D}],
      west = \Yo{p},
      south = \chi\Sub{p},
      north = \bar\chi\Sub{P},
      south/style = {exists,->},
      north/style = {exists,->},
    }
  \]

  Moreover, every fiber of $\EL[\mathcal{D}]$ over a representable is
  represented by an element of $\mathcal{D}$.
\end{theorem}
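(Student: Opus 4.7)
The plan is to follow the classical \emph{local universes} (or \emph{natural models}) construction, which is at bottom a disguised application of the Yoneda lemma. First I would define the two presheaves by hand. An element of $\TP[\mathcal{D}]\prn{\Gamma}$ is a pair $\prn{p,\sigma}$ consisting of a morphism $p : E \to B$ in $\mathcal{D}$ together with a classifying map $\sigma : \Gamma \to B$ in $\CCat$, with reindexing along $\phi : \Delta \to \Gamma$ acting by precomposition of $\sigma$ with $\phi$. An element of $\EL[\mathcal{D}]\prn{\Gamma}$ is a triple $\prn{p,\sigma,\tau}$ augmenting the above with a section $\tau : \Gamma \to E$ satisfying $p \circ \tau = \sigma$, and the universal projection $\Proj[\mathcal{D}]$ forgets $\tau$.

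Next, for any $p : E \to B$ in $\mathcal{D}$, I would obtain the classifier $\chi_{p} : \Yo{B} \to \TP[\mathcal{D}]$ by pointing, via Yoneda, at the tautological element $\prn{p, 1_B} \in \TP[\mathcal{D}]\prn{B}$; concretely $\chi_p$ sends $f : \Gamma \to B$ to $\prn{p,f}$. In the same way $\bar\chi_p : \Yo{E} \to \EL[\mathcal{D}]$ corresponds to $\prn{p, p, 1_E} \in \EL[\mathcal{D}]\prn{E}$. To verify the square is cartesian, I would compute a $\Gamma$-element of the pullback $\Yo{B} \times_{\TP[\mathcal{D}]} \EL[\mathcal{D}]$: it consists of a pair $\prn{f,\prn{p',\sigma',\tau'}}$ subject to $\prn{p,f} = \prn{p',\sigma'}$, which forces $p' = p$ and $\sigma' = f$, leaving only a morphism $\tau : \Gamma \to E$ with $p \circ \tau = f$ --- exactly a $\Gamma$-element of $\Yo{E}$ together with its image in $\Yo{B}$.

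For the second clause, I would invoke the Yoneda lemma once more: a map $\chi : \Yo{B} \to \TP[\mathcal{D}]$ corresponds to an element $\prn{q : F \to C,\ \sigma : B \to C} \in \TP[\mathcal{D}]\prn{B}$. Because $\mathcal{D}$ is stable under pullback in $\CCat$, the pullback $\sigma^{\ast} q : B \times_C F \to B$ exists in $\CCat$ and itself lies in $\mathcal{D}$; by the first clause, its Yoneda image represents the fiber of $\EL[\mathcal{D}]$ over $\chi$.

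I do not anticipate a serious conceptual obstacle: the whole argument is a choreographed sequence of Yoneda applications, and no cleverness is required. The one delicate point is size --- if $\mathcal{D}$ is a proper class, then $\TP[\mathcal{D}]$ and $\EL[\mathcal{D}]$ must be taken to be valued in a universe larger than the one fixing $\Psh{\CCat}$, but this is cosmetic bookkeeping and does not affect the categorical content.
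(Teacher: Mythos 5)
The paper does not prove this theorem---it is cited to Lumsdaine--Warren and Awodey and then immediately discussed, so there is no in-paper argument to compare against. Your reconstruction is precisely the \emph{local universes} construction from the cited Lumsdaine--Warren paper: $\TP[\mathcal{D}]$ and $\EL[\mathcal{D}]$ are presheaves of pairs (respectively triples) with reindexing by precomposition only, which sidesteps the strictness problem that would arise from trying to reindex display maps by chosen pullbacks. The verification that the square for $\chi_p$ is cartesian is correct.

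Two small remarks. First, in the second clause you write that the fiber is representable ``by the first clause,'' but the first clause as literally stated only treats the canonical classifier $\chi_p \leftrightarrow (p, 1_B)$, not an arbitrary $\chi: \Yo{B} \to \TP[\mathcal{D}]$. The gap is cosmetic: for $\chi \leftrightarrow (q: F \to C,\ \sigma: B \to C)$ one either checks directly that a $\Gamma$-element of $\Yo{B} \times_{\TP[\mathcal{D}]} \EL[\mathcal{D}]$ is exactly a pair $(f: \Gamma \to B,\ \tau: \Gamma \to F)$ with $q\tau = \sigma f$, which is a cone over $B \to C \leftarrow F$, or one factors $\chi = \chi_q \circ \Yo{\sigma}$ (since reindexing $(q, 1_C)$ along $\sigma$ yields $(q,\sigma)$) and pastes the resulting pullback squares using that $\Yo$ preserves limits. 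Second, your proof silently uses the strong reading of ``stable under pullback''---namely that pullbacks of $\mathcal{D}$-morphisms along arbitrary maps of $\CCat$ \emph{exist} and lie in $\mathcal{D}$ (i.e.\ $\mathcal{D}$-maps are carrable); this is the intended reading, consistent with the paper's footnote to \cref{def:universe}, and without it the second clause would fail. The observation about size is correct and is exactly why the theorem's statement says ``larger''; in the natural-model setting the base category is essentially small so the issue evaporates.
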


The import of the fundamental result above is that relative objects qua
morphisms $\Mor[p]{E}{B}$ in $\mathcal{D}$ can just as well be manipulated in
terms of their characteristic morphisms
$\Mor[\chi\Sub{p}]{\Yo{B}}{\TP[\mathcal{D}]}$ into the universal object.  It is
in this sense that type theory turns the display of relative objects ``on its
side''; note that the assumptions of \cref{thm:local-universes} are extremely
light and are easily accommodated in many scenarios of interest, as we see
below.

\begin{example}
  The following are examples of categories $\CCat$ equipped with a class of
  maps $\mathcal{D}$ satisfying the assumptions of \cref{thm:local-universes}:
  \begin{enumerate}

    \item The category $\SET$, $\ESP$, or $\Kwd{Sch}$ equipped with the
          class of all maps.

    \item The category $\ESP$ equipped with the class of fiber bundles, or of
          trivial bundles, \etc.

    \item The category of simplicial sets equipped with the class of Kan
          fibrations.

  \end{enumerate}
\end{example}

\subsubsection{Strict base change via universal objects}

The practical advantages of viewing relative objects in terms of morphisms into
a universal object can be articulated simply: whereas base change of
$\Mor[p]{E}{B}$ along $\Mor[b]{C}{B}$ must be implemented by pullback, the base
change of the characteristic map is given more simply by precomposition
$\Mor[\chi_p\circ \Yo{b}]{\Yo{C}}{\TP[\mathcal{D}]}$. The presentation in terms
of precomposition is simpler to work with because it is \emph{strictly
  associative and unital} in relation to base changes.

It is the strictness of base change qua precomposition that allows us to
directly speak of the fibers of a parameterized object $\prn{x:B}\mapsto
  E\prn{x}$, since for any $f : C \to B$ and $g : D\to C$ the notation
$E\prn{f\prn{gx}}$ cannot distinguish between ``first composing $f$ with $g$
and then doing base change'' and ``doing base change along $f$ and then base
change along $g$''. When base change is implemented by pullback, these two ways
to restrict $\Mor{E}{B}$ to $D$ are distinct but linked by a canonical
isomorphism. The strength of type theory is to completely avoid the need to
manipulate such canonical isomorphisms without sacrificing rigor.

\subsubsection{Grothendieck's universes}

As the terminology suggests, there is a great deal of similarity between the
idea of universal objects and Grothendieck's universes, which he famously
employed in SGA~4 to deal rigorously with the size issues that can quickly
arise when using category theory to organize mathematics~\citep{sga:4}. Indeed,
a given Grothendieck universe is a universal object for the class of maps in
$\SET$ whose fibers have cardinality strictly lower than a given strongly
inaccessible cardinal.

Grothendieck's universes were defined in terms of set theory and the
$\in$-relation, but subsequent developments by several authors (including
B\'enabou, Martin-L\"of, Hofmann, Streicher, and others) has led to a more
structural perspective on universe objects that is amenable to formalization in
an arbitrary category.
The most influential input has been that of Jean B\'enabou who had introduced
already in his 1971 lectures the notion of a \emph{universe in a
  topos}~\citep[\S6]{benabou:1973}, which is essentially a \emph{full internal
  subcategory} of the topos satisfying certain closure conditions, later
interpreted and developed substantially further by \citet{streicher:2005} and
several other authors.

\begin{remark}
  Grothendieck seems to have been unaware that \citet{zermelo:1930} had already considered both his notion of universe and his \emph{universe axiom}, as pointed out by \citet{hamkins:2022:authorship-of-grothendieck-universes}; for this reason, it may be most fair to refer to ``Zermelo--Grothendieck universes''. It seems most likely that Per Martin-L\"of's universes were conceived independently of Grothendieck's and with different motivation --- as one may deduce from the fact that the first version of Martin-L\"of's universe~\citep{martin-lof:1971} was plagued by the very antinomy that Grothendieck's universes aimed to avoid. Subsequent developments in the categorical--algebraic understanding of Martin-L\"of's type theoretic universes were, however, deeply influenced by the work of Grothendieck and B\'enabou.
\end{remark}

\subsubsection{Universes in a category}

\begin{definition}\label{def:universe}
  A \DefEmph{universe} $\SS$ in a category $\CCat$ is given by a single
  carrable morphism $\Mor[\Proj[\SS]]{\EL[\SS]}{\TP[\SS]}$ called its
  \DefEmph{generic family}.\footnote{A carrable morphism is one along which all
    pullbacks exist.} For a morphism $\Mor[f]{E}{B}$ in $\CCat$, we will write
  $f\in\SS$ or ``$f$ is \DefEmph{classified by} $\SS$'' to mean that $f$ arises
  by pullback from $\Mor[\Proj[\SS]]{\EL[\SS]}{\TP[\SS]}$, \ie there exists a
  cartesian map $\Mor{f}{\Proj[\SS]}$ in the fundamental (codomain) fibration
  $\FFib{\CCat}$.
\end{definition}

\begin{remark}
  A universe $\SS$ in a locally cartesian closed category determines an internal
  category in $\CCat$, whose object of objects is $\TP[\SS]$ itself and whose
  object of morphisms is the exponential
  $\pi_1^*\EL[\SS]\Rightarrow\pi_2^*\EL[\SS]$ over $\TP[\SS]\times\TP[\SS]$.
  The externalization of the internal category $\SS$ is then a full
  subfibration of the fundamental fibration $\FFib{\CCat} = \prn{\FibMor[\Con{cod}]{\CCat^\to}{\CCat}}$. For each $I\in
    \CCat$, the fiber of this full subfibration is given by morphisms
  $\Mor[e]{E}{I}$ that are \DefEmph{classified} by $\SS$. In other words
  $\Mor[e]{E}{I}$ lies in the full subfibration when there exists a cartesian
  morphism $\Mor{e}{\Proj[\SS]}$.
\end{remark}

\begin{definition}[Contextual class of objects]\label{def:contextual-class}
  Let $\SS$ be a universe in a category $\CCat$ with a terminal object; a class of
  objects $\mathcal{X}\subseteq\CCat$ is called \DefEmph{$\SS$-contextual} when it
  satisfies the following closure conditions:
  \begin{enumerate}
    \item the terminal object is contained in $\mathcal{X}$;
    \item if $C\in\CCat$ is contained in $\mathcal{X}$ and $\Mor[p]{A}{C}$ lies in $\SS$, then $A$ is in $\mathcal{X}$.
  \end{enumerate}
\end{definition}

\begin{definition}[Contextual objects]\label{def:contextual-objects}
  Let $\SS$ be a universe in a category $\CCat$ with a terminal object; an object
  of $\CCat$ is called \DefEmph{$\SS$-contextual} when it is contained in the
  smallest $\SS$-contextual class in the sense of \cref{def:contextual-class}.
\end{definition}

\subsubsection{Grothendieck--B\'enabou universes inside a topos}

If $\ECat$ is an elementary topos with a natural numbers object $N$, following
\citet{benabou:1973,streicher:2005} we can define a notion of
\DefEmph{Grothendieck--B\'enabou universe} in $\ECat$ that restricts to the
familiar notion of Grothendieck universe when $\ECat=\SET$.

\begin{definition}\label{def:gb-universe}
  A universe $\SS$ in $\ECat$ is called a \DefEmph{Grothendieck--B\'enabou
    universe} when it satisfies the following conditions:
  \begin{enumerate}

    \item \DefEmph{dependent sums} and \DefEmph{dependent products}: if both
          $\Mor[e]{E}{B}$ and $\Mor[b]{B}{C}$ are classified by $\SS$, then both
          $b_!e$ and $b_*e$ are classified by $\SS$, where $b_!\dashv b^*\dashv
            b_*$ is the base change adjoint triple.
    \item \DefEmph{propositional resizing}: every monomorphism of $\ECat$ is
          classified by $\SS$.
    \item \DefEmph{descent}: for any $g$ and $f$ such that
          $g$ is classified by $\SS$, if there is a cartesian epimorphism
          $\Mor{g}{f}$ in $\FFib{\ECat}$, then $f$ is classified by $\SS$.
    \item \DefEmph{subobject classifier}: $\Mor{\Omega}{\ObjTerm{\ECat}}$ is
          classified by $\SS$.
    \item \DefEmph{natural numbers object}: $\Mor{N}{\ObjTerm{\ECat}}$ is
          classified by $\SS$.
  \end{enumerate}
\end{definition}

\subsection{Abstract and concrete syntax of type theory}\label{sec:abstract-syntax}

So far we have discussed type theory as a convenient notation for working with
relative objects in various categories. Most users of type theory will need \emph{no more} than this informal perspective on type theory. In
order to more thoroughly justify these applications, however, type theorists
have rendered the interpretation of type theoretical notations in various
categories as part of a more general discourse on the \emph{syntax and
  semantics of type theory}~\citep{hofmann:1997}.

There are many ways to think about what a model of type theory ought to be, but
most of them take the form of categories $\CCat$ equipped with additional
structure in $\Psh{\CCat}$, axiomatizing the scenario of
\cref{thm:local-universes}. The syntax of type theory can be studied both
abstractly and concretely; the \DefEmph{concrete syntax} of type theory can be
defined in terms of a (very complex) formal grammar, but it is just as well to
define the \DefEmph{abstract syntax} of a type theory to be given by the
initial object in the category of models of that type theory. That abstract
syntax can in fact be constructed as a quotient of concrete syntax is a
consequence of the results of \citet{cartmell:1978}, later tackled in more
specificity by \citet{streicher:1991} and \citet{uemura:2021:thesis}.  Renewed
interest during the past decade~\citep{voevodsky:2016:templeton} has led to
several creative re-treadings of the ground first paved by
\citeauthor{cartmell:1978}.

\subsubsection{Computerized proof assistants}

One motivation to study the syntax of type theory is to facilitate its
\emph{implemention} in computerized proof assistants; these are tools into
which human beings can enter formal type theoretical expressions representing
mathematical objects and proofs and have their validity automatically checked.
In addition to assuring the validity of constructions and proofs, proof
assistants also assist with book-keeping tasks --- such as displaying what it
remains to show at any given point in an incomplete proof.
Dependent type theoretic proof assistants such as Coq~\citep{coq:reference-manual},
Lean~\citep{lean:2015}, and Agda~\citep{norell:2009} are now routinely used to
develop and verify the correctness of both old and new
mathematics~\citep{gonthier:2008,gonthier-et-al:2013,favonia-finster-licata-lumsdaine:2016,scholze:2022:liquid-tensor},
and there are now very extensive and mature libraries of mathematical results
available~\citep{mathlib:2020,mahboubi-tassi:2020,unimath,agda-unimath,type-topology,1lab}.

\subsubsection{External \texorpdfstring{\vs}{vs.} internal equality}

Type theory is a somewhat unique language, in that it contains two different
kinds of equality: external and internal. Type theory's \DefEmph{external
  equality} is simply the congruence under which assertions of the form $u:A$ are
stable; in particular, when a type $A$ is \emph{externally equal} to a type
$B$, written $A\equiv B$, we may assert $u:A$ if and only if we may assert
$u:B$. True to its name, external equality cannot be assumed or refuted inside
type theory; in other words, it is part of the \emph{grammar} rather than the
\emph{vocabulary} of type theory.\footnote{External equality in our sense is
  usually referred to as \DefEmph{judgmental equality} or \DefEmph{definitional
    equality}; both the traditional terminologies carry some philosophical force
  and subtlety that we do not necessarily intend, so we prefer our more neutral
  terminology. We refer the reader to
  \citet{martin-lof:1975,martin-lof:1996,martin-lof:1987} for further discussion
  of the philosophical aspects.} In a \emph{model of type theory} (including the
initial model), external equality is interpreted as ordinary mathematical
equality between elements of the model.

The second kind of equality in type theory is \DefEmph{internal equality},
which is part of the vocabulary of type theory. For every type $A$ and elements
$u,v:A$ there is a third type $u=_Av$ classifying \emph{identifications} of
$u$ and $v$ as elements of $A$. Internal equality is meant to correspond to
ordinary mathematical equality; so, for instance, if the notion of a
\emph{group} is formalized in type theory, the unit laws are stated in terms of
internal equality. Here we shall not dwell further on internal
equality, in spite of the fact that it has been the main topic of type
theoretic research for more than two
decades~\citep{hofmann-streicher:1998,voevodsky:2006,awodey-warren:2009,hottbook}.

\subsubsection{Decidability of external equality}\label{sec:decidability}

Although there are a variety of possible designs for computerized proof
assistants based on type theory, experience has verified that the most practical
approach is to ensure that the relation of external equality can be
\emph{automatically} checked by the computer without any intervention by the
user. This goal, however, places severe constraints on what kinds of equations
can be part of external equality --- as it is easy for external equality to become
\emph{undecidable} if enough laws are
added~\citep{castellan-clairambault-dybjer:2017}. For this reason, type theorists
have accumulated a variety of design principles that tend to ensure effective
decidability --- though it remains very difficult to establish decidability in
any specific case.

\subsubsection{Running example: injectivity of type constructors}\label{sec:injectivity}

In addition to decidability, one of the key lemmas servicing the computerized
implementation of type theory is the \DefEmph{injectivity of type
  constructors}, which is what allows an algorithm to universally decompose the
task of checking an equation like $A\Rightarrow B \equiv A' \Rightarrow B'$ to
the task of checking both $A\equiv A'$ and $B\equiv B'$: the injectivity
property states that the latter judgments are the \emph{only} way that the two
function spaces could be equal. Note that injectivity in this sense does
\emph{not} imply that the $\prn{\Rightarrow}$ operator is a monomorphism in an
arbitrary model of type theory (indeed, doing so would rule out most semantic
models of interest!). Nonetheless, the injectivity property can be stated in
terms of $\prn{\Rightarrow}$ being a monomorphism in the \emph{initial model}
of type theory. In fact, we shall use this injectivity law as our running
example throughout the rest of this paper.

\subsection{Normalization and injectivity, for free monoids}\label{sec:monoid-normalization}

Type theorists have found that the most reliable way to establish both
decidability of external equality~(\cref{sec:decidability}) and injectivity of
type constructors~(\cref{sec:injectivity}) is to devise a \emph{concrete}
characterization of equivalence classes of expressions in terms of
\DefEmph{normal forms}, equipping the quotient of concrete syntax by external
equality with a more canonical section that is amenable to effective
computation. This process is referred to as \DefEmph{normalization}.

Normalization is better understood first in a simpler context; to that end, we
consider the theory of monoids below and a similar injectivity law that we
might wish to establish for \emph{free} monoids.

\subsubsection{The theory of monoids}

The algebraic theory of monoids
subjects a nullary operation $\epsilon$ and a binary operation $\mu$ to
the following equational laws:
\begin{align*}
  x     & \vdash \mu\prn{\epsilon,x} \equiv x                           \\
  x     & \vdash \mu\prn{x,\epsilon} \equiv x                           \\
  x,y,z & \vdash \mu\prn{\mu\prn{x,y},z} \equiv \mu\prn{x,\mu\prn{y,z}}
\end{align*}

\NewDocumentCommand\Expr{m}{\Con{Expr}\,A}
\NewDocumentCommand\FreeMonoid{}{\Con{F}}

\subsubsection{Constructing the free monoid on a set}

Given a set $A$, we may construct the \emph{free monoid} on $A$ by
taking a quotient of the well-formed expressions in the theory of monoids with
$\vrt{A}$-many additional constants. First we may inductively define the set
$\Expr{A}$ of expressions by the generators:
\begin{mathpar}
  \ebrule{
    \hypo{
      a\in A
    }
    \infer1{
      \eta\prn{a} \in\Expr{A}
    }
  }
  \and
  \ebrule{
    \infer0{
      \epsilon\in\Expr{A}
    }
  }
  \and
  \ebrule{
    \hypo{u\in\Expr{A}}
    \hypo{v\in\Expr{A}}
    \infer2{
      \mu\prn{u,v}\in\Expr{A}
    }
  }
\end{mathpar}

Next we inductively define
$\prn{\sim}\subseteq\Expr{A}\times\Expr{A}$ to be the smallest congruence for the operations above closed under the following rules:
\begin{mathpar}
  \ebrule{
    \hypo{
      u\in\Expr{A}
    }
    \infer1{
      \mu\prn{\epsilon,u} \sim u
    }
  }
  \and
  \ebrule{
    \hypo{
      u\in\Expr{A}
    }
    \infer1{
      \mu\prn{u,\epsilon}\sim u
    }
  }
  \and
  \ebrule{
    \hypo{u\in \Expr{A}}
    \hypo{v\in \Expr{A}}
    \hypo{w\in \Expr{A}}
    \infer3{
      \mu\prn{\mu\prn{u,v},w} \sim \mu\prn{u,\mu\prn{v,w}}
    }
  }
\end{mathpar}

Then the carrier set of the free monoid on $A$ can be expressed as the quotient
$\FreeMonoid{A} = \Expr{A}/{\sim}$. Because $\prn{\sim}$ is a congruence, there
is an evident monoid structure on $\FreeMonoid{A}$ and it is simple to show that
this monoid structure is universal in relation to monoids equipped whose
carriers lie underneath $A$.

\subsubsection{Injectivity of scalar multiplication in the free monoid}\label{sec:free-monoid-nf}

There is a ``scalar multiplication'' $A\times \FreeMonoid{A}\to \FreeMonoid{A}$
function on the free monoid sending $\prn{a,u}$ to $\mu\prn{\eta\prn{a},u}$.
A monoid-theoretic analogue to our running example (\cref{sec:injectivity})
would be to prove that the scalar multiplication function on free monoids is
injective. With our presentation of $\FreeMonoid{A}$ as a quotient of $\Expr{A}$,
however, it is very hard to see that this is necessariy the case --- as we do
not have any kind of a handle on equivalence classes.

The solution is to find an alternative presentation of the free monoid that can
be defined inductively without any quotienting; and such an alternative
presentation is referred to a \DefEmph{normal forms presentation}. In the case
of free monoids, there is a trivial candidate for the normal forms
presentation: the set of \emph{lists} $A^\star$ of elements of $A$, which can
be defined inductively as follows:
\begin{mathpar}
  \ebrule{
    \infer0{
      \brk{}\in A^\star
    }
  }
  \and
  \ebrule{
    \hypo{a\in A}
    \hypo{m\in A^\star}
    \infer2{
      a\lhd m \in A^\star
    }
  }
\end{mathpar}

In other words, $A^\star$ is the initial algebra for the polynomial endofunctor
$F\prn{X} = \ObjTerm{} + A\times X$. There is no need to quotient $A^\star$;
the monoid operations are defined by the following equations, using the
induction principle of $A^\star$:
\begin{align*}
  \eta\Sub{A^\star}\prn{a}        & = a\lhd\brk{}
  \\
  \epsilon\Sub{A^\star}           & = \brk{}
  \\
  \mu\Sub{A^\star}\prn{\brk{},n}  & = n
  \\
  \mu\Sub{A^\star}\prn{a\lhd m,n} & = a\lhd \mu\Sub{A^\star}\prn{m,n}
\end{align*}

It is easy to show that $A^\star$ satisfies the equational laws of the monoid
theory, again by induction on lists. But more importantly, it is possible to
deduce immediately that the scalar multiplication on $A^\star$ is injective.

\begin{theorem}\label{thm:cons-injective}
  The scalar multiplication function $A\times A^\star\to A^\star$ sending each
  $\prn{a,m}$ to $\mu\Sub{A^\star}\prn{\eta\Sub{A^\star},m}$ is injective.
\end{theorem}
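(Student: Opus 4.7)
The plan is to reduce scalar multiplication to the cons constructor $\lhd$ and then exploit the fact that, unlike in the quotient presentation $\FreeMonoid{A}$, the carrier $A^\star$ is a bona fide initial algebra for $F\prn{X} = \ObjTerm{} + A\times X$ whose constructors are therefore injective with disjoint images. First I would observe that the scalar multiplication $\prn{a,m}\mapsto \mu\Sub{A^\star}\prn{\eta\Sub{A^\star}\prn{a},m}$ simplifies by the defining equations to $a\lhd m$: unfolding $\eta\Sub{A^\star}\prn{a} = a\lhd\brk{}$ and applying the cons clause $\mu\Sub{A^\star}\prn{a\lhd\brk{},m} = a\lhd\mu\Sub{A^\star}\prn{\brk{},m} = a\lhd m$. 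So injectivity of scalar multiplication reduces to injectivity of the cons constructor.

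Next I would use the recursion principle of $A^\star$ to define two projection functions that invert cons componentwise. Concretely, define $\Con{head} : A^\star \to \ObjTerm{} + A$ by the clauses $\Con{head}\prn{\brk{}} = \Con{inl}\prn{\ast}$ and $\Con{head}\prn{a\lhd m} = \Con{inr}\prn{a}$, and define $\Con{tail} : A^\star \to A^\star$ by $\Con{tail}\prn{\brk{}} = \brk{}$ and $\Con{tail}\prn{a\lhd m} = m$. Both are well-defined immediately from the universal property of the initial $F$-algebra.

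Given $\prn{a,m}$ and $\prn{a',m'}$ with $a\lhd m = a'\lhd m'$, applying $\Con{head}$ yields $\Con{inr}\prn{a} = \Con{inr}\prn{a'}$ in $\ObjTerm{} + A$, from which $a = a'$ by injectivity of coproduct injections; applying $\Con{tail}$ yields $m = m'$ directly. Hence scalar multiplication is injective.

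There is no real obstacle here — the whole point of the normal-forms presentation is that it trades the intractable quotient $\Expr{A}/{\sim}$ for a freely generated inductive set, and for freely generated inductive sets structural properties like injectivity and disjointness of constructors come essentially for free from the recursion principle. The substantive content of the story is not this theorem in isolation but rather the assertion, proved separately, that $A^\star$ is isomorphic as a monoid to $\FreeMonoid{A}$, so that the injectivity established here transfers to the free monoid.
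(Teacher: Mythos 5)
Your proof is correct, but it takes a different route than the paper. The paper invokes Lambek's lemma: since $A^\star$ is the initial algebra for $F(X) = 1 + A\times X$, the structure map $\alpha : 1 + A\times A^\star \to A^\star$ is an isomorphism, so $(\lhd)$ is (up to iso) the right coproduct inclusion, which is monic because coproducts of sets are disjoint. You instead construct explicit retractions $\Con{head}$ and $\Con{tail}$ by the recursion principle, extract the components, and appeal to injectivity of $\Con{inr}$. These are essentially the two classical ways to prove constructor injectivity for inductive types, and they come down to the same underlying fact; indeed the standard proof of Lambek's lemma is itself a recursion-principle argument resembling yours. The paper's route is more compact and immediately categorical, fitting the paper's emphasis on invariant, universal-property-driven arguments; your route is more elementary and computational, and arguably makes the mechanism of the proof more transparent to a reader unfamiliar with Lambek's lemma, since it exhibits the inverse data concretely. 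One small point worth tightening: after concluding $\Con{inr}(a) = \Con{inr}(a')$, you cite ``injectivity of coproduct injections,'' which is exactly the same set-level fact the paper uses --- so neither argument avoids it; you have merely deferred it to the coproduct $1 + A$ rather than to $1 + A\times A^\star$. Your closing remark about the isomorphism with $\FreeMonoid{A}$ is a good observation, though the paper shows that a retraction suffices, so a full isomorphism is not actually needed.
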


\begin{proof}
  Unfolding definitions, the scalar multiplication function is exactly the
  $\prn{\lhd}$ operation on lists; writing $\Mor[\alpha]{\ObjTerm{}+A\times
      A^\star}{A^\star}$ for structure map of $A^\star$ as an initial $F$-algebra,
  we recall that $\alpha$ is an isomorphism by Lambek's lemma, and so the
  constructor $\prn{\lhd}$ is the right coproduct inclusion, which is injective
  as coproducts of sets are disjoint.
\end{proof}

Therefore to deduce from \cref{thm:cons-injective} that the scalar
multiplication function on $\FreeMonoid{A}$ is injective, it suffices to construct
an isomorphism of monoids under $A$ between $\FreeMonoid{A}$ and $A^\star$; in fact,
it is even enough to exhibit $\FreeMonoid{A}$ as a retract of $A^\star$, as depicted
below where the horizontal arrow is the unique homomorphism of monoids under
$A$ determined by the universal property of $F\Sub{\mathbb{T}}$:
\[
  \begin{tikzpicture}[diagram]
    \node (nw) {$\FreeMonoid{A}$};
    \node[right = of nw] (ne) {$A^\star$};
    \node[below = of ne] (se) {$\FreeMonoid{A}$};
    \draw[>->] (nw) to node[above] {$S$} (ne);
    \draw[double] (nw) to (se);
    \draw[->>,exists] (ne) to node[right] {$P$} (se);
  \end{tikzpicture}
\]

\begin{corollary}\label{cor:scalar-multiplication-injective}
  The scalar multiplication function on the free monoid $\FreeMonoid{A}$ is injective.
\end{corollary}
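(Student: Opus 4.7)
The plan is to realize the retract diagram displayed just before the statement and then transport the injectivity of $\prn{\lhd}$ supplied by \cref{thm:cons-injective} back to $\FreeMonoid{A}$ along the retract.

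First I would define $S\colon \FreeMonoid{A}\to A^\star$ to be the unique homomorphism of monoids under $A$ supplied by the universal property of $\FreeMonoid{A}$, using the monoid-under-$A$ structure that $A^\star$ carries by virtue of $\prn{\epsilon\Sub{A^\star},\mu\Sub{A^\star},\eta\Sub{A^\star}}$. Going the other way, I would equip $\FreeMonoid{A}$ with the structure of an $F$-algebra for $F\prn{X}=\ObjTerm{}+A\times X$ by sending the point of $\ObjTerm{}$ to $\epsilon$ and $\prn{a,u}$ to the scalar product $\mu\prn{\eta\prn{a},u}$; then $P\colon A^\star\to\FreeMonoid{A}$ is the unique $F$-algebra homomorphism out of the initial $F$-algebra $A^\star$, computed by $P\prn{\brk{}}=\epsilon$ and $P\prn{a\lhd m} = \mu\prn{\eta\prn{a},P\prn{m}}$.

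Next I would verify that $P$ is itself a homomorphism of monoids under $A$. Preservation of the unit and of each generator $a\in A$ is immediate from the defining equations of $P$ together with the unit laws of $\mu$ in $\FreeMonoid{A}$. Multiplicativity $P\prn{\mu\Sub{A^\star}\prn{m,n}} = \mu\prn{P\prn{m}, P\prn{n}}$ is then proved by induction on $m\in A^\star$; the successor step rests on the associativity of $\mu$ in $\FreeMonoid{A}$. Once $P$ is recognized as a monoid homomorphism under $A$, the composite $P\circ S$ is a monoid endomorphism of $\FreeMonoid{A}$ under $A$ and must coincide with the identity by the uniqueness clause in the universal property of $\FreeMonoid{A}$; this furnishes the desired retract.

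Finally, I would transfer injectivity across the retract. Because $S$ is a monoid homomorphism under $A$, it intertwines the two scalar multiplications in the sense that $S\prn{\mu\prn{\eta\prn{a},u}} = a\lhd S\prn{u}$. Hence any equation $\mu\prn{\eta\prn{a},u} = \mu\prn{\eta\prn{a'},u'}$ in $\FreeMonoid{A}$ pushes forward to $a\lhd S\prn{u} = a'\lhd S\prn{u'}$ in $A^\star$, whereupon \cref{thm:cons-injective} delivers both $a=a'$ and $S\prn{u}=S\prn{u'}$; applying $P$ to the latter identity and invoking $P\circ S=\mathrm{id}$ concludes $u=u'$. I expect the main obstacle to be the associativity-dependent induction showing that $P$ is multiplicative; the remaining steps are direct universal-property computations.
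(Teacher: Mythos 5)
Your proposal is correct and follows essentially the same route as the paper: define $P$ by the same two recursion equations, obtain $P\circ S=\mathrm{id}$ from the universal property of $\FreeMonoid{A}$, and transfer injectivity from $\prn{\lhd}$ back across the retract. The paper simply asserts the retraction property, whereas you supply the missing verification (that $P$ is a monoid homomorphism under $A$, so uniqueness forces $P\circ S=\mathrm{id}$), which is the right way to justify it.
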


\begin{proof}
  We define a retraction $\Mor[P]{A^\star}{\FreeMonoid{A}}$ of the universal
  map $S$, setting $P\brk{} = \epsilon$ and $P\prn{a\lhd m}$ to
  $\mu\prn{\eta\prn{a},P\prn{m}}$.
  Now fix $a,a'\in A$ and $u,u'\in \FreeMonoid{A}$ such that
  $\mu\prn{\eta\prn{a},u} = \mu\prn{\eta\prn{a'},u'}$. Applying the section $S$
  and using the fact that it is a homomorphism, we have $a\lhd S u = a'\lhd S
    u'$; by \cref{thm:cons-injective} we have $a=a'$ and $Su = Su'$. From the
  latter we deduce $P\prn{Su} = P\prn{S'u}$; because $P$ is a retraction of
  $S$, it follows that $u=u'$.
\end{proof}

\section{Free models of type theory and normalization}\label{sec:normalization}

The normalization problem for free monoids that we explored in
\cref{sec:monoid-normalization} is a particularly easy case. Unfortunately,
things become significantly more difficult when we move from simple algebraic
theories to full type theories, where we are trying to characterize the
equivalence classes of \emph{types} by normal forms; the difficulty is roughly
that types and their normal forms do not (a priori) live in the same category,
in contrast to the situation with monoids where both elements and normal forms
are organized into sets.

\subsection{Natural models of type theory}

We have alluded in \cref{sec:abstract-syntax} to the many notions of ``model of
type theory''; here we will consider \DefEmph{natural
  models}~\citep{awodey:2018:natural-models}, a categorical reformulation of
Dybjer's \emph{categories with families}~\citep{dybjer:1996}.

\subsubsection{Representable maps and natural models}

The definition of a natural model involves the concept of
\DefEmph{representable natural transformation}, which was incidentally
introduced by \citet{ega:1} in EGA~1.

\begin{definition}[Relative representability]
  Let $\CCat$ be a full subcategory of $\ECat$; a morphism $\Mor{E}{B}$ of
  $\ECat$ is said to be \DefEmph{relatively representable by an object of
    $\CCat$} when for any $\Mor{\Gamma}{B}$ such that $\Gamma$ lies in $\CCat$,
  the fiber product $E\times_B\Gamma$ lies in $\CCat$, identifying $\CCat$ with
  the essential image of the Yoneda embedding $\EmbMor[\Yo]{\CCat}{\Psh{\CCat}}$.
\end{definition}

\begin{definition}[\citet{awodey:2018:natural-models}]\label{def:nat-mod}
  A \DefEmph{natural model} $\Mod$ is defined to be an essentially small
  category $\CX[\Mod]$ with a terminal object and a natural transformation
  $\Mor[\Proj[\Mod]]{\EL[\Mod]}{\TP[\Mod]}$ in $\Psh{\CX[\Mod]}$ that is
  relatively representable by an object of $\CCat$.\footnote{Some previous
    expositions required $\CX[\Mod]$ to be small; nonetheless, the theory
    develops much more smoothly if we only require essential smallness.}
\end{definition}

Observe that a natural model is really a special kind of \emph{universe}
(\cref{def:universe}) in a category of presheaves.

\begin{exegesis}
  In a natural model $\Mod$, objects of $\CX[\Mod]$ are referred to as
  \emph{contexts} and morphisms are called \emph{substitutions}. When
  $\Gamma\in\CX[\Mod]$ is a context, an element of $\TP[\Mod]\Gamma$ is a
  \emph{type} $A\prn{\gamma}$ that depends on a parameter $\gamma:\Gamma$; the
  representability of $\Proj[\Mod]$ ensures for each type $A$ over $\Gamma$,
  there is a context $\Gamma.A$ called the \DefEmph{context comprehension}
  classifying pairs $\prn{\gamma,a}$ where $\gamma:\Gamma$ and $a$ is an
  element of $A\gamma$.
\end{exegesis}

\begin{definition}[Democratic natural models]\label{def:democratic}
  A natural model $\Mod$ is called \DefEmph{democratic} when every object
  $\Gamma\in\CX[\Mod]$ represents a $\Mod$-contextual object in the sense of
  \cref{def:contextual-objects}.
\end{definition}

\subsubsection{Function spaces on a natural model}

Further structures on a natural model can be imposed; for instance, function
spaces correspond to cartesian squares of the following form in $\JDG[\Mod]$:
\[
  \DiagramSquare{
    nw/style = pullback,
    width = 5.5cm,
    north/style = {exists,->},
    south/style = {exists,->},
    nw = \Sum{A,B:\TP[\Mod]}\Proj[\Mod]\Sup{-1}A\Rightarrow \Proj[\Mod]\Sup{-1}B,
    sw = \TP[\Mod]\times\TP[\Mod],
    ne = \EL[\Mod],
    se = \TP[\Mod],
    east = \Proj[\Mod],
    west = \prn{A,B,f}\mapsto\prn{A,B},
    south = \prn{\Rightarrow\Sub{\Mod}},
    north = \prn{\lambda\Sub{\Mod}},
  }
\]

It can be shown that a natural model can be equipped with function spaces if
and only if the corresponding universe is closed under pushforwards of product
projection maps (this is a restriction of the condition that a
Grothendieck--B\'enabou universe be closed under dependent products).

\subsubsection{The (2,1)-category of natural models}

Natural models and their structured variants (\eg natural models with function
spaces, \etc) all arrange into (2,1)-categories. Here we will not dwell on the
conditions for a morphism between natural models to extend to a morphism of
natural models with function spaces; the interested reader should consult
\citet{uemura:2021:thesis} for more on this.

\begin{definition}[\citet{newstead:2018}]
  Let $\Mod$ and
  $\Modd$ be two natural
  models.  A \DefEmph{pre-morphism} of natural models
  $\Mor[F]{\Mod}{\Modd}$ is given by a functor
  $\Mor[F]{\CX[\Mod]}{\CX[\Modd]}$ preserving the terminal object together with a square
  $\Mor[\Proj[F]]{F_!\Proj[\Mod]}{\Proj[\Modd]}$ in $\JDG[\Modd]$:
  \[
    \begin{tikzpicture}[diagram]
      \SpliceDiagramSquare{
        north/style = {->,exists},
        south/style = {->,exists},
        north = F\Sub{\EL},
        south = F\Sub{\TP},
        nw = F_!\EL[\Mod],
        sw = F_!\TP[\Mod],
        ne = \EL[\Modd],
        se = \TP[\Modd],
        west = F_!\Proj[\Mod],
        east = \Proj[\Modd],
      }
      \node [between = nw and se] {$\Proj[F]$};
    \end{tikzpicture}
  \]
\end{definition}

\begin{notation}
  Given a pre-morphism $\Mor[F]{\Mod}{\Modd}$ and a type
  $\Mor{\Yo{\Gamma}}{\TP[\Mod]}$, we shall write $\Mor[F\Sub{\TP}\cdot
      A]{F_!\Yo{\Gamma}}{\TP[\Modd]}$ for the composite $F\Sub{\TP}\circ F_!A$; we
  impose a similar notation on elements, setting $F\Sub{\EL}\cdot a$ to be
  $F\Sub{\EL}\circ F_!a$.
\end{notation}

\begin{definition}[\citet{newstead:2018}]
  A pre-morphism $\Mor[F]{\Mod}{\Modd}$ is said to be a
  \DefEmph{morphism of natural models} if it \DefEmph{preserves context
    comprehensions} in the sense that for every $\Gamma\in\CX[\Mod]$ and
  $\Mor[A]{\Yo{\Gamma}}{\TP[\Mod]}$, the composite square below is
  cartesian~\citep{newstead:2018}:
  \[
    \begin{tikzpicture}[diagram]
      \SpliceDiagramSquare<l/>{
        nw = F_!\Yo\prn{\Gamma.A},
        sw = F_!\Yo{\Gamma},
        west = F_!\Yo{p_A},
        ne = F_!\EL[\Mod],
        se = F_!\TP[\Mod],
        east = F_!\Proj[\Mod],
        east/node/style = upright desc,
        south = F_!A,
        north = F_!\Con{x}_A,
        width = 2.5cm,
      }
      \SpliceDiagramSquare<r/>{
        glue = west, glue target = l/,
        ne = \EL[\mathrlap{\Modd}],
        se = \TP[\mathrlap{\Modd}],
        east = \Proj[\Mod],
        south = F\Sub{\TP},
        north = F\Sub{\EL},
      }
    \end{tikzpicture}
  \]
\end{definition}

\begin{definition}[\cite{uemura:2021:thesis}]
  Let $\Mor[F,G]{\Mod}{\Modd}$ be two morphisms of natural
  models. An \DefEmph{isomorphism} from $F$ to $G$ is defined to be
  a natural isomorphism $\alpha : F \cong G$ between the underlying functors
  such that for each $\Mor[A]{\Yo{\Gamma}}{\TP[\Mod]}$, the black
  triangles below commute:
  \begin{mathparpagebreakable}
    \begin{tikzpicture}[diagram]
      \node (nw) {$F_!\Yo{\Gamma}$};
      \node (sw) [below = of nw] {$G_!\Yo{\Gamma}$};
      \node (ne) [right = 3cm of nw] {$\TP[\Modd]$};
      \draw[->] (nw) to node [above] {$F\Sub{\TP}\cdot A$} (ne);
      \draw[->] (nw) to (sw);
      \draw[->] (sw) to node [sloped,below] {$G\Sub{\TP}\cdot A$} (ne);
      \node[gray,left = of nw] (nww) {$\Yo{F\Gamma}$};
      \node[gray,left = of sw] (sww) {$\Yo{G\Gamma}$};
      \draw[->,gray] (nww) to node [left] {$\Yo{\alpha_\Gamma}$} (sww);
      \draw[->,gray] (nw) to node [above] {$\cong$} (nww);
      \draw[->,gray] (sww) to node [below] {$\cong$} (sw);
    \end{tikzpicture}
    \\
    \begin{tikzpicture}[diagram]
      \node (nw) {$F_!\prn{\Gamma.A}$};
      \node (sw) [below = of nw] {$G_!\prn{\Gamma.A}$};
      \node (ne) [right = 3cm of nw] {$\EL[\Modd]$};
      \draw[->] (nw) to node [above] {$F\Sub{\EL}\cdot \Con{x}_A$} (ne);
      \draw[->] (nw) to (sw);
      \draw[->] (sw) to node [sloped,below] {$G\Sub{\EL}\cdot \Con{x}_A$} (ne);
      \node[gray,left = 2.5cm of nw] (nww) {$\Yo{F\prn{\Gamma.A}}$};
      \node[gray,left = 2.5cm of sw] (sww) {$\Yo{G\prn{\Gamma.A}}$};
      \draw[->,gray] (nww) to node [left] {$\Yo{\alpha\Sub{\Gamma.A}}$} (sww);
      \draw[->,gray] (nw) to node [above] {$\cong$} (nww);
      \draw[->,gray] (sww) to node [below] {$\cong$} (sw);
    \end{tikzpicture}
  \end{mathparpagebreakable}
\end{definition}

\subsubsection{Free natural models: the abstract syntax of type theory}

What is important for us is that the (2,1)-category of natural models (and its
structured variants) be compactly generated or \emph{presentable} in the sense
of \citet{lurie:2009} and therefore have \emph{free objects}, \ie the
bi-initial natural model with function spaces on some constants, \etc. Note
that the bi-initial natural model of a given type theory is always democratic
in the sense of \cref{def:democratic}.

\subsubsection{From universes to natural
  models}\label{sec:universe-to-natural-model}

Let $\ECat$ be a locally small locally cartesian closed category, and let $\SS$
be a universe in $\ECat$ such that $\ObjTerm{\ECat}\in\SS$. Furthermore let
$\CCat\subseteq\ECat$ be a full subcategory closed under all contextual objects
with respect to $\SS$ in the sense of \cref{def:contextual-objects}. In this
section, we will define a natural model $\Extern[\CCat]{\SS}$ called the
\emph{externalization} of $\SS$ over $\CCat$. We define
$\CX[\Extern[\CCat]{\SS}]$ to be $\CCat$ itself. The inclusion functor
$\EmbMor[I]{\CCat}{\ECat}$ inducing a \emph{nerve}
$\Mor[N\Sub{\CCat}]{\ECat}{\Psh{\CCat}}$ sending each object to its
\DefEmph{functor of $\CCat$-valued points}:
\begin{align*}
  N\Sub{\CCat} & : \Mor{\ECat}{\Psh{\CCat}} \\
  N\Sub{\CCat} & : \Mor|{|->}|{E}{
  \Hom{\ECat}{I-}{E}
  }
\end{align*}

\begin{lemma}\label{lem:nv-preserves-representable-maps}
  Let $\Mor[p]{E}{B}$ be a morphism in $\ECat$ that is relatively representable
  by an object of $\CCat$; then $N\Sub{\CCat}\prn{\Mor[p]{E}{B}}$ is a
  representable natural transformation in $\Psh{\CCat}$.
\end{lemma}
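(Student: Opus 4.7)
The plan is to unfold the definition of ``representable natural transformation'' and then reduce everything to the relative representability hypothesis by using that the nerve $N_{\CCat}$ preserves pullbacks.

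First, I would recall that a natural transformation $\Mor[\phi]{X}{Y}$ in $\Psh{\CCat}$ is \emph{representable} precisely when, for every $\Gamma \in \CCat$ and every map $\Mor{\Yo\Gamma}{Y}$, the pullback $X \times_Y \Yo\Gamma$ is representable by an object of $\CCat$. So it suffices to fix an arbitrary $\Gamma \in \CCat$ together with a map $\Mor[\bar b]{\Yo\Gamma}{N_{\CCat}(B)}$ and show that $N_{\CCat}(E) \times_{N_{\CCat}(B)} \Yo\Gamma$ is representable.

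Next I would use the Yoneda lemma to re-interpret such a $\bar b$. By definition, an element of $N_{\CCat}(B)(\Gamma) = \Hom_{\ECat}(I\Gamma, B)$ is a morphism $\Mor[b]{I\Gamma}{B}$ in $\ECat$, so $\bar b$ corresponds to such a $b$. Moreover, by full faithfulness of the inclusion $\EmbMor[I]{\CCat}{\ECat}$ we have a canonical identification $N_{\CCat}(I\Gamma) = \Hom_{\ECat}(I-, I\Gamma) \cong \Hom_{\CCat}(-, \Gamma) = \Yo\Gamma$, under which $\bar b$ becomes $N_{\CCat}(b)$. This step is essentially just careful bookkeeping but is where the full-subcategory assumption on $\CCat$ enters.

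Then I would observe that, as a functor of the form $\Hom_{\ECat}(I-, -)$, the nerve $N_{\CCat}$ is limit-preserving in its second argument; in particular it preserves pullbacks. Consequently,
\[
  N_{\CCat}(E) \times_{N_{\CCat}(B)} \Yo\Gamma
  \;\cong\;
  N_{\CCat}(E) \times_{N_{\CCat}(B)} N_{\CCat}(I\Gamma)
  \;\cong\;
  N_{\CCat}\bigl(E \times_B I\Gamma\bigr),
\]
where the pullback $E \times_B I\Gamma$ in $\ECat$ exists because $p$ is relatively representable (in particular, the relevant pullback exists). By the relative representability hypothesis applied to the morphism $\Mor[b]{I\Gamma}{B}$, the object $E \times_B I\Gamma$ lies in $\CCat$, say as $I\Delta$ for some $\Delta \in \CCat$, and then $N_{\CCat}(I\Delta) \cong \Yo\Delta$ is representable, finishing the proof.

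I don't expect a genuine obstacle here — the argument is essentially ``the nerve preserves the pullback, and the hypothesis is exactly that this pullback lands in $\CCat$.'' The only subtlety worth flagging is the identification of $N_{\CCat}$ restricted to (the image of) $\CCat$ with the Yoneda embedding $\Yo$, which relies on the fullness of $I$ and is used twice: once to recognize $\bar b$ as coming from a morphism $b$ in $\ECat$, and once to recognize $N_{\CCat}(I\Delta)$ as representable.
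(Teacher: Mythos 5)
Your proof is correct and follows essentially the same route as the paper's: identify a map $\Yo{\Gamma} \to N_{\CCat}B$ with a morphism $I\Gamma \to B$ via the Yoneda lemma and the identification $\Yo{\Gamma} \cong N_{\CCat}I\Gamma$, use that the nerve preserves pullbacks, and conclude that the resulting fiber product in $\ECat$ lies in $\CCat$, hence is representable. The only cosmetic difference is in the last step: you invoke the relative-representability hypothesis directly (which matches the lemma as stated), whereas the paper phrases it via the fiber product being $\SS$-contextual and $\CCat$ containing all contextual objects.
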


\begin{proof}
  We must check that for any $\Gamma\in\CCat$, the fiber product of any cospan
  $\Yo{\Gamma}\xrightarrow{A} N\Sub{\CCat} B \xleftarrow{N\Sub{\CCat} p} N\Sub{\CCat} E$ is
  representable. Identifying $\Yo{\Gamma}$ with $N\Sub{\CCat} I\Gamma$, we may assume
  without loss of generality that $A= N\Sub{\CCat} A'$ for some
  $\Mor{A'}{I\Gamma}{B}$. The fiber product of
  $I\Gamma\xrightarrow{A'}B\xleftarrow{p}E$ is \emph{contextual} with respect to
  $\SS$ by definition, and it lies in $\CCat$ by our assumption that $\CCat$
  contains all contextual objects.
\end{proof}

We may therefore define the generic family $\Proj[\Extern[\CCat]{\SS}]$ of
$\Extern[\CCat]{\SS}$ to be $N\Sub{\CCat}\Proj[\SS]$, which is representable by
\cref{lem:nv-preserves-representable-maps}.

\subsection{Injectivity of type constructors in free natural models}

We now come to a precise version of our original discussion about
injectivity of type constructors from \cref{sec:injectivity}.

\begin{question}\label{question:injectivity}
  Let $\IMod$ be the free natural model with function spaces generated by a
  base type $\Con{O}$ and two constants $\Con{yes},\Con{no}:\Con{O}$. Is the
  function space constructor
  $\Mor[\prn{\Rightarrow\Sub{\IMod}}]{\TP[\IMod]\times\TP[\IMod]}{\TP[\IMod]}$
  a monomorphism in $\JDG[\IMod]$?
\end{question}

The answer to \cref{question:injectivity} is ultimately ``Yes!''
(\cref{thm:main-theorem}), but this is as difficult to prove as
\cref{cor:scalar-multiplication-injective} was easy. As we have alluded to at
the beginning of \cref{sec:normalization}, the problem is that
although the collection of types is a presheaf on $\CX[\IMod]$, we cannot very
well define the collection of \emph{normal forms of types} to be a presheaf on
$\CX[\IMod]$, as we will illustrate in \cref{sec:nf-not-presheaf}.

\subsection{Normal forms are not functorial in substitutions}\label{sec:nf-not-presheaf}

The reason that a useful notion of normal form cannot be defined as a presheaf
on $\CX[\IMod]$ is that normal forms must distinguish between
\DefEmph{variables} and the things that can be substituted for them. For
instance, if $x : \Con{O}\Rightarrow\Con{O}$ represents a variable, then
$x\,\Con{yes}$ should be represent normal form; but under the instantiation of
$x$ by $\prn{\lambda z.z}$, the resulting expression $\prn{\lambda
    z.z}\,\Con{yes}$ should \emph{not} represent a normal form --- the normal form
representing for this expresion should simply be $\Con{yes}$.
It follows that the only way that normal forms could give rise to a presheaf on
$\CX[\IMod]$ is if \emph{variables} gave rise to a presheaf on $\CX[\IMod]$,
but we will see that this does not obtain.

\begin{definition}\label{def:variable}
  The \DefEmph{variables} in a natural model $\Mod$ are defined to be the
  smallest class of morphisms into $\EL[\Mod]$ such that for any
  $\Mor[A]{\Yo{\Gamma}}{\TP[\Mod]}$, the morphism
  $\Mor[\Con{x}_A]{\Yo\prn{\Gamma.A}}{\EL[\Mod]}$ is a variable, and if
  $\Mor[y]{\Yo{\Gamma}}{\EL[\Mod]}$ is a variable than so is the composite
  $\Mor[y\circ \Yo{p_A}]{\Yo\prn{\Gamma.A}}{\EL[\Mod]}$:
  \[
    \begin{tikzpicture}[diagram]
      \SpliceDiagramSquare<sq/>{
        nw/style = pullback,
        nw = \Yo\prn{\Gamma.A},
        sw = \Yo{\Gamma},
        ne = \EL[\Mod],
        se = \TP[\Mod],
        east = \Proj[\Mod],
        south = A,
        north = \Con{x}_A,
        west = \Yo{p_A},
        west/node/style = right,
      }
      \node[left = of sq/sw] (sw) {$\EL[\Mod]$};
      \draw[->,color=RegalBlue] (sq/sw) to node[below] {$y$} (sw);
      \draw[->,magenta] (sq/nw) to node[sloped,above] {$y\circ \Yo{p_A}$} (sw);
    \end{tikzpicture}
  \]
\end{definition}

\begin{problem}[{Variables do not form a presheaf on $\CX[\IMod]$}]
If the collection of variables in the sense of \cref{def:variable} formed a
presheaf on $\CX[\IMod]$, then we could extend it to inductively define a
presheaf normal forms satisfying our desired laws. We might try to define
$\Con{Var}\Sub{\IMod} \in \JDG[\IMod]$ to assign to each context
$\Gamma\in\CX[\IMod]$ the subset of
$\Hom{\JDG[\IMod]}{\Yo{\Gamma}}{\EL[\IMod]}$ spanned by variables in the
sense of \cref{def:variable}. This definition, however, is evidently not
functorial in $\Gamma$: variables are closed under precomposition with
projections $\Mor{\Gamma.A}{\Gamma}$ and certain other maps derived from
these, whereas functoriality in $\CX[\IMod]$ requires closure under
precomposition with arbitrary maps.
\end{problem}

\NewDocumentCommand\RCat{}{\mathscr{R}}

\subsection{Models of variables and the method of computability}\label{sec:variables-and-computability}

Although we have seen that the variables of a natural model $\Mod$ will not
generally arrange themselves into a presheaf on $\CX[\Mod]$. Nonetheless, it is
possible to imagine them forming a presheaf on a \emph{different} category ---
perhaps, intuitively, a wide subcategory of $\CX[\Mod]$ that has fewer
morphisms in it and thus induces a weaker functoriality condition.

If $\Mor[\rho]{\RCat}{\CX[\Mod]}$ represents the inclusion functor of such a
wide subcategory on which the collection of variables forms a presheaf, then
there is some hope for way to state define the collection of normal forms ---
not as a presheaf on $\CX[\Mod]$ but as a presheaf on $\RCat$. Of course, we
must be able to link normal forms of types to the actual types they represent,
so the collection of types $\TP[\Mod]$ must be imported into $\Psh{\RCat}$.
This is easily done, however, by considering its restriction $\rho^*\TP[\Mod]\in\Psh{\RCat}$.

\subsubsection{Models of variables over a natural model}

The situation that we have intuitively described can be made more precise with
the following more general notion of \DefEmph{model of variables}.

\begin{definition}[\citet{bocquet-kaposi-sattler:2021,uemura:2022:coh}]\label{def:model-of-variables}
  A \DefEmph{model of variables} over a natural model $\Mod$ is defined to be a
  natural model $\ModR$ equipped with a homomorphism of natural models
  $\Mor[\rho]{\ModR}{\Mod}$ such that the induced map
  $\Mor{\TP[\ModR]}{\rho^*\TP[\Mod]}$ is an isomorphism.
\end{definition}

In the situation of \cref{def:model-of-variables}, then we may define $\RCat$
to be the underlying category $\CX[\ModR]$. Models of variables over $\Mod$
themselves arrange into a compactly generated (2,1)-category, and so we may consider
the \DefEmph{bi-initial model of variables} over $\IMod$. In this case,
$\EL[\ModR]\in\JDG[\ModR]$ plays the role of the desired presheaf of variables;
indeed, the bi-initiality property here corresponds to the \emph{inductive}
definition of variables (\cref{def:variable}).

\begin{exegesis}
  The purpose of requiring $\Mor{\TP[\ModR]}{\rho^*\TP[\Mod]}$ to be an
  isomorphism is to ensure that variables are classified by the same sorts of
  types as terms, and that the underlying functor
  $\Mor[\rho]{\CX[\ModR]}{\CX[\Mod]}$ is essentially surjective on objects.
  Note that even if $\Mod$ is structured (\eg with function spaces, \etc),
  \cref{def:model-of-variables} refers only to the bare structure of the
  natural model.
\end{exegesis}

\subsubsection{Why is it hard to build a model based on normal forms?}\label{sec:why-hard}

Recalling our construction of a normal forms presentation for free monoids in
\cref{sec:free-monoid-nf}, we should be aiming to construct a natural model
$\Modd$ containing normal forms equipped with a (pseudo-)retraction
$\Mor[P]{\Modd}{\IMod}$ of the induced universal map $\Mor[S]{\IMod}{\Modd}$.
Because we have a suitable notion of variable in $\JDG[\ModR]$ it is tempting
to attempt to define $\CX[\Modd] = \CX[\ModR]$ and then define $\TP[\Modd]$ to
be a presheaf of normal forms of types and $\EL[\Modd]$ to be the presheaf of
normal forms of elements. This proposal will fail almost immediately, however.

\begin{problem}\label{problem:tait:1}
When $\ModR$ is the bi-initial model of variables over $\IMod$, there are
simply not enough morphisms in $\CX[\ModR]$ to build a model $\Modd$ of the
full type theory (\eg with function spaces) over it. For instance, the
function space in $\Modd$ between two global types would necessarily induce an
exponential between their context comprehensions in $\CX[\ModR]$, but this
structure is not present in the bi-initial model of variables.
\end{problem}

A more promising idea to avoid \cref{problem:tait:1} is to let $\CX[\Modd]$ be
a suitable full subcategory of $\JDG[\ModR]$ closed under not only context
comprehension from $\ModR$ but also the image of
$\Mor[\rho^*]{\JDG[\IMod]}{\JDG[\ModR]}$. This doesn't work either, however.

\begin{problem}\label{problem:tait:2}
If we take $\CX[\Modd]$ to be a suitable full subcategory of $\JDG[\ModR]$,
then the resulting model cannot retain enough information about $\IMod$ to induce
a pseudo-retraction $\Mor|->>|[P]{\Modd}{\IMod}$ of the universal map
$\Mor[S]{\IMod}{\Modd}$.\footnote{In fact, a normalization function can be
defined in such a model~\citep{fiore:2022:nbe}, but its correctness cannot be
established without the pseudo-retraction $\Mor|->>|[P]{\Modd}{\IMod}$.}
\end{problem}

There is another problem besides the above with the idea of modeling types and
terms by their normal forms, no matter what ambient category we may choose.
\cref{problem:tait:3} below demonstrates that the problem of normalization for
type theory with function spaces is vastly more difficult than that of (\eg)
the theory of monoids.

\begin{problem}\label{problem:tait:3}
In the presence of function spaces, the collections of normal forms cannot be
used \emph{directly} as a model. Roughly, the problem is that we would need
to define (\eg) an application function that takes a normal form of $u:A\to
  B$ and a normal form of $v:A$ to a normal form of $uv:B$, but this is exactly
the problem we have been trying to solve in the first place --- so we cannot
define this function until our proof is complete.
\end{problem}

\subsubsection{Tait's method of computability}\label{sec:tait}

\cref{problem:tait:1,problem:tait:2,problem:tait:3} were first solved by Bill
Tait, simultaneously, when he introduced the eponymous \DefEmph{method of
  computability}~\citep{tait:1967}, also variously known as \emph{logical
  relations}, \emph{logical predicates}, or the \emph{reducibility
  method}.\footnote{In addition to Tait's original contribution, several other
  authors contributed greatly to the early development (and naming) of this
  concept, including for example
  \citet{girard:1971,martin-lof:1975,martin-lof:itt:1975,martin-lof:1971,plotkin:1973,plotkin:1980,prawitz:1971,statman:1985}.}
Tait's brilliant solution to \cref{problem:tait:1,problem:tait:2}, phrased in
non-categorical language, was to devise a model in which a context is modeled
as a predicate of some kind on a syntactic context; and a substitution is
modeled by a syntactic substitution that preserves the corresponding
predicates. Because every construct in the model is tracked by something
syntactic, there is enough data to define a pseudo-retraction from the model
onto the syntax. By imposing a further condition that the interpretation of
every type be equipped with a \emph{projection} onto normal forms, Tait solves
\cref{problem:tait:3}.

\subsubsection{Freyd's categorical reconstruction of Tait computability}

In \citeyear{freyd:1978}, Peter Freyd rephrased Tait's method into categorical
language as an instance of \DefEmph{Artin gluing} or \DefEmph{recollement},
when he used it to give the first conceptual proof of the existence and
disjunction properties in the free elmentary topos~\citep{freyd:1978}. Of
course, Artin gluing was first introduced in SGA~4 as a way to reconstruct a
topos from complementary open and closed subtopoi. Freyd considered only
gluings along the global sections functor, whereas Tait's original situation
(and ours) requires a more subtle gluing involving the functor that arises from
a model of variables $\Mor[\rho]{\ModR}{\IMod}$. Scenarios of this kind were
first investigated categorically by
\citet{jung-tiuryn:1993,altenkirch-hofmann-streicher:1995,streicher:1998,fiore-simpson:1999,fiore:2002}.
Our own ``synthetic'' approach to Tait's method, to be detailed in
\cref{sec:stc}, is obtained by combining the observations of the cited authors
with the viewpoint of the type theoretic internal language
(\cref{sec:relative-pov}) of glued topoi.

\section{Normalization by gluing for free natural models}

\subsection{Synthetic Tait computability for models of type theory}\label{sec:stc}

We have seen in \cref{sec:variables-and-computability} that the normalization
problem for type theory hinges on the concept of a \emph{variable}, and
introduced a technical notion of ``model of variables'' on a natural model
(\cref{def:model-of-variables}) that can serve as a matrix in which to define
the notion of normal forms. As we pointed out in \cref{sec:why-hard}, this is not enough to
prove that normal forms adequately represent the constructs of the bi-initial
natural model $\IMod$ of type theory.

The fundamental issue, exposed in \cref{problem:tait:1,problem:tait:2}, is that
any normalization model $\Modd$ must be structured with a homomorphism
$\Mor|->>|{\Modd}{\IMod}$, which shall be seen to be a pseudo-retraction of the
induced universal map $\Mor|>->|{\IMod}{\Modd}$ by an application of the
latter's universal property; concretely, this means that both contexts and
substitutions of the normalization model $\Modd$ must not forget the contexts
and substitutions from the bi-initial model to which they pertain.

The reason the pseudo-retraction $\Mor|->>|{\Modd}{\IMod}$ is needed is the
same as in our simpler example for free monoids
(\cref{sec:monoid-normalization}, \cref{cor:scalar-multiplication-injective}):
the universal map $\Mor|>->|{\IMod}{\Modd}$ sends each piece of term $X$ to a
construct of the normalization model from which we might expect to extract a
normal form, and the purpose of the pseudo-retraction is to ensure that the
resulting normal form is a normal form for $X$, rather than a normal form for
some other term.

Bill Tait's solution to these problems was to define models that
compositionally instrument syntactic constructs with additional data, namely
the data of ``normalizability'' or ``computability''. In this section, we will
see how the categorical reconstruction of Tait's method of computability arises
naturally from the idea of formally gluing the constructs of bi-initial model
$\IMod$ along the restriction functor $\Mor[\rho^*]{\JDG[\IMod]}{\JDG[\ModR]}$
onto data valued in the model of variables $\ModR$, leading to a notion of
\DefEmph{computability space} from which a normalization model can be extracted
by means of a certain \DefEmph{functor of points}.

\subsubsection{The topos of computability spaces over a model of variables}

Let $\Mor[\rho]{\ModR}{\Mod}$ be a model of variables over a natural model
$\Mod$. We will first show each aspect of the model of variables translates
into the geometric language of topoi. In particular, the two categories of
presheaves $\JDG[\Mod]$ and $\JDG[\ModR]$ are topoi; in this paper, we are
careful to distinguish the geometrical and algebraic aspects of
topoi~\citep{anel-joyal:2021,vickers:2007,bunge-funk:2006}, so we shall write
$\ClTop{\Mod}$ and $\ClTop{\ModR}$ for the topoi corresponding the two
categories of presheaves respectively.

\begin{notation}
  Given a topos $\XTop$, we will write $\Sh{\XTop}$ for the corresponding
  category; for example, we have $\Sh{\ClTop{\Mod}} =
    \JDG[\Mod]$. We refer to an object of $\Sh{\XTop}$ as a \emph{sheaf on
    $\XTop$}.
\end{notation}

\begin{observation}
  The underlying functor $\Mor[\rho]{\CX[\ModR]}{\CX[\Mod]}$ of our model of
  variables corresponds to an \DefEmph{essential morphism} of topoi
  $\Mor[\brho]{\ClTop{\ModR}}{\ClTop{\Mod}}$ given by the adjoint triple
  induced by base change of presheaves:
  \[
    \begin{tikzpicture}[diagram]
      \node (w) {$\JDG[\Mod]$};
      \node[right = 3cm of w] (e) {$\JDG[\ModR]$};
      \draw[->] (w) to node[upright desc] (c) {$\rho^*$} (e);
      \draw[->,bend right=50] (e) to node[above] (n) {$\rho_!$} (w);
      \draw[->,bend left=50] (e) to node[below] (s) {$\rho_*$} (w);
      \node[between = c and n,yshift = -.15cm] {$\bot$};
      \node[between = c and s,yshift = .15cm] {$\bot$};
    \end{tikzpicture}
  \]
\end{observation}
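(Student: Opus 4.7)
The claim is a standard instance of the fact that any functor $\rho : \CX[\ModR] \to \CX[\Mod]$ between essentially small categories induces an \emph{essential} geometric morphism between the corresponding presheaf topoi. The plan is to exhibit the three functors, verify the chain of adjunctions $\rho_! \dashv \rho^* \dashv \rho_*$, and finally verify left exactness of $\rho^*$ in order to promote the second adjunction to a genuine geometric morphism.

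I would first define $\rho^*$ concretely as precomposition with $\rho^{\mathrm{op}}$, sending $F\in\JDG[\Mod]$ to $F\circ \rho^{\mathrm{op}}\in\JDG[\ModR]$, and then take $\rho_!$ and $\rho_*$ to be the pointwise left and right Kan extensions along $\rho^{\mathrm{op}}$ respectively. These pointwise Kan extensions exist because $\CX[\ModR]$ and $\CX[\Mod]$ are essentially small while $\SET$ is both complete and cocomplete, and the two adjunctions $\rho_!\dashv \rho^*\dashv\rho_*$ are exactly the defining universal properties of those Kan extensions. A less explicit alternative would be to invoke the adjoint functor theorem in locally presentable categories: since limits and colimits of presheaves are computed pointwise and $\rho^*$ is pointwise the identity on values, it preserves every small limit and colimit and therefore automatically admits both adjoints.

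The principal substantive point to check is left exactness of $\rho^*$, which is what promotes the adjunction $\rho^*\dashv\rho_*$ to a geometric morphism $\brho:\ClTop{\ModR}\to\ClTop{\Mod}$; but this is immediate from the same pointwise-computation observation, since preserving all limits certainly entails preserving finite ones. The presence of the further left adjoint $\rho_!$ then makes $\brho$ essential by definition. There is no genuinely difficult step here; the real content of the observation is a change of vantage point, translating the algebraic datum of a functor of context categories into the geometric vocabulary of topoi and essential morphisms that will anchor the gluing construction in the sequel.
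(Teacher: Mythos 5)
Your argument is correct and is exactly the standard proof that the paper is implicitly invoking: the paper states this as an unproved \emph{observation} precisely because the existence of the adjoint triple $\rho_!\dashv\rho^*\dashv\rho_*$ via Kan extensions, and the left exactness of the precomposition functor, are textbook facts about presheaf topoi. Nothing further to add.
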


\begin{lemma}\label{lem:geometric-surjection}
  When $\Mod$ is \emph{democratic} in the sense of \cref{def:democratic}, the
  essential morphism $\Mor[\brho]{\ClTop{\ModR}}{\ClTop{\Mod}}$ induced by the
  model of variables is a \DefEmph{surjection} of topoi.
\end{lemma}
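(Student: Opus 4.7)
The plan is to unpack the notion of surjection of topoi into a concrete condition on $\rho$ and then exploit democracy. Recall that a geometric morphism is a surjection precisely when its inverse image functor is faithful, or equivalently conservative. In our setup $\brho^{*}$ acts on sheaves as the ordinary restriction functor $\Mor[\rho^{*}]{\JDG[\Mod]}{\JDG[\ModR]}$, so I would reduce the claim to showing that $\rho^{*}$ is faithful. For restriction along a functor between small categories, a clean sufficient condition is that $\rho$ be essentially surjective on objects: given parallel natural transformations $\alpha,\beta$ of presheaves on $\CX[\Mod]$ whose restrictions along $\rho$ coincide, naturality against any chosen isomorphism $\rho\Delta\cong\Gamma$ forces $\alpha_{\Gamma}=\beta_{\Gamma}$, and so $\alpha=\beta$.

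The real content is therefore to show that every $\Gamma\in\CX[\Mod]$ lies in the essential image of $\rho$. Democracy (\cref{def:democratic}) says that every such $\Gamma$ represents a $\Mod$-contextual object, so by \cref{def:contextual-objects} it is enough to check that the essential image of $\rho$ on $\CX[\Mod]$ is already an $\Mod$-contextual class in the sense of \cref{def:contextual-class}: it contains the terminal object, and it is closed under comprehension by types classified by $\Proj[\Mod]$. The base case is immediate because $\rho$, as a morphism of natural models, preserves the terminal object.

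For the inductive case, suppose $\rho\Delta\cong\Gamma'$ and fix a type $\Mor[A]{\Yo{\Gamma'}}{\TP[\Mod]}$. The defining isomorphism $\Mor{\TP[\ModR]}{\rho^{*}\TP[\Mod]}$ of a model of variables (\cref{def:model-of-variables}) lets me transport $A$ through the chosen iso and lift it to a type $\Mor[A^{*}]{\Yo{\Delta}}{\TP[\ModR]}$; the comprehension-preservation cartesian square that defines a morphism of natural models then identifies $\rho(\Delta.A^{*})$ with the comprehension of the corresponding type over $\rho\Delta$, which by construction is isomorphic to $\Gamma'.A$. This places $\Gamma'.A$ inside the essential image of $\rho$ and closes the induction.

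I expect the main friction in writing this out carefully to be bookkeeping: one must verify that the chosen lift $A^{*}$ really does transport back to $A$, which amounts to chasing the isomorphism $\TP[\ModR]\cong\rho^{*}\TP[\Mod]$ through the comprehension-preservation square, and one must also check that essential-image-up-to-isomorphism interacts cleanly with the contextual-class closure conditions. Beyond that, the argument proceeds in two clean moves: democracy reduces the surjectivity of $\brho$ to reaching the contextual contexts, and the structure of a model of variables together with preservation of comprehensions handles the inductive step.
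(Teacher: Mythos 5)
Your proposal is correct and follows the same route as the paper's proof: reduce surjectivity of $\brho$ to faithfulness of $\rho^{*}$, reduce that to essential surjectivity of $\rho$, and derive essential surjectivity from democracy. The paper asserts this last step with a bare ``it can be shown,'' whereas you supply the missing induction (essential image of $\rho$ is a contextual class: terminal by preservation, closed under comprehension via $\TP[\ModR]\cong\rho^{*}\TP[\Mod]$ and the comprehension-preservation square), which is exactly the argument the paper is eliding.
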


\begin{proof}
  It can be shown that the underlying functor
  $\Mor[\rho]{\CX[\ModR]}{\CX[\Mod]}$ of a model of variables is essentially
  surjective on objects when $\CX[\Mod]$ is democratic. This is enough to see
  that the precomposition functor $\rho^*$ is faithful, and so $\brho =
    \prn{\rho^*\dashv\rho_*}$ is surjective.
\end{proof}

Our goal is to classify a notion of \emph{computability space} that instruments
the constructs of $\Mod$ with data from $\ModR$; the fundamental example of a
computability space would then be the space of normal forms of types: in this
example, one instruments types that live in $\Mod$ with normal forms that live
in $\ModR$.
First we will define precisely what a computability space is, and then we will
observe that that we may construct a topos $\GlTop$ by Artin gluing whose
sheaves are exactly the computability spaces.

\begin{definition}\label{def:computability-space}
  A \DefEmph{computability space} is given by a presheaf $E\in \JDG[\Mod]$
  together with a family of presheaves $\Mor[\pi_E]{\tilde{E}}{\brho^*E}\in
    \JDG[\ModR]$. A morphism from a computability space $\prn{E,\pi_E}$ to a
  computability space $\prn{F,\pi_F}$ is given by a morphism
  $\Mor[f]{E}{F}\in\JDG[\Mod]$ together with a morphism
  $\Mor[\tilde{f}]{\tilde{E}}{\tilde{F}}\in\JDG[\ModR]$ such that the following
  square commutes:
  \[
    \DiagramSquare{
      nw = \tilde{E},
      ne = \tilde{F},
      sw = \brho^*E,
      se = \brho^*F,
      south = \brho^*f,
      north = \tilde{f},
      west = \pi_E,
      east = \pi_F,
    }
  \]
\end{definition}

\begin{construction}
  We define the \DefEmph{topos of computability spaces} to be the following
  co-comma object in the bicategory of Grothendieck topoi below:
  \begin{equation}\label[diagram]{diag:cocomma}
    \begin{tikzpicture}[diagram,baseline=(sw.base)]
      \SpliceDiagramSquare{
        nw = \ClTop{\ModR},
        sw = \ClTop{\ModR},
        ne = \ClTop{\Mod},
        se = \GlTop,
        north = \brho,
        east = \bj,
        south = \bi,
        west/style = {double},
        east/style = {exists,open immersion},
        south/style = {exists,closed immersion},
      }
      \node[between = sw and ne] {$\xRightarrow{\pi}$};
    \end{tikzpicture}
  \end{equation}
\end{construction}

In \cref{diag:cocomma}, the morphism
$\Mor|open immersion|[\bj]{\ClTop{\Mod}}{\GlTop}$ is a \DefEmph{open immersion}
of topoi, and the $\Mor|closed immersion|[\bi]{\ClTop{\ModR}}{\GlTop}$ is its
complementary \DefEmph{closed immersion}. This gluing can be computed more
explicitly as a \DefEmph{closed mapping cylinder} using the Sierpi\'nski interval
$\mathbb{S} = \brc{\Mor{\bullet}{\circ}}$, as in \citet{johnstone:topos:1977}:
\begin{equation}\label[diagram]{diag:closed-mapping-cylinder}
  \begin{tikzpicture}[diagram,baseline=(sq/sw.base)]
    \SpliceDiagramSquare<sq/>{
      nw = \ClTop{\ModR},
      sw = \ClTop{\ModR}\times\mathbb{S},
      ne = \ClTop{\Mod},
      se = \GlTop,
      north = \brho,
      east = \bj,
      south = \bk,
      west/node/style = right,
      west = \prn{\ClTop{\ModR},\circ},
      west/style = {open immersion, bend left=30},
      east/style = open immersion,
      se/style = pushout,
      width = 3cm,
    }
    \draw[closed immersion, bend right=30] (sq/nw) to node[left] (l) {$\prn{\ClTop{\ModR},\bullet}$} (sw);
    \node[right = 1.08cm of l] {$\Rightarrow$};
    \node(G) [left = 3cm of sq/sw] {$\GlTop$};
    \draw[->] (sq/sw) to node[below] {$\bk$} (G);
    \draw[closed immersion,bend right] (nw) to node[sloped,above] {$\bi$} (G);
  \end{tikzpicture}
\end{equation}

\begin{observation}\label{obs:comma-computation}
  Under the geometry--algebra duality, the co-comma topos $\GlTop$ corresponds
  to the comma category $\Sh{\GlTop}\simeq \prn{\JDG[\ModR]\downarrow
      \brho^*}$, \ie the \DefEmph{Artin gluing} of $\brho^*$. Therefore, sheaves on
  $\GlTop$ are the same thing as computability spaces \emph{qua}
  \cref{def:computability-space}, and morphisms between sheaves are exactly
  morphisms of computability spaces.
\end{observation}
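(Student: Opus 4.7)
The plan is to invoke the standard 2-categorical duality between Grothendieck topoi (with geometric morphisms) and their sheaf categories (with inverse image parts, which are cocontinuous left exact functors), under which colimit-type universal properties in topoi translate to limit-type universal properties on the sheaf side. In particular, the co-comma object defining $\GlTop$ in (\ref{diag:cocomma}) corresponds to a comma object construction on sheaf categories.

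First, I would apply the $\Sh{-}$ functor to (\ref{diag:cocomma}): the data becomes a square of inverse image functors $\brho^*,\bi^*,\bj^*$ together with a 2-cell $\pi^*:\brho^*\bj^*\Rightarrow\bi^*$ in $\JDG[\ModR]$. The universal property of the co-comma $\GlTop$ in the 2-category of Grothendieck topoi transports under the duality to the universal property of a comma object in sheaf categories, so $\Sh{\GlTop}$ is equivalent to the comma category $(\JDG[\ModR]\downarrow\brho^*)$, which is precisely the Artin gluing of $\brho^*$. Second, I would identify the objects of this comma with computability spaces by direct comparison: an object is a triple $(E\in\JDG[\Mod],\,\tilde E\in\JDG[\ModR],\,\alpha:\tilde E\to\brho^*E)$, which matches \cref{def:computability-space} on the nose, and a morphism is a pair making the relevant square commute, again matching verbatim.

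The genuine content beyond the formal duality is the classical \emph{Artin gluing theorem} (see \citet{johnstone:topos:1977}): because $\brho^*$ is left exact, the comma category $(\JDG[\ModR]\downarrow\brho^*)$ is a Grothendieck topos, the projection sending $(E,\tilde E,\alpha)\mapsto E$ is the inverse image part of an open geometric morphism $\bj$ exhibiting $\ClTop{\Mod}$ as an open subtopos, and the projection $(E,\tilde E,\alpha)\mapsto \tilde E$ factors through the inverse image of the complementary closed immersion $\bi$. The explicit closed mapping cylinder presentation (\ref{diag:closed-mapping-cylinder}) is then the standard realization of this recollement via the Sierpi\'nski topos.

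The main obstacle is not conceptual but bookkeeping: formulating the 2-duality between $\mathbf{Topos}$ and the appropriate 2-category of Grothendieck sheaf categories with cocontinuous left exact functors precisely enough that one really does send co-commas to commas. Once this translation is in hand, the identification with computability spaces is immediate by unwinding \cref{def:computability-space}, and the identification of morphisms is equally direct from the commutativity of the defining square.
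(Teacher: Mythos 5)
Your proposal is correct and is exactly the argument that \cref{obs:comma-computation} is implicitly invoking; the paper records this as an observation without a written proof, and what you have written supplies it. Two small remarks. First, on the 2-cell direction: with the usual convention that 2-cells between geometric morphisms are natural transformations between inverse images (so that $\Kwd{Topos}\simeq\mathfrak{Top}^{\mathrm{op}}$ by $X\mapsto\Sh{X}$, $f\mapsto f^*$, reversing 1-cells and preserving 2-cells), the 2-cell in the comma category $\prn{\JDG[\ModR]\downarrow\brho^*}$ goes $\bi^*\Rightarrow\brho^*\bj^*$, not $\brho^*\bj^*\Rightarrow\bi^*$ as you wrote; getting this right is what makes the comma category have objects $\prn{E,\tilde E,\pi_E\colon\tilde E\to\brho^*E}$ rather than the reversed variant, matching \cref{def:computability-space} on the nose. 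Second, it is worth being explicit that the Artin gluing theorem is doing double duty: it both shows the comma category is a Grothendieck topos and, in effect, \emph{is} the construction by which the co-comma in the bicategory of Grothendieck topoi is seen to exist — so the ``formal duality'' step is not entirely independent of the ``genuine content'' you isolate, but is rather verified by it.
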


\begin{lemma}
  Both the open and closed immersions are essential, \ie we have additional
  (necessarily fully faithful) left adjoints $\bj_!\dashv \bj^*$ and
  $\bi_!\dashv \bi^*$.
\end{lemma}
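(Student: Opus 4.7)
The plan is to compute everything in the explicit comma-category description $\Sh{\GlTop}\simeq\prn{\JDG[\ModR]\downarrow\brho^*}$ afforded by \cref{obs:comma-computation}, in which the inverse image functors of the two immersions are the evident projections $\bj^*\prn{E,\tilde{E},\pi_E}=E$ and $\bi^*\prn{E,\tilde{E},\pi_E}=\tilde{E}$. I will define $\bj_!$ and $\bi_!$ by explicit formulae and verify the adjunctions by chasing hom-sets; the full faithfulness of each will then fall out automatically.

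For the open immersion, I set $\bj_! E \coloneqq \prn{E,\, 0,\, \Mor{0}{\brho^* E}}$, where $0$ denotes the initial sheaf of $\JDG[\ModR]$. A morphism $\Mor{\bj_! E}{\prn{E',\tilde{E}',\pi_{E'}}}$ consists of a map $\Mor{E}{E'}$ in $\JDG[\Mod]$ together with the unique map $\Mor{0}{\tilde{E}'}$, whose attendant compatibility square commutes vacuously; this produces a natural bijection with $\Hom{\JDG[\Mod]}{E}{E'}$, establishing $\bj_!\dashv\bj^*$.

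For the closed immersion, the essentialness of $\brho$ is crucial: it supplies a further left adjoint $\brho_!\dashv\brho^*$ whose unit I denote by $\eta$. With this in hand I take $\bi_!\tilde{E}\coloneqq\prn{\brho_!\tilde{E},\,\tilde{E},\,\eta_{\tilde{E}}}$. A morphism $\Mor{\bi_!\tilde{E}}{\prn{E',\tilde{E}',\pi_{E'}}}$ then consists of a pair $\Mor[f]{\brho_!\tilde{E}}{E'}$ and $\Mor[\tilde{f}]{\tilde{E}}{\tilde{E}'}$ subject to $\pi_{E'}\circ\tilde{f}=\brho^*\prn{f}\circ\eta_{\tilde{E}}$. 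Transposition along $\brho_!\dashv\brho^*$ identifies $f$ uniquely with the transpose of the right-hand side $\pi_{E'}\circ\tilde{f}$, so the datum collapses to a single map $\Mor{\tilde{E}}{\tilde{E}'}$ in $\JDG[\ModR]$, as required.

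The parenthetical ``necessarily fully faithful'' clause now follows by inspection: both counits $\Mor{\bj^*\bj_! E}{E}$ and $\Mor{\bi^*\bi_!\tilde{E}}{\tilde{E}}$ are literal identities in the formulae above, forcing each left adjoint to be fully faithful. The only substantive ingredient in the entire argument is the appeal to $\brho_!$ when building $\bi_!$ --- without the essentialness of $\brho$, the closed immersion would fail to be essential in general --- but everything else is a mechanical unfolding of the comma presentation, and I foresee no real obstacle beyond the routine naturality checks on the bijections.
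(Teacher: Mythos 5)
Your construction is correct and matches the paper's own argument exactly: the paper likewise builds $\bj_!$ by pairing a sheaf with the initial object of $\JDG[\ModR]$, builds $\bi_!$ from the essential left adjoint $\brho_!$ supplied by the essentialness of $\brho$, and even records the same explicit formulas $\bj_!E=\prn{E,\ObjInit{}\to\brho^*E}$ and $\bi_!R=\prn{\brho_!R,\eta_R}$ in the ensuing exegesis block. The one genuine (if cosmetic) divergence is the full-faithfulness clause: the paper invokes the general fact that in an adjoint triple $F\dashv G\dashv H$ the leftmost adjoint is fully faithful iff the rightmost one is, and then appeals to $\bj_*$, $\bi_*$ being fully faithful because $\bj$, $\bi$ are immersions, whereas you verify it by computing the relevant triangles directly. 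A small terminological slip there: the natural transformations that detect full faithfulness of a \emph{left} adjoint are the \emph{units} $E\to\bj^*\bj_!E$ and $\tilde E\to\bi^*\bi_!\tilde E$, not counits (and they go the opposite way from what you wrote) --- harmless here, since your explicit formulas make both composites literal identities, but worth fixing.
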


\begin{proof}
  That the open immersion is essential follows from the fact that $\JDG[\ModR]$
  has an initial object; that the closed immersion is essential follows from
  the fact that $\Mor[\brho]{\ClTop{\ModR}}{\ClTop{\Mod}}$ is essential.
  Finally, in an adjoint triple $F\dashv G\dashv H$, the leftmost adjoint is
  fully faithful if and only if the rightmost one is.
\end{proof}

\begin{lemma}
  In fact, we have a further right adjoint $\bj_*\dashv \bj^!$.
\end{lemma}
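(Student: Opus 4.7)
The plan is to construct $\bj^!$ explicitly, using the description of $\Sh{\GlTop}$ as the comma category $\prn{\JDG[\ModR]\downarrow\brho^*}$ recorded in \cref{obs:comma-computation}. Under this identification, the direct image $\bj_*$ sends a sheaf $F\in\JDG[\Mod]$ to the computability space $\bj_*F = \prn{F,\brho^*F,\Idn{\brho^*F}}$; one easily verifies that this is right adjoint to the projection $\bj^*$ sending $\prn{E,\tilde{E},\pi_E}\mapsto E$.

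To exhibit $\bj^!$, I would unpack $\Hom{\GlTop}{\bj_*F}{\prn{E,\tilde{E},\pi_E}}$ as pairs $\prn{f,\tilde{f}}$ consisting of a morphism $\Mor[f]{F}{E}$ in $\JDG[\Mod]$ and a morphism $\Mor[\tilde{f}]{\brho^*F}{\tilde{E}}$ in $\JDG[\ModR]$ subject to the constraint $\pi_E\circ\tilde{f}=\brho^*f$. Transposing $\tilde{f}$ along $\brho^*\dashv\brho_*$, this datum is equivalent to a morphism $\Mor[\hat{f}]{F}{\brho_*\tilde{E}}$ in $\JDG[\Mod]$; chasing units and counits, the compatibility condition $\pi_E\circ\tilde{f}=\brho^*f$ translates into the commutativity of the square $\brho_*\pi_E\circ\hat{f}=\eta_E\circ f$, where $\Mor[\eta_E]{E}{\brho_*\brho^*E}$ is the unit. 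This is precisely the defining universal property of the pullback in $\JDG[\Mod]$:
\[
  \DiagramSquare{
    nw = E\times_{\brho_*\brho^*E}\brho_*\tilde{E},
    ne = \brho_*\tilde{E},
    sw = E,
    se = \brho_*\brho^*E,
    south = \eta_E,
    east = \brho_*\pi_E,
    nw/style = pullback,
  }
\]
Setting $\bj^!\prn{E,\tilde{E},\pi_E}$ to be this pullback, naturality in $\prn{E,\tilde{E},\pi_E}$ is automatic, and the bijection derived above is precisely the adjunction $\bj_*\dashv\bj^!$.

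There is no substantive obstacle here: the construction works because $\brho_*$ exists (given by the geometric morphism $\brho$) and because $\JDG[\Mod]$ admits pullbacks. As a sanity check, one can verify $\bj^!\bj_*\cong\Idn{}$ via the identity that $E\times_{\brho_*\brho^*E}\brho_*\brho^*E\cong E$, which is consistent with $\bj_*$ being fully faithful. Alternatively, and more abstractly, one could simply invoke the adjoint functor theorem: colimits in the comma category are computed componentwise, and $\bj_*$ clearly preserves them since $\brho^*$ does (being an inverse image functor); presentability of both topoi then supplies the right adjoint formally. The explicit pullback formula is useful, however, because subsequent uses of recollement in the synthetic Tait computability framework often need to reason about $\bj^!$ concretely when computing the ``open'' modality associated with the gluing.
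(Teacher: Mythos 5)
Your proof is correct and follows essentially the same idea the paper intends. The paper's own proof of this lemma is a one-liner (``Because $\brho^*$ has a right adjoint $\brho_*$''), and the explicit pullback formula you derive for $\bj^!$ --- namely $\bj^!\prn{E,\tilde E,\pi_E} = E\times_{\brho_*\brho^*E}\brho_*\tilde E$ --- is exactly what the paper records in the exegesis immediately afterward as $\bj^!G = \brho_*\bi^*G\times\Sub{\brho_*\brho^*\bj^*G}\bj^*G$. You have simply filled in the details the paper leaves implicit, including the careful transposition argument that converts the comma-category compatibility condition into the pullback universal property; your alternative remark via preservation of colimits and the adjoint functor theorem is also correct and consistent with the spirit of the paper.
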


\begin{proof}
  Because $\brho^*$ has a right adjoint $\brho_*$.
\end{proof}

\begin{exegesis}
  Under the identification of sheaves on $\GlTop$ with computability spaces, we
  may examine the behavior of all the adjoints $\bj_!\dashv\bj^*\dashv\bj_*\dashv\bj^!$ and $\bi_!\dashv\bi^*\dashv\bi_*$. We first describe the inverse image functors:
  \begin{align*}
    \bj^*\prn{E,\pi_E} &= E\\ 
    \bi^*\prn{E,\pi_E} &= \operatorname{dom}\pi_E
  \end{align*}
  
  Next we compute all the other adjoints.
  \begin{align*}
    \color{gray}\bj^*\dashv\normalcolor \bj_*E 
    &= \prn{E, \Mor[\Idn{\brho^*E}]{\brho^*E}{\brho^*E}}
    \\ 
    \color{gray}\bj^*\vdash\normalcolor \bj_!E 
    &= \prn{E,\Mor[!\Sub{\brho^*E}]{\ObjInit{\JDG[\ModR]}}{\brho^*E}}
    \\
    \color{gray}\bj_*\dashv\normalcolor \bj^!G 
    &= \brho_*\bi^*G\times\Sub{\brho_*\brho^*\bj^*G} \bj^*G
    \\\\ 
    \color{gray}\bi^*\dashv\normalcolor\bi_*R 
    &= \prn{\ObjTerm{\JDG[\Mod]},\Mor[!\Sub{R}]{R}{\brho^*\ObjTerm{\JDG[\Mod]}}}
    \\ 
    \color{gray}\bi^*\vdash\normalcolor\bi_!R
    &= \prn{\brho_!R,\Mor[\eta\Sub{R}]{R}{\brho^*\brho_!R}}
  \end{align*}
\end{exegesis}

\begin{exegesis}
  The additional left adjoint $\bi_!\dashv\bi^*$ will play an important role; it
  is uniquely determined by the property of sending representables $\Gamma$ to a
  space of ``variable renamings'' for $\Gamma$ in $\Sh{\GlTop}$. If we think of
  $\Gamma$ as a context, then an element of $\bi_!\Yo{\Gamma}$ can be thought of
  as representing a sequence of variables that can be substituted for those
  classified by $\Gamma$.
\end{exegesis}

\begin{lemma}\label{lem:display-of-variables-square}
  The following square commutes up to isomorphism:
  \[
    \begin{tikzpicture}[diagram]
      \node (nw) {$\CX[\ModR]$};
      \node[below = of nw] (sw) {$\CX[\Mod]$};
      \node[right = of nw] (nc) {$\JDG[\ModR]$};
      \node[right = of nc] (ne) {$\Sh{\GlTop}$};
      \node[below = of ne] (se) {$\JDG[\Mod]$};
      \draw[->] (nw) to node[above] {$\Yo[\CX[\ModR]]$} (nc);
      \draw[->] (nc) to node[above] {$\bi_!$} (ne);
      \draw[->] (nw) to node[left] {$\rho$} (sw);
      \draw[->] (sw) to node[below] {$\Yo[\CX[\Mod]]$} (se);
      \draw[->] (ne) to node[right] {$\bj^*$} (se);
    \end{tikzpicture}
  \]
\end{lemma}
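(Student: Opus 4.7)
The plan is to chase the explicit formulas given in the preceding exegesis. Unwinding the definitions, we see that for any $R \in \JDG[\ModR]$,
\[
\bj^*\bi_! R = \bj^*\bigl(\brho_! R,\;\eta_R : R \to \brho^*\brho_! R\bigr) = \brho_! R,
\]
so the composite functor $\bj^* \circ \bi_!$ is naturally isomorphic to $\brho_!$. Hence it suffices to prove that $\brho_! \circ \Yo[\CX[\ModR]] \cong \Yo[\CX[\Mod]] \circ \rho$.

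Under the geometry--algebra duality, the essential morphism of topoi $\brho$ is induced by the adjoint triple $\rho_! \dashv \rho^* \dashv \rho_*$, so $\brho_!$ coincides with the left Kan extension $\rho_!$ along the underlying functor $\rho : \CX[\ModR] \to \CX[\Mod]$. Thus the desired isomorphism reduces to the classical fact that left Kan extension of a Yoneda embedding along $\rho$ recovers Yoneda precomposed with $\rho$. First I would verify this directly using the Yoneda lemma: for any $\Delta \in \CX[\ModR]$ and any test presheaf $F \in \JDG[\Mod]$, the adjunction $\rho_! \dashv \rho^*$ gives
\[
\Hom{\JDG[\Mod]}{\rho_!\Yo[\CX[\ModR]]\Delta}{F} \cong \Hom{\JDG[\ModR]}{\Yo[\CX[\ModR]]\Delta}{\rho^*F} \cong (\rho^*F)(\Delta) = F(\rho\Delta),
\]
which by a second application of Yoneda identifies the outer terms with $\Hom{\JDG[\Mod]}{\Yo[\CX[\Mod]]\rho\Delta}{F}$, naturally in $F$ and $\Delta$.

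Chaining the two isomorphisms above then yields $\bj^*\bi_!\Yo[\CX[\ModR]]\Delta \cong \Yo[\CX[\Mod]]\rho\Delta$, naturally in $\Delta$, which is exactly the commutativity up to isomorphism asserted by the lemma. There is no real obstacle here: the only point requiring care is to confirm that the ``$\brho_!$'' appearing in the formula for $\bi_!$ in the exegesis is genuinely the left Kan extension $\rho_!$ of presheaves (as opposed to, say, the pushforward along a geometric morphism without the extra left adjoint), but this is exactly the content of the observation preceding \cref{lem:geometric-surjection} that $\brho$ is \emph{essential}, with the additional left adjoint supplied by left Kan extension along $\rho$.
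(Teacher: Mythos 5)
Your proof is correct. The paper states this lemma without giving a proof, and your argument — reading off $\bj^*\bi_!\cong\brho_!$ from the explicit formulas for the adjoints in the exegesis, identifying $\brho_!$ with left Kan extension along the underlying functor $\rho$ (which is exactly what "essential" supplies), and then using the standard Yoneda computation to get $\rho_!\circ\Yo[\CX[\ModR]]\cong\Yo[\CX[\Mod]]\circ\rho$ — is precisely the intended routine verification.
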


\subsubsection{Recollement of computability spaces}\label{sec:recollement}

SGA~4 explains how the construction of $\GlTop$ by gluing along
$\Mor[\brho]{\ClTop{\ModR}}{\ClTop{\Mod}}$ corresponds, in reverse, to the
\emph{partitioning} of the topos $\GlTop$ into complementary open and closed
subtopos~\citep{sga:4}. Under the latter viewpoint, the open and closed
immersions become identified with the \emph{inclusion} of the corresponding
open and closed subtopoi. We will use this perspective to develop a more
convenient language for constructing computability spaces intrinsically in the
language of $\Sh{\GlTop}$ without bothering with the complex families of
presheaves by which we originally defined computability spaces
(\cref{def:computability-space}).

\begin{definition}[Opens of a topos]
  An \emph{open} of a topos $\XTop$ is defined to be a subterminal sheaf on
  that topos, \ie a subobject of $\ObjTerm{\Sh{\XTop}}$. We will write
  $\Opns{\XTop}\subseteq\Sh{\XTop}$ for the poset (frame, in fact) of opens of
  $\XTop$.
\end{definition}

\begin{definition}
  Let $U\in\Opns{\XTop}$ be an open of a topos $\XTop$; then a sheaf $E$ is
  called \DefEmph{$U$-modal} when the canonical map $\Mor{E}{E^U}$ is an
  isomorphism.  Conversely, a sheaf $E$ is called \DefEmph{$U$-connected} when
  the projection map $\Mor{E\times U}{U}$ is an isomorphism.
\end{definition}

\begin{fact}[$U$-modal and $U$-connected reflection]
  For an open $U\in\Opns{\XTop}$, the full subcategories of $\Sh{\XTop}$
  spanned by $U$-modal and $U$-connected sheaves are reflective.
  \begin{enumerate}
    \item The $U$-modal reflection of a sheaf $X\in\Sh{\XTop}$ is given by the exponential $X^U$.
    \item The $U$-connected reflection of a sheaf $X\in\Sh{\XTop}$ is given by the pushout of the product span $X\xleftarrow{\pi_1}X\times U\xrightarrow{\pi_2}U$.
  \end{enumerate}
\end{fact}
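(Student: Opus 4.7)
The plan is to treat each reflection separately by the standard two-step recipe: first exhibit the proposed reflection as an object of the appropriate full subcategory, then verify the universal property of the unit. The crucial technical input throughout is that since $U$ is subterminal, the diagonal $U \to U \times U$ is an isomorphism, so $U \times U \cong U$, and hence $(-) \times U$ is idempotent up to isomorphism.

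For the $U$-modal reflection, I would take the reflection unit $\eta_X \colon X \to X^U$ to be the exponential transpose of $\pi_1 \colon X \times U \to X$. That $X^U$ is itself $U$-modal follows immediately from the chain of isomorphisms $(X^U)^U \cong X^{U \times U} \cong X^U$, with the reflection unit computed along this chain agreeing with $\eta_{X^U}$ by a transpose check. For the universal property, given $f \colon X \to Y$ with $Y$ a $U$-modal sheaf, I would define the factorization $\bar f \colon X^U \to Y$ as $\eta_Y^{-1} \circ f^U$, where $\eta_Y$ is invertible by hypothesis. Naturality of $\eta$ in the form $f^U \circ \eta_X = \eta_Y \circ f$ immediately gives $\bar f \circ \eta_X = f$, and uniqueness follows because $X \to X^U$ is an epimorphism in the subcategory of $U$-modal sheaves (which one reads off from the same naturality square).

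For the $U$-connected reflection, write $L(X)$ for the pushout of the product span. To show $L(X)$ is $U$-connected, I use that $(-) \times U$ preserves colimits, being a left adjoint to $(-)^U$. So $L(X) \times U$ is the pushout of $(X \times U) \xleftarrow{\pi_1 \times U} X \times U \times U \xrightarrow{\pi_2 \times U} U \times U$. Under the isomorphism $X \times U \times U \cong X \times U$ induced by $U \cong U \times U$, the left leg becomes the identity, so the pushout collapses to the codomain of the right leg, namely $U \times U \cong U$; and one checks that this identification is the second projection $L(X) \times U \to U$. For the universal property, given $f \colon X \to Y$ with $Y$ a $U$-connected sheaf, pulling back to the slice $\Sh{\XTop}/U$ (which is where $U$-connectedness is conveniently expressed as $u^* Y \cong 1_{/U}$) shows that $\mathrm{Hom}(U, Y)$ and $\mathrm{Hom}(X \times U, Y)$ are both singletons. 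In particular there is a canonical $g \colon U \to Y$, and any two maps $X \times U \to Y$ coincide — so the compatibility $f \circ \pi_1 = g \circ \pi_2$ is automatic, and the pushout property yields the unique factorization $\bar f \colon L(X) \to Y$.

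The main obstacle I expect to be a little fiddly is verifying that the collapsing of $L(X) \times U$ really produces the projection map to $U$, rather than just some abstract isomorphism; the clean way to manage this is to carry out the whole computation in the slice $\Sh{\XTop}/U$, where $u^*$ of the defining span of $L(X)$ becomes a span with one leg the identity and the other leg the terminal projection, making both the object $u^* L(X) \cong 1_{/U}$ and the projection map manifest. The verifications of universality are then essentially formal consequences of fully faithfulness of the reflectors for the two complementary modalities induced by $U$.
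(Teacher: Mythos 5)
Your proposal is correct, and it is essentially the standard argument: the paper records this statement as a Fact without proof, since it is classical open/closed subtopos material (recollement à la SGA~4 and Johnstone), so there is no paper proof to diverge from. Both halves of your argument — idempotency of $(-)^U$ and of $(-)\times U$ via $U\times U\cong U$, and the universal properties checked by transposition resp.\ by slicing over $U$ — are exactly how this is usually done, and your observation that the connected case is cleanest in $\Sh{\XTop}/U$, where $U$-connectedness becomes $u^*Y\cong 1$, is the right way to handle the ``fiddly'' identification of $L(X)\times U\to U$.

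One spot to tighten: in the uniqueness part of the $U$-modal case, the claim that $\eta_X\colon X\to X^U$ is epic with respect to $U$-modal targets does not follow from the naturality square alone. Given $g,h\colon X^U\to Y$ with $g\circ\eta_X=h\circ\eta_X$, naturality gives $g=\eta_Y^{-1}\circ g^U\circ\eta_{X^U}$ and likewise for $h$, and applying $(-)^U$ to the hypothesis gives $g^U\circ(\eta_X)^U=h^U\circ(\eta_X)^U$; to conclude you need the well-pointedness identity $(\eta_X)^U=\eta_{X^U}$ (equivalently, that $\eta_X\times\mathrm{id}_U\colon X\times U\to X^U\times U$ is invertible), which is a second, separate use of subterminality of $U$ --- a short transpose computation, but not something you can read off the square you already drew.
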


The $U$-modal and $U$-connected reflections preserve finite limits.
Therefore, the full subcategories of $U$-modal and $U$-connected sheaves
present \emph{subtopoi}; the subtopos of $U$-modal sheaves is referred to as
the \DefEmph{open subtopos} determined by $U$ and the subtopos of
$U$-connected sheaves is referred to as the \DefEmph{closed subtopos}
determined by $U$.

\begin{fact}[Open subtopos as slice]
  Based on the definition of the $U$-modal reflector, it is not difficult to
  see that the slice category $\Sl*{\Sh{\XTop}}{U}$ may be canonically
  identified with the category of sheaves on the open subtopos of $\XTop$
  determined by $U$.
\end{fact}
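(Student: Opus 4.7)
The plan is to exhibit the equivalence concretely by functors in both directions, relying on cartesian closure and on the fact that $U$ is subterminal. I would define $\Phi : \Sl*{\Sh{\XTop}}{U} \to \Sh{\XTop}$ by $\Phi\prn{\Mor[p]{X}{U}} = X^U$, observing that the target is automatically $U$-modal since $\prn{-}^U$ realizes the $U$-modal reflector; and I would define $\Psi$ the other way by sending a $U$-modal sheaf $E$ to the projection $\Mor[\pi_2]{E \times U}{U}$, which tautologically lies in the slice.

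The composite $\Phi\Psi$ is immediate: by cartesian closure $\prn{E\times U}^U \cong E^U \times U^U$, and $U^U \cong \ObjTerm{\Sh{\XTop}}$ because $U^U$ is itself subterminal (every hom into a subterminal is subsingleton) and admits a global section classifying $\Idn{U}$; hence $\Phi\Psi\prn{E}\cong E^U \cong E$, the last isomorphism by the hypothesis of $U$-modality of $E$.

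The principal obstacle is the converse composite $\Psi\Phi$, which requires an isomorphism $X^U \times U \cong X$ in the slice over $U$ for any $\Mor[p]{X}{U}$. I would argue by Yoneda: a map $\Mor{Y}{X^U \times U}$ corresponds to a pair $\prn{\Mor[g]{Y\times U}{X},\Mor[h]{Y}{U}}$, while a map $\Mor[f]{Y}{X}$ determines the pair $\prn{f\circ\pi_1,\,p\circ f}$; the inverse sends $\prn{g,h}$ to the composite $g\circ\langle\Idn{Y},h\rangle$. Subterminality of $U$ forces both $p\circ g = \pi_2$ and $h\circ\pi_1 = \pi_2$, which is exactly the identity required for the round trips to collapse. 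Naturality in $Y$ and the commutation with the projections to $U$ follow routinely, giving the desired isomorphism in $\Sl*{\Sh{\XTop}}{U}$ and with it the identification of the open subtopos determined by $U$ with the slice.
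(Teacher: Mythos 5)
Your proof is correct, and it is exactly the elementary verification the paper leaves implicit when it says the identification "is not difficult to see" from the $U$-modal reflector: one direction is $(X\to U)\mapsto X^U$ (the direct image/reflection), the other is $E\mapsto (E\times U\to U)$ (the inverse image), and both round trips collapse because any two parallel maps into the subterminal $U$ coincide, which gives $U\times U\cong U$, $U^U\cong 1$, and the Yoneda-style isomorphism $X^U\times U\cong X$ over $U$. Nothing is missing beyond the routine naturality checks you already flag.
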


\begin{observation}[Lawvere--Tierney topologies]
  The open and closed subtopi can equivalently be described by Lawvere--Tierney
  topologies on $\XTop$, which simply internalize the reflectors as endomaps of
  the subobject classifier.
  \begin{enumerate}

    \item The topology of the open subtopos is given by the map $\Sl{j}{U} \phi
            = U \Rightarrow \phi$.

    \item The topology of the closed subtopos is given by $j\Sub{\setminus
              {U}}\phi = U\lor \phi$

  \end{enumerate}
\end{observation}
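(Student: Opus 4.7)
The plan is to unfold each Lawvere--Tierney topology as the restriction to the subobject classifier $\Omega$ of the corresponding reflector, which the preceding fact already computes in closed form; the rest is a direct calculation in the internal logic. Recall that a Lawvere--Tierney topology on $\XTop$ is the same data as a left-exact idempotent monad on $\Omega$, and that if $\Mor[L]{\Sh{\XTop}}{\Sh{\XTop}}$ is the reflector for an open or closed subtopos, then the associated topology $\Mor[j]{\Omega}{\Omega}$ is the map whose value at $\phi \leq \ObjTerm{}$ classifies the subterminal $L\phi \leq L\ObjTerm{} = \ObjTerm{}$.

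For the open case, the preceding fact tells us that the $U$-modal reflector is exponentiation by $U$, \ie $L\Sub{U}X = X^U$. To compute $\Sl{j}{U}$, I would restrict this to $\phi \leq \ObjTerm{}$: the object $\phi^U$ is again subterminal, and for any $Y$ a map $\Mor{Y}{\phi^U}$ is equivalently a map $\Mor{Y\times U}{\phi}$, which in turn exists uniquely iff $Y\times U \leq \phi$. That is precisely the defining property of $U \Rightarrow \phi$ in the Heyting algebra $\Opns{\XTop}$, so $\phi^U = U\Rightarrow \phi$ and hence $\Sl{j}{U}\phi = U\Rightarrow\phi$.

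For the closed case, the $U$-connected reflector is given by the pushout $L\Sub{\setminus U}X = X \sqcup\Sub{X\times U} U$. Specializing to $\phi \leq \ObjTerm{}$ and using that products and pushouts of subterminals in a topos remain subterminal (because $\Opns{\XTop}$ is a reflective sub-poset of $\Sh{\XTop}$ closed under finite limits and with colimits computed via the monomorphic reflection), I would observe that $\phi \times U = \phi \wedge U$ and that the pushout $\phi \sqcup\Sub{\phi\wedge U} U$ in subterminals is the join $\phi \vee U$. Thus $j\Sub{\setminus U}\phi = \phi\vee U = U\vee \phi$.

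The only real subtlety is to justify that the subterminal reflection of the pushout formula agrees with the join in $\Opns{\XTop}$; I would handle this by noting that the pushout of a diagram of monomorphisms into $\ObjTerm{}$ can be computed by taking the image of the induced map $\phi + U \twoheadrightarrow \ObjTerm{}$, which is exactly $\phi \vee U$, and that this image sits inside $\ObjTerm{}$ because each injection does. Left-exactness and idempotence of both topologies are automatic from the corresponding properties of the reflectors established in the preceding fact, so no further verification is required at the level of $\Omega$.
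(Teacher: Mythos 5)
The paper offers no proof of this observation --- it is quoted as a standard piece of recollement theory (SGA~4, Johnstone) --- so there is no ``paper approach'' to compare against beyond the framing that the topology internalizes the reflector; your attempt has to be judged on its own. Your computations at the subterminal level are correct: $\phi^U$ is subterminal and has exactly the universal property of Heyting implication in $\Opns{\XTop}$, and the pushout $\phi \sqcup_{\phi\wedge U} U$ is the union $\phi\vee U$. For the latter, your image-of-$\phi+U$ argument is the right justification (effectivity of unions in a topos); the parenthetical appeal to $\Opns{\XTop}$ being a reflective sub-poset would not by itself do, since colimits of subterminals computed in $\Sh{\XTop}$ need not be subterminal (e.g.\ $\ObjTerm{}+\ObjTerm{}$).

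The genuine gap is your very first step: you take the topology attached to the reflector $L$ to be ``the map $\Omega\to\Omega$ whose value at a subterminal $\phi$ classifies $L\phi$''. A morphism $\Omega\to\Omega$ is not determined by its values on subterminals, i.e.\ on global elements of $\Omega$: distinct Lawvere--Tierney topologies can induce the same closure on $\mathrm{Sub}(\ObjTerm{})$ (already over the monoid $(\mathbb{N},+)$, where $\mathrm{Sub}(\ObjTerm{})$ has two elements, the trivial topology and the topology in which every nonempty right ideal covers agree there but differ as topologies). The topology of a subtopos is the classifying map of the induced closure operator on $\mathrm{Sub}(X)$ for \emph{all} $X$ --- equivalently it classifies the dense monomorphisms, with the closure of $A'\leq X$ given by pulling back $LA'\to LX$ along the unit --- so what must be computed is $j$ on generalized elements $X\to\Omega$, not just on points. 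Your calculation does extend, because both reflectors are stable under slicing: in $\Sh{\XTop}/X$ the open reflector is exponentiation by $X^*U$ and the closed reflector is the pushout along $X^*U$, and the same two arguments then give closure $X^*U\Rightarrow A'$ and $X^*U\vee A'$ in $\mathrm{Sub}(X)$, naturally in $X$, which is precisely the internal statement $j_U\phi=U\Rightarrow\phi$ and $j_{\setminus U}\phi=U\vee\phi$. Until you add that relativization (or, alternatively, verify directly that the $j_U$-sheaves are exactly the $U$-modal objects and the $j_{\setminus U}$-sheaves exactly the $U$-connected ones), the stated formulas for the topologies as endomaps of the subobject classifier have not actually been established.
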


\begin{construction}[Recollement of the topos of computability spaces]\label{con:topos-recollement}
  What we take from SGA~4~\citep{sga:4} is that up to categorical equivalence,
  we may reconstruct the gluing data for our own topos $\GlTop$ from a certain
  open $\P\in\Opns{\GlTop}$, which can be equivalently described by either $\P
    = \bi_*\bot$ or $\P=\bj_!\top$. As a subterminal computability space, $\P$ is
  the family $\prn{\ObjTerm{\JDG[\Mod]},
      \Mor{\ObjInit{\JDG[\ModR]}}{\brho^*\ObjTerm{\JDG[\Mod]}}}$.
  It is then not difficult to see the following:
  \begin{enumerate}

    \item The essential image of $\EmbMor[\bj_*]{\JDG[\Mod]}{\Sh{\GlTop}}$ is
          exactly the full subcategory spanned by $\P$-modal computability spaces.
          Under this identification, the $\P$-modal reflection takes a
          computability space $X$ to $\bj^*X$.

    \item The essential image of $\EmbMor[\bi_*]{\JDG[\ModR]}{\Sh{\GlTop}}$ is
          exactly the full subcategory spanned by $\P$-connected computability
          spaces. Under this identification, the $\P$-connected reflection takes a
          computability space $X$ to $\bi^*X$.

  \end{enumerate}

  Finally, we have a functor from $\P$-modal sheaves to $\P$-connected sheaves
  taking $\P$-modal $E$ to the $\P$-connected reflection of $E$. Under the
  identifications above, this functor is isomorphic to
  $\Mor[\brho^*]{\JDG[\Mod]}{\JDG[\ModR]}$. Thus the open $\P\in\Opns{\GlTop}$
  controls \emph{all} the gluing data of $\GlTop$ except for the additional
  fact that $\brho^*$ happens to be the inverse image component of an essential
  morphism of topoi.
\end{construction}

\begin{fact}[Recollement of computability spaces]\label{fact:mini-recollement}
  Just as \cref{con:topos-recollement} shows that the topos of computability
  spaces can be reconstructed from the induced open and closed subtopoi,
  something similar can be said of each individual computability space. In
  particular, for any $X\in\Sh{\GlTop}$ the following square is always cartesian:
  \[
    \DiagramSquare{
      nw/style = pullback,
      width = 2.5cm,
      nw = X,
      sw = \bj_*\bj^*X,
      ne = \bi_*\bi^*X,
      se = \bi_*\bi^*\bj_*\bj^*X,
      west = \eta\Sub{X}\Sup{\bj_*\bj^*},
      south = \eta\Sub{bj_*\bj^*X}\Sup{\bi_*\bi^*},
      east = \bi_*\bi^*\eta\Sub{X}\Sup{\bj_*\bj^*},
      north = \eta\Sub{X}\Sup{\bi_*\bi^*},
    }
  \]
\end{fact}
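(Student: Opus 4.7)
The plan is to verify the square is cartesian by checking it componentwise under the jointly conservative pair of inverse image functors $\bj^*$ and $\bi^*$. Joint conservativity is immediate from \cref{obs:comma-computation}: since sheaves on $\GlTop$ are computability spaces $\prn{E,\pi_E}$ with $\bj^*\prn{E,\pi_E} = E$ and $\bi^*\prn{E,\pi_E}$ the domain of $\pi_E$, a morphism of computability spaces is an isomorphism exactly when both of its components are. Because $\bj^*$ and $\bi^*$ both preserve finite limits as inverse image parts of geometric morphisms, it suffices to verify that the image square is cartesian in $\JDG[\Mod]$ and in $\JDG[\ModR]$ respectively.

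First I would apply $\bj^*$. Fully faithfulness of $\bj_*$ gives $\bj^*\bj_*\cong\Idn{\JDG[\Mod]}$, and the explicit formula $\bi_* R = \prn{\ObjTerm{\JDG[\Mod]},!\Sub{R}}$ immediately yields $\bj^*\bi_* R \cong \ObjTerm{\JDG[\Mod]}$ naturally in $R$. In the resulting square the left vertical map is an isomorphism and the right-hand column is constantly terminal, so it is trivially a pullback.

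Next I would apply $\bi^*$. Fully faithfulness of $\bi_*$ gives $\bi^*\bi_*\cong\Idn{\JDG[\ModR]}$, so both horizontal maps of the resulting square are $\bi^*$-images of units of $\bi^*\dashv\bi_*$ at objects already in the essential image of $\bi_*$, hence are isomorphisms; the square is again trivially cartesian. Tracing through the computability-space description, $\bi^*$ of the left vertical reduces to the structure map $\pi\Sub{X}\colon\bi^*X \to \brho^*\bj^*X$ of $X$ itself, which is the concrete content of the fracture. I anticipate no substantive obstacle here: the argument is a standard recollement manipulation, and the only identity worth checking attentively --- $\bj^*\bi_*\cong\ObjTerm{\JDG[\Mod]}$, expressing the complementarity of the open and closed immersions --- falls out directly from the adjoint formulas displayed in the exegesis following \cref{obs:comma-computation}.
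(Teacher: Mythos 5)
The paper states this as a \emph{Fact} and supplies no proof at all; it is the standard recollement fracture square for a complementary open/closed pair (essentially SGA~4), which the author takes as known. So there is no paper argument to compare against, only the claim itself. Your argument is the standard proof and it is essentially correct: check cartesianness under the jointly conservative, finite-limit-preserving pair $\bj^*, \bi^*$ (joint conservativity is indeed immediate from the identification of $\Sh{\GlTop}$ with the comma category in \cref{obs:comma-computation}); under $\bj^*$, full faithfulness of $\bj_*$ makes the left vertical invertible and $\bj^*\bi_*\cong\ObjTerm{}$ makes the right column constantly terminal; under $\bi^*$, both horizontals become invertible.

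One small slip in the final step: you say the two horizontals after applying $\bi^*$ are ``$\bi^*$-images of units of $\bi^*\dashv\bi_*$ at objects already in the essential image of $\bi_*$,'' but neither $X$ nor $\bj_*\bj^*X$ need lie in that essential image. The correct (and no harder) justification is that $\bi^*\eta^{\bi_*\bi^*}_Y$ is invertible for \emph{every} $Y$: by the triangle identity $\epsilon_{\bi^*Y}\circ\bi^*\eta^{\bi_*\bi^*}_Y = \Idn{}$, and since $\bi_*$ is fully faithful the counit $\epsilon$ is an isomorphism, so $\bi^*\eta^{\bi_*\bi^*}_Y$ is its inverse. This is a phrasing issue, not a gap; with that adjustment the proof is complete.
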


The import of \cref{fact:mini-recollement} is that it shows that any sheaf on
$\GlTop$ can be constructed entirely in terms of (left exact, idempotent)
monads on $\Sh{\GlTop}$ without bringing either $\JDG[\Mod]$ nor $\JDG[\ModR]$
into the picture.

\subsubsection{The internal language of computibility spaces}

Although we will not expose them all in this paper, there are a number of
somewhat technical constructions of computability spaces that must ultimately
be carried out. As these constructions are \emph{relative} in nature and must
constantly move between slices of $\Sh{\GlTop}$, we may simplify things
considerably by recalling from \cref{sec:relative-pov} that \DefEmph{type
  theoretic internal languages} are the appropriate linguistic foundation for the
relative point of view.

It happens that \emph{all} the constructions of \cref{sec:recollement} are
stable under slicing, and can therefore be incorporated in a type theoretic
internal language. As a result, we may rephrase the results of
\cref{sec:recollement} as statements in the internal language of $\Sh{\GlTop}$
by adopting the following single postulate:

\begin{postulate}\label{postulate:open}
  There exists a proposition $\P$, \ie a type $\P$ satisfying the condition
  that every two of its elements are equal. We will write $\OpMod$ for the
  reflection of $\P$-modal types; we will write $\ClMod$ for the reflection of
  $\P$-connected types. We additionally assume that $\OpMod{A} = A$ strictly
  when $\P=\top$.\footnote{This final assumption can be removed, but it is
    convenient for our presentation.}
\end{postulate}

\begin{notation}[Subuniverses of modal types]
  Given a universe $\UU$, we will write $\UU\Sub{\OpMod}$ and $\UU\Sub{\ClMod}$
  for the subuniverses spanned by $\P$-modal and $\P$-connected types
  respectively. Note that unlike in univalent
  foundations~\citep{rijke-shulman-spitters:2020}, it is not the case that
  $\UU\Sub{\ClMod}$ is itself $\P$-connected nor that $\UU\Sub{\OpMod}$ is
  $\P$-modal.
\end{notation}

The language of type theory extended by \cref{postulate:open} is referred to by
\citet{sterling:2021:thesis} as \DefEmph{synthetic Tait computability}, because
it generates as if from the void an abstract form of Tait's computability out
of the dynamics of $\P$-modal and $\P$-connected types in the internal
language, as these correspond under the computability spaces interpretation to
the syntactic components and their semantic instrumentations respectively.
Indeed, the internal / type theoretic version of the recollement of
computability spaces (\cref{fact:mini-recollement}) is the following
\cref{obs:synth-recollement}, formally deducible in synthetic Tait
computability.

\begin{observation}[Recollement of computability spaces, synthetically]\label{obs:synth-recollement}
  For any type $X$, the canonical ``fracture function'' defined below is an
  isomorphism:
  \begin{align*}
    X & \to \Sum{x_0:\OpMod{X}}\Compr{x_1 : \ClMod{X}}{
      \eta\Sup{\ClMod}\Sub{\OpMod{X}}x_0 =\Sub{\ClMod\OpMod{X}} \ClMod\eta\Sup{\OpMod}_X x_1
    }
    \\
    x & \mapsto \prn{
      \eta\Sup{\OpMod}_Xx,
      \eta\Sup{\ClMod}_Xx
    }
  \end{align*}
\end{observation}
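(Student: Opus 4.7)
The plan is to identify the internal fracture function with the external pullback square of \cref{fact:mini-recollement}. Under the semantic interpretation underlying \cref{postulate:open}, the open proposition $\P\in\Sh{\GlTop}$ corresponds to $\bj_!\top = \bi_*\bot$; the $\P$-modal reflection $\OpMod$ corresponds to the idempotent monad $\bj_*\bj^*$; and the $\P$-connected reflection $\ClMod$ corresponds to $\bi_*\bi^*$. The target dependent sum, as a subtype of $\OpMod{X}\times\ClMod{X}$ cut out by a propositional equality in $\ClMod\OpMod{X}$, is interpreted externally as the pullback $\bj_*\bj^*X\times_{\bi_*\bi^*\bj_*\bj^*X}\bi_*\bi^*X$.

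First I would check that the displayed function is well-typed, i.e.\ that the pair $\prn{\eta\Sup{\OpMod}_Xx,\,\eta\Sup{\ClMod}_Xx}$ satisfies the required coherence. Unfolding, the needed witness is
\[
  \eta\Sup{\ClMod}\Sub{\OpMod{X}}\prn{\eta\Sup{\OpMod}_Xx} = \ClMod\prn{\eta\Sup{\OpMod}_X}\prn{\eta\Sup{\ClMod}_Xx},
\]
which is an instance of the naturality of $\eta\Sup{\ClMod}$ applied to the morphism $\eta\Sup{\OpMod}_X : X \to \OpMod{X}$. Externally this is exactly the outer commutativity of the square in \cref{fact:mini-recollement}, and the fracture function itself corresponds to the canonical comparison map from $X$ into that pullback.

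Then I would invoke \cref{fact:mini-recollement} directly: it states precisely that this external comparison is an isomorphism. Because all the operations involved (the two modal reflectors, their units, and pullbacks along them) commute with slicing, and because ``for any type $X$'' in synthetic Tait computability is interpreted as ``for any object in any slice of $\Sh{\GlTop}$'', the external isomorphism upgrades uniformly to an internal one.

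The main obstacle is purely one of careful bookkeeping: matching the internal dependent sum against the external pullback requires verifying that the equality $\eta\Sup{\ClMod}\Sub{\OpMod{X}}x_0 = \ClMod\eta\Sup{\OpMod}_X x_1$ really behaves as a proposition (so that the comprehension cuts out the pullback rather than some higher equaliser), which is automatic at the set-level used throughout this paper. One must also check that $\bi_*\bi^*$ and $\bj_*\bj^*$ preserve the pullback used to form the target type, but this is immediate from left exactness of the two reflectors. Once these identifications are made, the statement is a direct transcription of \cref{fact:mini-recollement} into the internal language.
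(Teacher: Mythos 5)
Your argument is correct, but it takes a different route from the one the paper gestures at. The paper offers no written-out proof: it frames \cref{obs:synth-recollement} as the internal counterpart of \cref{fact:mini-recollement} and asserts that it is \emph{formally deducible in synthetic Tait computability}, i.e.\ provable purely from \cref{postulate:open} inside the type theory (this is the open/closed fracture theorem in the style of Rijke--Shulman--Spitters), so that the statement holds in \emph{any} model of the postulates, not just $\Sh{\GlTop}$. You instead argue semantically: interpret $\OpMod$ and $\ClMod$ as $\bj_*\bj^*$ and $\bi_*\bi^*$ (which is exactly the identification of \cref{con:topos-recollement}), read the dependent sum with its propositional equation as the pullback $\bj_*\bj^*X\times_{\bi_*\bi^*\bj_*\bj^*X}\bi_*\bi^*X$, check by naturality of the units that the fracture map is the canonical comparison into that pullback, invoke \cref{fact:mini-recollement}, and then use stability under slicing to cover arbitrary contexts. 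That externalization is sound for the purposes of this paper (the ambient internal language is only ever interpreted in $\Sh{\GlTop}$ and its slices, and the paper itself notes that all the relevant constructions are slice-stable), and your attention to the equation being a proposition at set level is the right bookkeeping point; your remark about the reflectors preserving the pullback is superfluous, since only the standard interpretation of $\Sigma$- and identity types is needed. What the synthetic route buys, and what your proof gives up, is model-independence: proved internally, the observation is a theorem about any topos (or slice-stable type theory) equipped with a proposition $\P$, which is how the paper wants to use the language of synthetic Tait computability; derived externally, it is a statement about this particular glued topos that happens to be phrased internally.
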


\begin{notation}\label{notation:bag}
  For any $\P$-modal type $A$, the map $k_A : A\to\prn{\P\to A}$ is invertible
  by definition. We will permit the following abuse of notation: when
  constructing an element of a $\P$-modal type $A$, we will write $\bags{a}$ to
  mean $k_A\Sup{-1}\prn{\lambda{\_}. a}$. Thus inside the delimiter, we
  implicity bind a variable $\_:\P$.
\end{notation}

\begin{notation}[Extension types]
  Let $A:\UU$ be a type and let $\_:\P\vdash a:A$ a be partial element of $A$.
  Then we shall write $\Ext{A}{a}$ for the subtype $\Compr{x:A}{\forall \_:\P.
      x = a}$, called the \DefEmph{extension type} after
  \citet{riehl-shulman:2017}.
\end{notation}

\begin{definition}[Vertical maps]
  If $A$ and $B$ are types such that $\OpMod\prn{A=B}$ holds, then we define a
  \DefEmph{vertical map} from $A$ to $B$ to be a function of the form
  $f:\Ext{\prn{A\to B}}{\lambda x.x}$.
\end{definition}

We refine the synthetic recollement of computability spaces
(\cref{obs:synth-recollement}) with a special type connective to build
computability spaces from their $\P$-modal and $\P$-connected components.

\begin{postulate}[Strict gluing~\citep{gratzer-shulman-sterling:2022:universes,sterling-harper:2022}]\label{postulate:strict-gluing}
  On any of the ambient universes $\UU$, we have a \DefEmph{strict gluing
    operation} that takes a $\P$-modal type $A:\UU\Sub{\OpMod}$ and a family of
  $\P$-connected types $B:A\to \UU\Sub{\ClMod}$ to a type
  $\GlTp{x:A}{Bx}:\Ext{\UU}{A}$ and an isomorphism $\Con{glue}\Sub{A,B} :
    \Ext{\Sum{x:A}Bx\cong \GlTp{x:A}{Bx}}{\pi_1}$.
\end{postulate}

\begin{notation}[Gluing projections and constructor]
  For $g:\GlTp{x:A}{Bx}$, the first projection
  $\pi_1\Con{glue}\Sub{A,B}\Sup{-1}g$ of $g:\GlTp{x:A}{Bx}$ can already be
  written $\bags{g}:A$. We shall write $\underline{g}:B\,\bags{g}$ for the
  second projection $\pi_2\Con{glue}\Sub{A,B}\Sup{-1}g$.  Given $a:A$ and $b:Ba$
  we shall write $\Glue{a}{b}$ for the element $\Con{glue}\Sub{A,B}\,\prn{a,b}$.
\end{notation}

\subsubsection{Internalizing the model of variables}

The model of variables $\Mor[\rho]{\ModR}{\Mod}$ can be internalized into the
synthetic Tait computability of $\Sh{\GlTop}$ by additional postulates.

\begin{postulate}[The base model]\label{postulate:base-model}
  There is a $\P$-modal universe $\prn{\TP,\EL}$ such that for each code
  $A:\TP$ the type $\EL\,{A}$ is $\P$-modal. Moreover, $\TP$ is closed under
  function spaces as well as a base type $\Con{O}:\TP$ and two constants
  $\Con{yes},\Con{no}:\EL\,\Con{O}$.
\end{postulate}

\begin{postulate}[The model of variables]\label{postulate:var} There is an
  additional decoding family $\Var$ on $\TP$ such that for each $A:\TP$, we
  have $\OpMod\prn{\Var\,{A} = \EL\,A}$ or equivalently $\P\Rightarrow
    \Var\,{A}=\EL\,A$.
\end{postulate}

\subsubsection{The computability space of normal forms}

With \cref{postulate:base-model,postulate:var} in hand, it becomes possible to
define a space of normal forms for types by means of an indexed inductive
definition --- or, for the more categorically inclined, as the initial algebra
for a certain polynomial endofunctor on a slice of the ambient universe \`a la
Fiore~\citep{fiore:2002}.  In what follows, we will let $\UU$ be a sufficiently
large universe in the ambient type theory so as to classify each $\EL\,A$ and
$\Var{A}$.

\begin{definition}\label{def:normal-form-algebra}
  A $\UU$-small \emph{normal form algebra} is defined to be a series
  of constants whose sorts we shall specify forthwith. First, a normal form
  algebra requires a sort $\Con{NfTp}$ of normal forms of types, and for each
  type $A:\TP$ a pair of sorts $\NfTm{A},\NeTm{A}$ classifying
  \DefEmph{normal} and \DefEmph{neutral} forms of elements of $A$.

  \iblock{
    \mrow{\NfTp : \Ext{\UU}{\TP}}
    \mrow{\NfTm : \Prod{A:\TP}\Ext{\UU}{\EL\,A}}
    \mrow{\NeTm : \Prod{A:\TP}\Ext{\UU}{\EL\,A}}
  }

  Next we require constructors for the normal forms of each type:

  \iblock{
    \mrow{
      \Con{nfO}
      :
      \Ext{\NfTp}{\Con{O}}
    }
    \mrow{
      \Con{nfFun}
      :
      \Prod{A,B:\NfTp}
      \Ext{\NfTp}{A\Rightarrow B}
    }
  }

  Finally we require constructors for neutral and normal forms of terms.

  \iblock{
    \mrow{
      \Con{neVar} : \Prod{A:\NfTp} \Prod{x:\Var\bags{A}} \Ext{\NeTm{A}}{x}
    }
    \mrow{
      \Con{neApp} :
      \Prod{A,B:\NfTp}
      \Prod{f:\NeTm{\bags{A\Rightarrow B}}}
      \Prod{x:\NfTm{\bags{A}}}
      \Ext{\NeTm{\bags{B}}}{f x}
    }

    \row

    \mrow{
      \Con{nfNeO} :
      \Prod{x:\NeTm{\Con{O}}}\Ext{\NfTm{\Con{O}}}{x}
    }

    \mrow{
      \Con{nfYes} :
      \Ext{\NfTm{\Con{O}}}{\Con{yes}}
    }

    \mrow{
      \Con{nfNo} :
      \Ext{\NfTm{\Con{O}}}{\Con{no}}
    }

    \mrow{
      \Con{nfLam} :
      \Prod{A,B:\NfTp}
      \Prod{f:\Var\,A\to\NfTm{\bags{B}}}
      \Ext{\NfTm{\bags{A\Rightarrow B}}}{\lambda x. f x}
    }

  }
\end{definition}

\begin{definition}
  Let $\mathfrak{M}$ and $\mathfrak{N}$ be two $\UU$-small normal form
  algebras. A \DefEmph{morphism of normal form algebras} from
  $\Mor[H]{\mathfrak{M}}{\mathfrak{N}}$ is given by functions between
  the three carriers

  \iblock{
    \mrow{
      H\Sub{\NfTp} :
      \Prod{A:\NfTp[\mathfrak{M}]}\Ext{\NfTp[\mathfrak{N}]}{A}
    }
    \mrow{
      H\Sub{\NfTm} :
      \Prod{A:\TP}
      \Prod{x:\NfTm[\mathfrak{M}]{A}} \Ext{\NfTm[\mathfrak{N}]{A}}{x}
    }
    \mrow{
      H\Sub{\NeTm} :
      \Prod{A:\TP}
      \Prod{x:\NeTm[\mathfrak{M}]{A}} \Ext{\NeTm[\mathfrak{N}]{A}}{x}
    }
  }

  \noindent
  that preserve all the operations of the normal form algebra in the sense of
  the following representative equations:

  \iblock{
    \mrow{
      H\Sub{\NfTp}\,\Con{nfO}\Sub{\mathfrak{M}} = \Con{nfO}\Sub{\mathfrak{N}}
    }
    \mrow{
      H\Sub{\NfTp}\,\prn{\Con{nfFun}\Sub{\mathfrak{M}}\,A\,B} =
      \Con{nfFun}\Sub{\mathfrak{N}}\,\prn{
        H\Sub{\NfTp}A,
        H\Sub{\NfTp}B
      }
    }
    \mhang{
      H\Sub{\NeTm}\,\bags{B}\,\prn{
        \Con{neApp}\Sub{\mathfrak{m}}\,
        A\,B\,f\,u
      }
      =
    }{
      \mrow{
        \Con{neApp}\Sub{\mathfrak{N}}\,
        \prn{H\Sub{\NfTp}A}\,
        \prn{H\Sub{\NfTp}B}\,
        \prn{
          H\Sub{\NeTm}\,\bags{A\Rightarrow B}\,f
        }\,
        \prn{
          H\Sub{\NfTm}\,\bags{A}\,u
        }
      }
    }
    \mrow{\ldots}
  }
\end{definition}

\begin{lemma}\label{lem:ext-sl-equiv}
  For $X:\UU\Sub{\OpMod}$, denote by $\Ext{\UU}{X}$ the category whose morphisms
  are given by \emph{vertical} maps. The functor
  $\Mor{\Ext{\UU}{X}}{\Sl*{\UU\Sub{\ClMod}}{X}}$ sending each $A:\Ext{\UU}{X}$
  to the family $\prn{x:X}\mapsto \Ext{A}{x}$ is an equivalence.
\end{lemma}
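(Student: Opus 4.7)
The plan is to use Postulate~\ref{postulate:strict-gluing} to define an inverse functor $G$, and then verify the two roundtrips using the strict gluing isomorphism together with fibre decomposition over $X$. Concretely, $G$ sends each family $B : X \to \UU\Sub{\ClMod}$ to $G(B) := \GlTp{x:X}{B\,x}$, which lies in $\Ext{\UU}{X}$ by the very typing of the strict gluing operation; on morphisms, a fibrewise map $h : B \to B'$ over $X$ is sent to $g \mapsto \Glue{\bags{g}}{h\,\bags{g}\,(\underline{g})}$, which is vertical because its $\P$-modal part acts as the identity $\bags{-}$.

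For the roundtrip $F \circ G \simeq \mathrm{id}$: given $B$ and $x:X$, an element of $\Ext{G(B)}{x}$ is some $g : \GlTp{x':X}{B\,x'}$ whose $\P$-projection $\bags{g}$ equals $x$ under $\P$; since $X$ is $\P$-modal, this equation promotes to a global equality $\bags{g} = x$, so $\underline{g}$ sits in $B\,x$. The assignment $g \mapsto \underline{g}$, inverse to $b \mapsto \Glue{x}{b}$, yields an isomorphism $\Ext{G(B)}{x} \cong B\,x$ natural in both arguments. For the other roundtrip $G \circ F \simeq \mathrm{id}$: each fibre $\Ext{A}{x}$ is $\P$-connected because under $\P$ one has $A \equiv X$ strictly by Postulate~\ref{postulate:open}, making $\Ext{A}{x}$ a singleton under $\P$; thus $F(A)$ genuinely takes values in $\Sl*{\UU\Sub{\ClMod}}{X}$. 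Fibre decomposition of the modal unit $\bags{-} : A \to X$ yields $A \cong \Sum{x:X}\Ext{A}{x}$ via $a \mapsto (\bags{a}, a)$, and composing with the strict gluing isomorphism gives $A \cong \GlTp{x:X}{\Ext{A}{x}} = G(F(A))$; this isomorphism is vertical because all three types share the common $\P$-modal reflection $X$.

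The main obstacle is verifying the \emph{strictness} requirements hidden in the extension and gluing types: both $F$ and $G$ land in types that assert definitional equalities under $\P$ rather than mere propositional identifications. The plan relies crucially on the two strict conditions built into the postulates --- that $\OpMod{A} = A$ strictly when $\P = \top$ (Postulate~\ref{postulate:open}), and that $\GlTp{x:X}{B\,x}$ sits strictly in $\Ext{\UU}{X}$ with a gluing isomorphism that strictly extends the first projection (Postulate~\ref{postulate:strict-gluing}). Once these strict equations are in place, the remaining checks of naturality, functoriality, and vertical-map preservation are routine unfoldings.
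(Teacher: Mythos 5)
Your proof is correct and, modulo level of detail, is the only reasonable way to unpack the paper's one-line proof, which simply cites Postulate~\ref{postulate:strict-gluing}. You construct the inverse functor via $B \mapsto \GlTp{x:X}{B\,x}$, use the gluing isomorphism together with the fibrewise description $A \cong \Sum{x:X}\Ext{A}{x}$ to verify both roundtrips, and correctly observe that $\P$-modality of $X$ promotes the partial equality $\P \to (\bags{g} = x)$ to a global one (the internal language here is extensional, so no transport issues arise) --- this is exactly the intended argument.
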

\begin{proof}
  This follows from \cref{postulate:strict-gluing}.
\end{proof}

\begin{lemma}
  The exists an \DefEmph{initial normal form algebra}.
\end{lemma}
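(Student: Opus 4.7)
The plan is to construct the initial normal form algebra as the initial algebra of a polynomial endofunctor, following Fiore's perspective on inductive types in presheaf toposes. First I would reformulate the data of a normal form algebra in a more conventional categorical shape. Using \cref{lem:ext-sl-equiv}, a carrier $\NfTp : \Ext{\UU}{\TP}$ is equivalently an object of $\Sl*{\UU\Sub{\ClMod}}{\TP}$, while $\NfTm$ and $\NeTm$ are equivalently objects of $\Sl*{\UU\Sub{\ClMod}}{\Sum{A:\TP}\EL\,A}$. Writing $\mathcal{A}$ for the product of these three slice categories, a normal form algebra then amounts to an object of $\mathcal{A}$ equipped with certain morphisms between dependent sums and products of its components.

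Second I would package those morphisms as the structure map of an algebra for a suitable endofunctor $F : \mathcal{A} \to \mathcal{A}$. The positions and directions of $F$ are read directly off \cref{def:normal-form-algebra}: for example, $\Con{nfFun}$ contributes, fiberwise over $(\hat{A}\Rightarrow\hat{B}) : \TP$, a dependent sum indexed by pairs of elements of $\NfTp$ over $\hat{A}$ and $\hat{B}$; $\Con{neApp}$ contributes, fiberwise over $\hat{B}:\TP$ and $f\,u:\EL\,\hat{B}$, a sum indexed by appropriate neutrals and normals; and $\Con{nfLam}$ contributes a sum indexed by functions $\Var\,\hat{A}\to \NfTm{\bags{\hat{B}}}$. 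By construction, $F$-algebras are in natural bijection with normal form algebras, and $F$-algebra homomorphisms coincide with morphisms of normal form algebras.

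Third I would invoke general existence of initial algebras. The ambient category $\mathcal{A}$ is a Grothendieck topos (it is a product of slices of $\Sh{\GlTop}$), hence locally presentable; the endofunctor $F$ is accessible because each of its positions and directions is parameterized by a type classified by the small universe $\UU$, so $F$ preserves $\kappa$-filtered colimits for some regular $\kappa$. By the standard theorem of \citet{} --- e.g.\ Ad\'amek's transfinite iteration --- $F$ admits an initial algebra $(N_0, M_0, E_0)$, and Lambek's lemma ensures that the structure map is an isomorphism, so the constructors of the algebra genuinely inhabit the required extension types once we transport back along \cref{lem:ext-sl-equiv} and \cref{postulate:strict-gluing}.

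The main obstacle is the $\Con{nfLam}$ constructor: because its direction is a function type $\Var\,\hat{A}\to\NfTm{\bags{\hat{B}}}$, $F$ is not a \emph{finitary} polynomial functor but only an accessible one, and one has to check that exponentiation by the small type $\Var\,\hat{A}$ commutes with $\kappa$-filtered colimits for $\kappa$ bigger than the size of $\Var\,\hat{A}$. In a Grothendieck topos this is routine, since every object is $\kappa$-presentable for some $\kappa$, but it is the step that would need to be spelled out carefully to justify applying the initial algebra theorem; the rest of the verification (that the unique $F$-algebra morphism out of the initial algebra really preserves every individual constructor with its strict extension condition) is a matter of unwinding the equivalence of \cref{lem:ext-sl-equiv}.
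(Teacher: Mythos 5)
Your reformulation of a normal form algebra as an algebra for an endofunctor on a product of slice categories of $\UU\Sub{\ClMod}$ matches the paper's first move, and you and the paper both invoke \cref{lem:ext-sl-equiv} to translate extension types into slices. Where the two arguments part ways is in the existence theorem invoked. The paper goes one step further in massaging the ambient category: using \emph{disjointness of sums} it collapses the product of three slices into a single slice $\Sl*{\UU\Sub{\ClMod}}{\TP + 2\times\Sum{A:\TP}\EL\,A}$, observes that $\mathfrak{F}$ is then a \emph{polynomial} endofunctor in the strict Gambino--Kock sense, and concludes from the fact that $\UU\Sub{\ClMod}$ has W-types and equality types. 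You instead stay in the product category $\mathcal{A}$ and appeal to the external machinery of locally presentable categories: $\mathcal{A}$ is locally presentable, $F$ is $\kappa$-accessible for $\kappa$ large enough to absorb the exponent $\Var\,A$ in the $\Con{nfLam}$ direction, and Ad\'amek's transfinite construction yields the initial algebra.

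Both routes are correct, but they carry different emphases. Your accessibility argument is a robust classical fallback that requires no small-universe bookkeeping beyond noting that the arities are small, and your flag on $\Con{nfLam}$ being infinitary-but-still-accessible is exactly the right thing to worry about. The paper's route is deliberately \emph{internal} to the type-theoretic language of $\Sh{\GlTop}$ built up in \cref{postulate:open,postulate:strict-gluing,postulate:base-model,postulate:var}: once $\mathfrak{F}$ is seen to be polynomial over a single slice, its initial algebra is just a W-type together with equality types, which lives directly inside $\UU\Sub{\ClMod}$ and needs no appeal to a cardinal bound or transfinite colimit. Since the whole point of the surrounding development is to carry out the construction of the normalization model in that internal language, the W-types formulation is not merely stylistic; it keeps the construction at the same level of abstraction as the rest of \cref{sec:stc}. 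Your argument proves the same lemma but would need to step outside the internal language to do so, and one should be careful that the choice of $\kappa$ is uniform in the implicit slicing parameters so that the resulting initial algebra is itself a sheaf construction stable under base change, a point the polynomial/W-type formulation handles automatically.
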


\begin{proof}
  Evidently, the initial normal form algebra would be the initial
  algebra for a certain endofunctor $\mathfrak{F}$ on the product category
  \[
    \Ext{\UU}{\TP}\times \prn{\Prod{A:\TP}\Ext{\UU}{\EL\,{A}}}^2
  \]
  if such an initial algebra exists. By
  \cref{lem:ext-sl-equiv} and the disjointness property of
  sums we may equivalently present the category above as a
  slice of $\UU\Sub{\ClMod}$:
  \begin{align*}
     &
    \Ext{\UU}{\TP[\Mod]}\times
    \prn{\Prod{A:\TP[\Mod]}\Ext{\UU}{\EL\,{A}}}^2
    \\
     & \quad\simeq
    \Sl*{\UU\Sub{\ClMod}}{\TP}
    \times
    \prn{
      \Prod{A:\TP}
      \Sl*{\UU\Sub{\ClMod}}{\EL\,A}
    }^2
    \tag{\cref{lem:ext-sl-equiv}}
    \\
     & \quad\simeq
    \Sl*{\UU\Sub{\ClMod}}{\TP}
    \times
    \prn{\Sl*{\UU\Sub{\ClMod}}{\prn{\Sum{A:\TP}{\EL\,A}}}}^2
    \tag{disjointness}
    \\
     & \quad\simeq
    \Sl*{\UU\Sub{\ClMod}}{
      \prn{
        \TP + 2\times \Sum{A:\TP}{\EL\,A}
      }
    }
    \tag{disjointness}
  \end{align*}

  Under this identification, the endofunctor $\mathfrak{F}$ can be seen to be
  polynomial. Because $\UU\Sub{\ClMod}$ has W-types and equality types, the
  initial algebra exists~\citep{gambino-kock:2013}.
\end{proof}

\subsubsection{Injectivity of normal type constructors}

Let $\mathfrak{M}$ be the initial normal form algebra.

\begin{construction}
  Let $\Con{isFun} : \NfTp[\mathfrak{M}]\to \UU\Sub{\ClMod}$ be the family
  sending each $A$ to
  $\ClMod\Compr{\prn{B,C}:\NfTp[\mathfrak{M}]^2}{A=\Con{nfFun}\Sub{\mathfrak{M}}\,B\,C}$.
  We will define an auxiliary normal form algebra $\mathfrak{P}$ such that
  $\NfTp[\mathfrak{P}]$ associates to each normal type
  $A:\NfTp[\mathfrak{M}]$ a type $A'$ equipped with a map into
  $\Con{isFun}\,A$.  Of course, this description evokes the \emph{Artin gluing}
  $\NfTp[\mathfrak{M}]\downarrow \Con{isFun}$ when we view
  $\NfTp[\mathfrak{M}]$ as a discrete category:
  \[
    \DiagramSquare{
      sw = \NfTp[\mathfrak{M}],
      se = \UU\Sub{\ClMod},
      south = \Con{isFun},
      ne = \UU\Sub{\ClMod}\Sup{\to},
      east = \Con{cod},
      nw = \NfTp[\mathfrak{P}],
      nw/style = pullback,
      west/style = {->,exists},
      north/style = {->,exists},
    }
  \]

  More explicitly, we define $\NfTp[\mathfrak{P}]$ and the rest of the
  algebra as follows:

  \iblock{
    \mrow{
      \NfTp[\mathfrak{P}]
      =
      \Sum{A:\NfTp[\mathfrak{M}]}
      \Sum{A':\UU\Sub{\ClMod}}
      \prn{A'\to \Con{isFun}\,A}
    }

    \mrow{
      \NfTm[\mathfrak{P}]{\prn{A,A',\_}} =
      \NfTm[\mathfrak{M}]{\bags{A}}
    }

    \mrow{
      \Con{nfO}\Sub{\mathfrak{P}}
      =
      \prn{
        \Con{nfO}\Sub{\mathfrak{M}},
        \P,
        \lambda\_.\star
      }
    }

    \mrow{
      \Con{nfFun}\Sub{\mathfrak{P}}\,A\,B =
      \prn{
        \Con{nfFun}\Sub{\mathfrak{M}}\,\prn{\pi_1A}\,\prn{\pi_1B},
        \top,
        \lambda{\_}.
        \eta\Sub{\ClMod}\prn{\pi_1{A},\pi_1{B}}
      }
    }

    \mrow{\ldots}
  }

  We evidently have a homomorphism of algebras
  $\Mor[\pi]{\mathfrak{P}}{\mathfrak{M}}$ forgetting the additional
  information. As $\mathfrak{M}$ is initial, this projection homomorphism in
  fact has a (unique) section $\Mor[I]{\mathfrak{M}}{\mathfrak{P}}$:
  \[
    \begin{tikzpicture}
      \node (M) {$\mathfrak{M}$};
      \node (P) [right = of M] {$\mathfrak{P}$};
      \node (M') [below = of P] {$\mathfrak{M}$};
      \draw[exists,->] (M) to node [above] {$I$} (P);
      \draw[->] (P) to node [right] {$\pi$} (M');
      \draw[double] (M) to (M');
    \end{tikzpicture}
  \]
\end{construction}

\begin{lemma}[Modal injectivity of normal form constructors]\label{lem:modal-injectivity}
  The functorial map
  $\Mor[\ClMod\Con{nfFun}\Sub{\mathfrak{M}}]{
      \ClMod\NfTp[\mathfrak{M}]^2
    }{\ClMod\NfTp[\mathfrak{M}]}$
  is a monomorphism.
\end{lemma}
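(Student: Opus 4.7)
The plan is to use the unique section $\Mor[I]{\mathfrak{M}}{\mathfrak{P}}$ of $\pi$ to endow every normal function type $\Con{nfFun}\Sub{\mathfrak{M}}\,B\,C$ with a canonical $\ClMod$-witness tracking its decomposition $\prn{B,C}$, from which the monomorphism property follows by a retraction-style argument.

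First I would reduce the claim to a concrete injectivity condition. Since $\ClMod$ is a lex modality, it preserves the kernel pair of $\Con{nfFun}\Sub{\mathfrak{M}}$; consequently, $\ClMod\Con{nfFun}\Sub{\mathfrak{M}}$ is monic precisely when, for any $\prn{B_1,C_1},\prn{B_2,C_2}:\NfTp[\mathfrak{M}]^2$ together with an equation $p : \Con{nfFun}\Sub{\mathfrak{M}}\,B_1\,C_1 = \Con{nfFun}\Sub{\mathfrak{M}}\,B_2\,C_2$, we have $\eta\Sup{\ClMod}\prn{B_1,C_1} = \eta\Sup{\ClMod}\prn{B_2,C_2}$ in $\ClMod\prn{\NfTp[\mathfrak{M}]^2}$.

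Next I would apply the section $I$ to $p$, yielding $I\prn{\Con{nfFun}\Sub{\mathfrak{M}}\,B_1\,C_1} = I\prn{\Con{nfFun}\Sub{\mathfrak{M}}\,B_2\,C_2}$ in $\NfTp[\mathfrak{P}]$. Because $I$ is a morphism of normal form algebras, the definition of $\Con{nfFun}\Sub{\mathfrak{P}}$ forces this into an equality of triples whose third components are each of the form $\lambda\_.\eta\Sup{\ClMod}\prn{B_i,C_i,\Con{refl}}$ valued in $\top\to\Con{isFun}\prn{\Con{nfFun}\Sub{\mathfrak{M}}\,B_i\,C_i}$. After transport along $p$ to identify the codomains, this delivers an equality $\eta\Sup{\ClMod}\prn{B_1,C_1,\Con{refl}} = \eta\Sup{\ClMod}\prn{B_2,C_2,\Con{refl}}$ inside $\Con{isFun}\prn{\Con{nfFun}\Sub{\mathfrak{M}}\,B_1\,C_1}$.

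Finally I would push this equality forward along the map $\Con{isFun}\prn{A} \to \ClMod\prn{\NfTp[\mathfrak{M}]^2}$ induced by the first projection $\prn{B,C,\_}\mapsto\prn{B,C}$. Since this projection discards the equality witness, it is unaffected by the transport of the previous step, delivering the desired $\eta\Sup{\ClMod}\prn{B_1,C_1} = \eta\Sup{\ClMod}\prn{B_2,C_2}$ in $\ClMod\prn{\NfTp[\mathfrak{M}]^2}$. The main subtlety is the bookkeeping around this transport, since the $\Con{isFun}$ target of the third triple-component depends nontrivially on its argument; this is however harmless because the subsequent forgetful projection ignores the dependency, and the pattern is the categorical shadow of the usual inversion principle for first-order constructors.
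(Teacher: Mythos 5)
Your proposal is correct and follows essentially the same route as the paper's proof: reduce the monomorphism claim to the pointwise statement via lexness of $\ClMod$, apply the unique section $I$ of the projection $\mathfrak{P}\to\mathfrak{M}$ to turn the hypothesized equation of normal forms into an equation of triples, and read off the desired identification in $\ClMod\NfTp[\mathfrak{M}]^2$ from the third components. The paper compresses your last two steps into the phrase ``by projection,'' whereas you spell out the transport along $p$ and the subsequent push-forward along $\Con{isFun}\,A\to\ClMod\NfTp[\mathfrak{M}]^2$; this extra bookkeeping is genuinely present in a fully formal account, so your added care is warranted rather than superfluous.
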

\begin{proof}
  The claim is equivalent to the following formula:
  \[
    \forall P, Q : \NfTp[\mathfrak{M}]^2.\
    \Con{nfFun}\Sub{\mathfrak{M}}P=\Con{nfFun}\Sub{\mathfrak{M}}Q
    \to
    \eta\Sub{\ClMod}P=\eta\Sub{\ClMod}Q
  \]

  Fix $P$ and $Q$ such that
  $\Con{nfFun}\Sub{\mathfrak{M}}F=\Con{nfFun}\Sub{\mathfrak{M}}Q$. Considering the
  action of the universal map $\Mor[I]{\mathfrak{M}}{\mathfrak{P}}$ on this
  section, we have:
  \[
    \DiagramSquare{
      width = 4cm,
      west/style = double,
      east/style = double,
      north/style = double,
      south/style = double,
      nw = I\Sub{\NfTp}\prn{\Con{nfFun}\Sub{\mathfrak{M}}P},
      sw = I\Sub{\NfTp}\prn{\Con{nfFun}\Sub{\mathfrak{M}}Q},
      ne = {
          \prn{
            \Con{nfFun}\Sub{\mathfrak{M}}P,
            \top,
            \lambda{*}.\eta\Sub{\ClMod}P
          }
        },
      se = {
          \prn{
            \Con{nfFun}\Sub{\mathfrak{M}}Q,
            \top,
            \lambda{*}.\eta\Sub{\ClMod}Q
          }
        },
    }
  \]

  Thus by projection, we have $\eta\Sub{\ClMod}P=\eta\Sub{\ClMod}Q$.
\end{proof}

\subsubsection{The universe of normalization spaces}\label{sec:u-norm-space}

Our goal has been to define a natural model of type theory lying over the
bi-initial model $\IMod$ in which normal forms can be projected from the
interpretations of types, following our discussion of \citet{tait:1967} in
\cref{sec:tait}. Tait's idea, which we will realize in a more technical form
here, is to let the semantic universe of the normalization model assign a
(vertical) projection map from each kind of semantic object into the
corresponding space of normal forms. In order to close such a universe under
function spaces, Tait noticed that it was necessary to have a vertical map
\emph{into} every semantic type from the space of neutral forms of elements of
that type.
In this section, we aim to define a universe of \DefEmph{normalization spaces},
or computability spaces that are equipped with the structure described above.

\begin{definition}
  A \DefEmph{normalization space} $A$ is given by the following data:
  \begin{enumerate}
    \item a normal form $\ReifyTp{A}:\NfTp$;
    \item a type $\EL\NS{A}:\Ext{\UU}{\EL\,\ReifyTp{A}}$;
    \item a ``reflection'' map $\Reflect{A} : \Prod{x:\NeTm\,\bags{\ReifyTp{A}}} \Ext{\EL\NS{A}}{x}$;
    \item a ``reification'' map $\Reify{A} : \Prod{x:\EL\NS{A}}\Ext{\NfTm\,\bags{\ReifyTp{A}}}{x}$.
  \end{enumerate}
  Of course, the reflection and reification maps can be stated as a sequence of
  \emph{vertical} maps $\NeTm\,\bags{\ReifyTp{A}}\to \EL\NS{A}\to
    \NfTm\,\bags{\ReifyTp{A}}$.
\end{definition}

\begin{construction}[The universe of normalization spaces]
  By \cref{postulate:strict-gluing}, we may define a type $\TP\NS$ of
  normalization spaces such that $\OpMod\prn{\TP\NS = \TP}$ strictly. Thus we
  have a universe $\NN = \prn{\TP\NS,\EL\NS}$ that restricts under $\OpMod$ to
  $\prn{\TP,\EL}$.
\end{construction}

\subsubsection{Closure of normalization spaces under connectives}

We can close the universe of normalization spaces (\cref{sec:u-norm-space})
under the connectives that we have postulated on $\TP$ in such a way that they
restrict exactly to these under $\OpMod$.

\begin{construction}[The function space in normalization spaces]
  For function spaces, we must define the following map (as well as
  corresponding maps for $\lambda$-abstraction and application):
  \[
    \prn{\Rightarrow\NS} : \Ext{\prn{\TP\NS\times\TP\NS\to \TP\NS}}{\prn{\Rightarrow}}
  \]

  Given two normalization spaces $A,B:\TP\NS$ we must define a normalization
  space $\prn{A\Rightarrow\NS B}:\Ext{\TP\NS}{A\Rightarrow B}$. Below, we
  describe how to construct this space:

  \begin{enumerate}

    \item To define the normal form $\ReifyTp{\prn{A\Rightarrow\NS B}} :
            \Ext{\NfTp}{A\Rightarrow B}$, we use the normal forms of $A$ and of $B$ to
          construct $\Con{nfFun}\,{\ReifyTp{A}}\,{\ReifyTp{B}}$.

    \item To define the type $\EL\NS\prn{A\Rightarrow\NS B}$ over
          $\EL\,\prn{A\Rightarrow B}$, we will use the function space $\EL\NS{A} \to
            \EL\NS{B}$. Note that this restricts only up to isomorphism to
          $\EL\,\prn{A\Rightarrow B}$, but that this can be corrected using
          \cref{postulate:strict-gluing}. Therefore, we will not belabor the point
          further in our informal explanation.

    \item To define the reflection map $\Reflect{A\Rightarrow\NS B}$, we are
          given a neutral form $f:\NeTm\,\prn{A \Rightarrow B}$ and an element
          $x:\EL\NS A$ and must produce an element $\prn{\Reflect{A\Rightarrow\NS
                B}f}x : \Ext{\EL\NS B}{f x}$. Applying the reflection map for $B$, it
          suffices to give a neutral form in $\Ext{\NeTm\,\bags{B}}{f x}$; applying
          the neutral application constructor $\Con{neApp}$, we need only a normal
          form in $\Ext{\NfTm\,\bags{A}}{x}$, why we obtain by reification at $A$.
          All in all we have:
          \[
            \prn{\Reflect{A\Rightarrow\NS B}f}x =
            \Reflect{B}{\Con{neApp}\,{\ReifyTp{A}}\,{\ReifyTp{B}}\,{f}\,\prn{\Reify{A}x}}
          \]

    \item To define the reification map $\Reify{A\Rightarrow\NS B}$, we are given
          a function $f : \EL\NS{A}\to \EL\NS{B}$ and must exhibit a normal form
          $\Reify{A\Rightarrow\NS B}f : \Ext{\NfTm\,\prn{A\Rightarrow B}}{\lambda x.
              f x}$. Applying the normal abstraction constructor $\Con{nfLam}$, we are
          given a variable $x:\Var\,\bags{A}$ and must construct a normal form in
          $\Ext{\NfTm\,\bags{B}}{f x}$. Applying reification at $B$, it suffices to
          give an element of $\Ext{\EL\NS{B}}{f x}$; applying $f$ itself, we need an
          element of $\Ext{\EL\NS{A}}{x}$ which we obtain from reflection at $A$ and
          the neutral variable constructor $\Con{neVar}$. To summarize:
          \[
            \Reify{A\Rightarrow\NS B}f =
            \Con{neLam}\,\ReifyTp{A}\,\ReifyTp{B}\,\prn{
              \lambda x.
              \Reify{B}f\,\prn{
                \Reflect{A}
                \Con{neVar}\,\ReifyTp{A}\,x
              }
            }
          \]
  \end{enumerate}

  We leave the construction of $\lambda$-abstraction and application to the
  reader, as they are automatic by the fact that $\EL\NS A \to \EL\NS B$ is
  itself a function space.
\end{construction}

\begin{construction}[The base type in normalization spaces]
  For the base type, we must construct the following three constants:
  \begin{align*}
    \Con{O}\NS   & : \Ext{\TP\NS}{\Con{O}}             \\
    \Con{yes}\NS & : \Ext{\EL\NS\Con{O}\NS}{\Con{yes}} \\
    \Con{no}\NS  & : \Ext{\EL\NS\Con{O}\NS}{\Con{no}}
  \end{align*}

  \begin{enumerate}
    \item We choose $\ReifyTp{\Con{O}\NS}$ to be $\Con{nfO}$.

    \item We will let $\EL\NS\Con{O}\NS$ be the type $\NfTm\,\Con{O}$ of normal
          forms in the base type itself.

    \item The reflection map is given by $\Con{nfNeO} : \Prod{x :
              \NeTm\,\Con{O}}\Ext{\NfTm\,\Con{O}}{x}$.

    \item The reification map given by the identity function.

  \end{enumerate}

  Because we have chosen $\EL\NS\Con{O}\NS = \NfTm\,\Con{O}$, we may interpret
  $\Con{yes}\NS,\Con{no}\NS$ as $\Con{nfYes},\Con{nfNo}$ respectively.
\end{construction}

\subsection{From normalization spaces to a natural model of type theory}\label{sec:externalizing-normalization-spaces}

The results of \cref{sec:stc} culminated with a topos $\GlTop$ of computability
spaces $\GlTop$ equipped with a universe $\NN$ of \DefEmph{normalization
  spaces}, closed under the connectives of our type theory in a way that
restricts under the open immersion $\Mor|open
  immersion|[\bj]{\ClTop{\Mod}}{\GlTop}$ to the corresponding constructs of the
natural model $\Mod$. In this section, we aim to use those constructions as the
basis for an actual natural model $\Modd$ over $\Mod$; later we will
instantiate these results with $\Mod$ taken to be the bi-initial model $\IMod$.
In particular, we shall apply the results of
\cref{sec:universe-to-natural-model} to transform the universe $\NN$ of
normalization spaces into a genuine natural model $\Modd = \Extern[\GCat]{\NN}$ where
$\GCat\subseteq \Sh{\GlTop}$ is some suitable full subcategory of ``test
objects'' containing all $\NN$-contextual objects. In order to choose a
suitable subcategory $\GCat$, we make an auxiliary definition.

\begin{definition}[Atomic computability spaces]\label{def:atomic-space}
  An object of $\Sh{\GlTop}$ is called an \DefEmph{atomic computability space}
  when it lies in the image of the embedding
  $\EmbMor[\pprn{-}]{\CX[\ModR]}{\Sh{\GlTop}}$ defined as the composite
  $\CX[\ModR]\xrightarrow{\Yo[\CX[\ModR]]}\JDG[\ModR]\xrightarrow{\bi_!}{\Sh{\GlTop}}$.
\end{definition}

We then follow \citet{uemura:2022:coh} in choosing $\GCat$ be the smallest
$\NN$-contextual full subcategory of $\Sh{\GlTop}$ containing all atomic
computability spaces (so we may write $\EmbMor[\pprn{-}]{\CX[\ModR]}{\GCat}$).
We will write $\EmbMor[I\Sub{\GCat}]{\GCat}{\Sh{\GlTop}}$ for the full
subcategory inclusion, and $\Mor[N\Sub{\GCat}]{\Sh{\GlTop}}{\Psh{\GCat}}$ for
the corresponding nerve functor that sends each computability space
$X\in\Sh{\GlTop}$ to its \DefEmph{functor of $\GCat$-valued points}.

\begin{definition}
  We define the \DefEmph{normalization model} $\Modd$ to be the externalization
  $\Extern[\GCat]{\NN}$ of $\NN$ at the smallest $\NN$-contextual full
  subcategory $\GCat\subseteq\Sh{\GlTop}$ containing all atomic computability
  spaces.
\end{definition}

\begin{lemma}
  We have a morphism of natural models $\Mor[P]{\Modd}{\Mod}$
  preserving all type structure (function spaces and the base type).
\end{lemma}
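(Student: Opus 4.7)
The plan is to extract the morphism $P:\Modd\to\Mod$ from the inverse image functor $\bj^{*}:\Sh{\GlTop}\to\JDG[\Mod]$ of the open immersion, leveraging the fact that under $\bj^{*}$ the universe $\NN$ of normalization spaces strictly becomes $(\TP,\EL)$ of $\Mod$, together with its connectives.

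First, I would define the underlying functor $\underline{P}:\GCat\to\CX[\Mod]$ by sending each $X\in\GCat$ to the (essentially unique) object $\Gamma\in\CX[\Mod]$ whose Yoneda presheaf represents $\bj^{*}X$. This requires showing the representability claim by induction along the $\NN$-contextual structure: the base case is $\bj^{*}\ObjTerm\cong\ObjTerm$; for the atomic generators $\pprn{\Gamma}$ with $\Gamma\in\CX[\ModR]$, \cref{lem:display-of-variables-square} gives $\bj^{*}\pprn{\Gamma}\cong\Yo[\CX[\Mod]]\rho(\Gamma)$; and the context-comprehension step is preserved because $\bj^{*}$ is left exact and, by design (\cref{postulate:strict-gluing} together with the strict restriction clause of \cref{postulate:open}), strictly intertwines $\Proj[\NN]$ with $\Proj[\Mod]$. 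Hence $\bj^{*}$ sends every $\NN$-contextual computability space to a $\Mod$-contextual presheaf on $\CX[\Mod]$, which is representable because $\Mod$ itself carries context comprehensions.

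Second, I would build the classifying square. Applying $\bj^{*}$ to a morphism $I_{\GCat}X\to\EL[\NN]$ in $\Sh{\GlTop}$ and postcomposing with the canonical identifications gives a natural transformation $\Proj[\Modd]=N_{\GCat}\Proj[\NN]\to \underline{P}^{*}\Proj[\Mod]$, which transposes to the required square $\underline{P}_{!}\Proj[\Modd]\to\Proj[\Mod]$. Cartesianness of this square (i.e.\ preservation of context comprehensions in the sense required for a morphism of natural models) again reduces to the fact that $\bj^{*}$ preserves finite limits and strictly preserves the generic family, so that $\underline{P}(X.A)$ coincides with the context extension $\underline{P}(X).\bj^{*}A$ in $\CX[\Mod]$ on the nose.

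Third, preservation of the type connectives is essentially by construction: each piece of normalization-space structure ($\Rightarrow\NS$, $\Con{O}\NS$, $\Con{yes}\NS$, $\Con{no}\NS$, and the associated term constructors) was defined inside an extension type $\Ext{\cdot}{\cdot}$ whose side condition is precisely the strict equation making it restrict, under $\OpMod=\bj^{*}$, to the corresponding connective of $\Mod$ postulated in \cref{postulate:base-model}. Transporting these strict equations through the construction of $\underline{P}$ and the classifying square yields the required equations for $P$ to be a homomorphism of natural models with function spaces and the base type.

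The main obstacle I foresee is bookkeeping of strictness: making sure that the chain ``$\bj^{*}\NN=(\TP,\EL)$ strictly $\Rightarrow$ contextual objects go to contextual objects strictly $\Rightarrow$ the square is strictly cartesian'' genuinely holds rather than only up to coherent isomorphism. This is exactly what \cref{postulate:strict-gluing} and the strictness clause of \cref{postulate:open} were postulated to guarantee, but the argument must check that the inductive construction of $\underline{P}$ is compatible with these strictifications and that no nontrivial coherence data arises when comparing $F_{!}\Proj[\Modd]$ with $\Proj[\Mod]$; once this is done, the preservation of the connectives follows mechanically from the construction.
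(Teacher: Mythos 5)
Your proposal takes essentially the same approach as the paper: the underlying functor is extracted from $\bj^{*}\circ I_{\GCat}$, and the factorization through the Yoneda embedding of $\CX[\Mod]$ is obtained precisely by combining \cref{lem:display-of-variables-square} for the atomic generators with the observation that ``restricting along $\bj$ to a representable'' is an $\NN$-contextual property in the sense of \cref{def:contextual-class}. The paper dismisses the construction of the classifying square and the preservation of connectives as ``routine and uninteresting'', while you spell them out (and correctly flag the strictness bookkeeping that \cref{postulate:open} and \cref{postulate:strict-gluing} are there to handle), but the core argument is identical.
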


\begin{proof}
  The underlying functor $\Mor[P]{\GCat}{\CX[\Mod]}$ can be defined to factor
  like so:
  \[
    \DiagramSquare{
      nw = \GCat,
      ne = \Sh{\GlTop},
      sw = \CX[\Mod],
      se = \JDG[\Mod],
      south = \Yo[\CX[\Mod]],
      east = \bj^*,
      west = P,
      north/style = embedding,
      south/style = embedding,
      west/style = {->,exists},
      north = I\Sub{\GCat},
    }
  \]

  That $\bj^*I\Sub{\GCat}$ factors through the Yoneda embedding follows from
  \cref{lem:display-of-variables-square} and the fact that the property of
  restricting along $\bj$ to a representable is $\NN$-contextual in the sense
  of \cref{def:contextual-class}. We omit the rest of the construction of the
  morphism because it is routine and uninteresting.
\end{proof}

\subsection{The normalization result}\label{sec:normalization-result}

Now instantiate the constructions before by setting $\Mod$ to be the bi-initial
$\IMod$ natural model closed under the specified connectives, and suppose that
$\Mor[\rho]{\ModR}{\IMod}$ is the bi-initial model of variables over $\IMod$.
By the universal property of $\IMod$, we have a section
$\Mor|>->|[S]{\IMod}{\Modd}$ to the projection
$\Mor|->>|[P]{\Modd}{\IMod}$ that we constructed in
\cref{sec:externalizing-normalization-spaces}. The underlying functor of this
section $\Mor[S]{\CX[\IMod]}{\GCat}$ sends each context from $\IMod$ to its
glued interpretation; with this in hand, we make the following definition by
analogy with \cref{def:atomic-space}.

\begin{definition}[Canonical computability spaces]\label{def:canonical-space}
  An object of $\Sh{\GlTop}$ is called a \DefEmph{canonical computability space} when it lies
  in the image of the functor
  $\Mor[\bbrk{-}]{\CX[\ModR]}{\GCat\subseteq\Sh{\GlTop}}$ defined as the
  composite
  $\CX[\ModR]\xrightarrow{\rho}\CX[\IMod]\xrightarrow{S}\GCat$.
\end{definition}

We conclude with some observations that relate $\pprn{-}$ and $\bbrk{-}$ to
the internal language of $\Sh{\GlTop}$.

\begin{construction}[Internalizing types from the model of variables]\label{con:internalizing-types-from-model-of-variables}
  Morphisms $\Mor{\Yo{\Gamma}}{\TP[\ModR]}$ in $\JDG[\ModR]$ can be canonically identified with
  morphisms $\Mor{\pprn{\Gamma}}{\TP}$ in $\Sh{\GlTop}$ by means of the
  following composite natural isomorphism:
  \begin{align*}
     & \Hom{\JDG[\ModR]}{\Yo[\CX[\ModR]]{\Gamma}}{\TP[\ModR]}
    \\
     & \quad\cong
    \Hom{\JDG[\ModR]}{\Yo[\CX[\ModR]]{\Gamma}}{\brho^*\TP[\IMod]}
    \\
     & \quad\cong
    \Hom{\JDG[\IMod]}{\Yo[\CX[\IMod]]{\rho\Gamma}}{\TP[\IMod]}
    \\
     & \quad\cong
    \Hom{\Sh{\GlTop}}{\bj_*\Yo[\CX[\IMod]]{\rho\Gamma}}{\TP}
    \\
     & \quad\cong
    \Hom{\Sh{\GlTop}}{\OpMod\pprn{\Gamma}}{\TP}
    \\
     & \quad\cong
    \Hom{\Sh{\GlTop}}{\pprn{\Gamma}}{\TP}
  \end{align*}

  We shall write $\pprn{-},\pprn{-}\Sup{-1}$ for the forward and inverse
  directions of the natural isomorphism above.
\end{construction}

\begin{observation}\label{obs:atomic-space-context-extension}
  Let $\Mor[A]{\Yo{\Gamma}}{\TP[\ModR]}$ be a type in $\ModR$; then the atomic
  computability space $\pprn{\Gamma.A}$ is canonically isomorphic to the
  dependent sum
  $\Sum{\gamma:\pprn{\Gamma}}\Con{Var}\prn{\pprn{A}\Sup{-1}\gamma}$.
\end{observation}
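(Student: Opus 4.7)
The plan is to invoke the recollement of computability spaces (\cref{fact:mini-recollement}): to exhibit the desired iso, it is enough to produce isomorphisms after applying $\bj^*$ and $\bi^*$ to each side and then to check compatibility with the fracture's comparison map $\bi^*(-)\to\bi^*\bj_*\bj^*(-)$. Write $L$ for $\pprn{\Gamma.A} = \bi_!\Yo(\Gamma.A)$ and $R$ for $\Sum{\gamma:\pprn{\Gamma}}\Var\prn{\pprn{A}\Sup{-1}\gamma}$.

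For the $\bj^*$-component, \cref{lem:display-of-variables-square} immediately yields $\bj^* L \cong \Yo(\rho(\Gamma.A))$. For $\bj^* R$, the functor $\bj^*$ preserves dependent sums (as the inverse image of an essential geometric morphism carrying the relevant generic families to generic families), and the equation $\bj^*\Var \cong \bj^*\EL \cong \EL[\Mod]$ holds externally by \cref{postulate:var}; so
\[
\bj^* R \;\cong\; \Sum{\gamma:\Yo(\rho\Gamma)} \EL[\Mod]\prn{\rho\Sub{\TP}\cdot A \cdot \gamma} \;\cong\; \Yo(\rho(\Gamma.A)),
\]
the last step using that $\rho$ preserves context comprehension. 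For the $\bi^*$-component, full-faithfulness of $\bi_!$ gives $\bi^* L \cong \Yo(\Gamma.A)$. On the other side, $\bi^*\pprn{\Gamma} \cong \Yo(\Gamma)$, and the external reading of $\Var$ forced by \cref{def:model-of-variables} (namely $\bi^*\Var \cong \EL[\ModR]$) gives
\[
\bi^* R \;\cong\; \Sum{\gamma:\Yo(\Gamma)}\EL[\ModR]\prn{A\gamma} \;\cong\; \Yo(\Gamma.A),
\]
which is just the defining pullback presentation of context comprehension in $\ModR$.

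The main (still routine) obstacle is to check that these two component isomorphisms respect the fracture's comparison map. For $L = \bi_!\Yo(\Gamma.A)$, the comparison reduces to the unit $\Yo(\Gamma.A) \to \brho^*\Yo(\rho(\Gamma.A))$ of $\brho_!\dashv\brho^*$; for $R$ it reduces to the restriction of that same unit along the classifying data of $A$. Tracing through \cref{con:internalizing-types-from-model-of-variables}, the notation $\pprn{A}$ is defined precisely as the transpose of this unit across the adjunction $\brho_!\dashv\brho^*$, so naturality of the unit and of the $\pprn{-}$-iso in the argument $A$ force the two squares to agree. Assembling the component isomorphisms then yields the desired canonical iso in $\Sh{\GlTop}$.
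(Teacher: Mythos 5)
Your decomposition via the recollement of computability spaces (\cref{fact:mini-recollement}) is the natural way to verify this observation, which the paper states without proof. The open-component and closed-component computations are correct: both $\bj^*$ and $\bi^*$ are left exact, so they carry the dependent sum (a pullback of $\Var\to\TP$ along $\pprn{A}$) to the corresponding comprehension pullbacks in $\JDG[\Mod]$ and $\JDG[\ModR]$ respectively, using that $\rho$ preserves context comprehension; and $\bi^*L\cong\Yo\prn{\Gamma.A}$, $\bj^*L\cong\Yo\prn{\rho\prn{\Gamma.A}}$ follow from full faithfulness of $\bi_!$ and from \cref{lem:display-of-variables-square} exactly as you say.

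Two small points of precision. First, the external reading $\bi^*\Var\cong\EL[\ModR]$ is not literally ``forced'' by \cref{def:model-of-variables}: \cref{postulate:var} only constrains the open part of $\Var$ (namely $\OpMod\prn{\Var\,A=\EL\,A}$), so the closed part is a priori free. What you actually need is that the intended realization of \cref{postulate:var} in $\Sh{\GlTop}$ takes $\Var$ to be the gluing whose closed component is $\EL[\ModR]$ displayed over $\TP[\ModR]\cong\brho^*\TP[\Mod]$, with comparison map the elementhood component of the natural model homomorphism $\rho$. The paper leaves this realization implicit, and you are right to identify it as the crux, but the citation should point to the realization of the postulate rather than to the definition of model of variables. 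Second, the compatibility check is stated a bit loosely. The comparison map for $L=\bi_!\Yo\prn{\Gamma.A}$ is indeed the unit $\eta_{\Yo\prn{\Gamma.A}}:\Yo\prn{\Gamma.A}\to\brho^*\brho_!\Yo\prn{\Gamma.A}$, but the comparison map for $R$ is the pullback of the unit at $\Yo\Gamma$ together with the elementhood map $\EL[\ModR]\to\brho^*\EL[\Mod]$; that these two composites agree uses not just naturality of the unit but also that $\rho$ is a morphism of natural models preserving comprehension (so that $\rho$ applied to a pair $\prn{g,e}$ classifying a map into $\Gamma.A$ is the pair $\prn{\rho g,\rho e}$). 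With this spelled out your argument closes cleanly.
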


\begin{lemma}
  The projection map $\Mor[\pi_1]{\Sum{A:\TP[\IMod]}\Con{Var}\,A}{\TP[\IMod]}$ is relatively
  representable by an atomic computability space.
\end{lemma}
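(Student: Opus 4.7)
The plan is to verify representability by inspection: given any atomic computability space $\pprn{\Delta}$ and a map $A : \pprn{\Delta} \to \TP[\IMod]$ in $\Sh{\GlTop}$, I must exhibit the pullback $\pprn{\Delta}\times_{\TP[\IMod]}\prn{\Sum{A:\TP[\IMod]}\Con{Var}\,A}$ as an atomic computability space, \ie as an object in the essential image of $\EmbMor[\pprn{-}]{\CX[\ModR]}{\Sh{\GlTop}}$. The unique candidate is fairly forced: since atomic computability spaces come from contexts in $\ModR$, we should expect the pullback to be $\pprn{\Delta.\tilde{A}}$ for a suitable extension in $\CX[\ModR]$.

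First I would use \cref{con:internalizing-types-from-model-of-variables} to transport the map $\Mor[A]{\pprn{\Delta}}{\TP[\IMod]}$ in $\Sh{\GlTop}$ to a morphism $\Mor[\tilde{A}]{\Yo[\CX[\ModR]]{\Delta}}{\TP[\ModR]}$ in $\JDG[\ModR]$, \ie a type in $\ModR$ over $\Delta$. Because $\ModR$ is a natural model, its generic family $\Proj[\ModR]$ is representable and so we may form the context extension $\Delta.\tilde{A}\in\CX[\ModR]$.

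Next I would invoke \cref{obs:atomic-space-context-extension} to obtain a canonical isomorphism
\[
  \pprn{\Delta.\tilde{A}}\cong \Sum{\gamma:\pprn{\Delta}}\Con{Var}\prn{\pprn{\tilde{A}}\Sup{-1}\gamma}.
\]
By the definition of $\pprn{-}\Sup{-1}$ in \cref{con:internalizing-types-from-model-of-variables}, the internal type $\pprn{\tilde{A}}\Sup{-1} : \pprn{\Delta}\to\TP[\IMod]$ agrees with the originally given $A$, so the right-hand side is exactly the dependent sum presenting the pullback of $\Mor[\pi_1]{\Sum{A:\TP[\IMod]}\Con{Var}\,A}{\TP[\IMod]}$ along $A$. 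Hence the pullback is (isomorphic to) the atomic computability space $\pprn{\Delta.\tilde{A}}$.

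The only potentially delicate step is checking that the bijection of \cref{con:internalizing-types-from-model-of-variables} is natural enough that the $A$-indexed fiber of the dependent sum really matches the one that appears in \cref{obs:atomic-space-context-extension}; this is a routine unfolding of the definitions of $\bj_*$, $\brho^*$, and the identification $\TP[\ModR]\cong\rho^*\TP[\IMod]$ coming from $\rho$ being a model of variables in the sense of \cref{def:model-of-variables}, so no new ideas are needed.
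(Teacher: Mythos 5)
Your proof is correct and matches the paper's: both proofs take an atomic computability space $\pprn{\Gamma}$ with a map to $\TP[\IMod]$, use the natural isomorphism of \cref{con:internalizing-types-from-model-of-variables} to transport it to an $\ModR$-type over $\Gamma$, form the context extension in $\CX[\ModR]$, and invoke \cref{obs:atomic-space-context-extension} to identify the resulting atomic computability space with the pullback. The only difference is order of presentation --- the paper first writes down the pullback $\Sum{\gamma:\pprn{\Gamma}}\Con{Var}\prn{B\gamma}$ and then recognizes it as $\pprn{\Gamma.\pprn{B}\Sup{-1}}$, while you first build the candidate context extension and then verify it has the right universal property; the ``delicate step'' you flag at the end is precisely this verification, which the paper likewise treats as an unwinding of \cref{obs:atomic-space-context-extension}.
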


\begin{proof}
  Let $\pprn{\Gamma}$ an atomic computability space and let
  $\Mor[B]{\pprn{\Gamma}}{\TP[\IMod]}$; we compute the fiber of $\pi_1$ as
  follows:
  \[
    \DiagramSquare{
      nw = \Sum{\gamma:\pprn{\Gamma}}\Con{Var}\,\prn{B\gamma},
      sw = \pprn{\Gamma},
      ne = \Sum{A:\TP[\IMod]}{\Con{Var}\,A},
      se = \TP[\IMod],
      east = \pi_1,
      south = B,
      north = {\prn{\gamma,x}\mapsto \prn{B\gamma,x}},
      nw/style = pullback,
      width = 5cm,
    }
  \]

  By \cref{obs:atomic-space-context-extension}, the pullback above is isomorphic
  to the projection
  $\Mor[\pprn{p\Sub{\pprn{B}\Sup{-1}}}]{\pprn{\Gamma.\pprn{B}\Sup{-1}}}{\pprn{\Gamma}}$.
\end{proof}

\begin{observation}\label{obs:canonical-space-context-extension}
  Let $\Mor[A]{\Yo{\Gamma}}{\TP[\ModR]}$ be a type in $\ModR$; recalling that
  $\bbrk{-}=S\circ\rho$ tracks a morphism of natural models, we have a type
  $\Mor[\prn{S\circ\rho}\Sub{\TP}\cdot A]{\Yo{\bbrk{\Gamma}}}{\TP[\Extern[\GCat]{\NN}]}$ in
  $\Modd$, which can equally well be described as a map
  $\Mor[\bbrk{A}]{\bbrk{\Gamma}}{\TP\NS}$ in $\Sh{\GlTop}$. From this
  perspective, the canonical computability space $\bbrk{\Gamma.A}$ is the dependent sum
  $\Sum{\gamma:\bbrk{\Gamma}}\EL\NS{\bbrk{A}\gamma}$.
\end{observation}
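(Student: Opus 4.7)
The plan is to chain two preservation-of-context-comprehension arguments, one for each factor of $\bbrk{-}=S\circ\rho$. First, since $\rho:\ModR\to\IMod$ is a morphism of natural models, it preserves context comprehension, so $\rho\prn{\Gamma.A}\cong\rho\Gamma.\prn{\rho\Sub{\TP}\cdot A}$; and because $\Mor{\TP[\ModR]}{\rho^*\TP[\IMod]}$ is an isomorphism by \cref{def:model-of-variables}, the type $\rho\Sub{\TP}\cdot A$ is just $A$ transported through this isomorphism. Applying $S:\IMod\to\Modd$, which also preserves context comprehension, then yields $\bbrk{\Gamma.A}=S\prn{\rho\prn{\Gamma.A}}\cong\bbrk{\Gamma}.\prn{S\Sub{\TP}\cdot\prn{\rho\Sub{\TP}\cdot A}}$; and unwrapping Construction \ref{con:internalizing-types-from-model-of-variables} identifies the resulting type on $\bbrk{\Gamma}$ with the map $\Mor[\bbrk{A}]{\bbrk{\Gamma}}{\TP\NS}$ in $\Sh{\GlTop}$.

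The remaining step is to interpret context comprehension in $\Modd=\Extern[\GCat]{\NN}$ concretely. By the externalization construction of \cref{sec:universe-to-natural-model}, the generic family of $\Modd$ is $N\Sub{\GCat}\Proj[\NN]$, so context comprehension by a type classified by $\bbrk{A}$ is computed as the pullback of $\Proj[\NN]:\EL\NS\to\TP\NS$ along $\bbrk{A}$ in $\Sh{\GlTop}$ --- which lands in $\GCat$ since $\GCat$ is closed under $\NN$-contextual objects. This pullback is precisely the dependent sum $\Sum{\gamma:\bbrk{\Gamma}}\EL\NS{\bbrk{A}\gamma}$, completing the identification.

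The main obstacle is bookkeeping rather than anything substantive: one must verify that the implicit identifications --- the model-of-variables isomorphism $\TP[\ModR]\cong\rho^*\TP[\IMod]$, the strict identification of $\OpMod\TP\NS$ with $\TP$ granted by \cref{postulate:strict-gluing}, and the chain of natural isomorphisms of Construction \ref{con:internalizing-types-from-model-of-variables} --- compose coherently so that the map labelled $\bbrk{A}$ on the left of the dependent sum agrees with the map $\bbrk{A}$ produced by the two preservation arguments. Since each link in the chain is either canonical or strictly the identity, the composite is uniquely determined and the dependent-sum description follows automatically.
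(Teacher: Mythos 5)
Your argument is correct and is exactly the reasoning the paper leaves implicit for this Observation (which it states without proof): since both $\rho$ and $S$ are morphisms of natural models they preserve context comprehension, and then one unwinds context comprehension in the externalization $\Modd=\Extern[\GCat]{\NN}$ via the nerve $N\Sub{\GCat}$ and the closure of $\GCat$ under $\NN$-contextual objects to identify $\bbrk{\Gamma.A}$ with the pullback of $\Proj[\NN]$ along $\bbrk{A}$, i.e.\ the dependent sum $\Sum{\gamma:\bbrk{\Gamma}}\EL\NS{\bbrk{A}\gamma}$. The one thing to flag is the citation of \cref{con:internalizing-types-from-model-of-variables} in your first paragraph: that construction concerns the \emph{atomic} picture, identifying maps $\Yo{\Gamma}\to\TP[\ModR]$ in $\JDG[\ModR]$ with maps $\pprn{\Gamma}\to\TP$ in $\Sh{\GlTop}$, not the canonical one. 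The identification you actually need there --- between $\Mor{\Yo{\bbrk{\Gamma}}}{\TP[\Modd]}$ in $\Psh{\GCat}$ and $\Mor{\bbrk{\Gamma}}{\TP\NS}$ in $\Sh{\GlTop}$ --- comes directly from the externalization of \cref{sec:universe-to-natural-model}, namely that $\TP[\Modd]=N\Sub{\GCat}\TP\NS$ and that the nerve restricted to the full subcategory $\GCat$ agrees with Yoneda; replacing the reference with that observation makes the proof airtight.
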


\NewDocumentCommand\Hydrate{}{\mathord{\nearrow}}
\NewDocumentCommand\ModH{}{\Kwd{H}}

\subsubsection{The functors of atomic and canonical points}

Given a space $X\in \Sh{\GlTop}$, a \DefEmph{atomic point} of $X$ is defined to
be a generalized element of $X$ defined on an atomic computability space $\pprn{\Gamma}$;
likewise, a \DefEmph{canonical point} of $X$ is defined to be a generalized
element of $X$ defined on a canonical computability space $\bbrk{\Gamma}$. Thus the
\DefEmph{functors of (atomic, canonical) points} of $X$ are the presheaves
$\Hom{\Sh{\GlTop}}{\pprn{-}}{X}$, $\Hom{\Sh{\GlTop}}{\bbrk{-}}{X}$ respectively in
$\JDG[\ModR]$.

\begin{definition}[Restricting to a functor of points]
  Let $\Mor[F]{\CX[\ModR]}{\Sh{\GlTop}}$ be a functor such that
  $\bj^*\circ F \cong \Yo[\CX[\IMod]]\circ\rho$; for any $X\in\Sh{\GlTop}$, the
  \DefEmph{functor of $F$-valued points} of $X$ is defined to be the presheaf
  $\Hom{\Sh{\GlTop}}{F-}{X}$ in $\JDG[\ModR]$.  We define the
  \DefEmph{restriction of $X$ to its functor of $F$-valued points} to be the
  space $X_F\in\Sh{\GlTop}$ determined by the following natural transformation
  $\Mor{\Hom{\Sh{\GlTop}}{F-}{X}}{\brho^*\bj^*X}$:
  \[
    \Hom{\Sh{\GlTop}}{F-}{X}
    \xrightarrow{}
    \Hom{\JDG[\IMod]}{\bj^*F-}{\bj^*X}
    \xrightarrow{\cong}
    \Hom{\JDG[\IMod]}{\Yo{\rho-}}{\bj^*X}
    \xrightarrow{}
    \brho^*\bj^*X
  \]

  Given a natural transformation $\Mor[\alpha]{F}{G}$ between two such functors,
  the precomposition map $\Hom{\Sh{\GlTop}}{\alpha-}{X}$ induces a vertical
  reindexing map $\Mor[X\Sub{\alpha}]{X\Sub{G}}{X\Sub{F}}$.
\end{definition}

\begin{lemma}
  For any space $X\in\Sh{\GlTop}$, the functor of canonical points
  $\Hom{\Sh{\GlTop}}{\pprn{-}}{X}$ is canonically isomorphic to the restriction
  $\bi^*X$ of $X$ along the closed immersion
  $\Mor|closed immersion|[\bi]{\ClTop{\ModR}}{\GlTop}$.
\end{lemma}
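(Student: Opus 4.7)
The plan is to derive the isomorphism directly from the adjunction $\bi_! \dashv \bi^*$ established earlier in the paper, combined with the Yoneda lemma. The key observation is that the functor $\pprn{-}$ is, by \cref{def:atomic-space}, literally the composite $\bi_!\circ\Yo[\CX[\ModR]]$, so the domain of the hom-set presentation of the functor of atomic points unpacks into the representable $\Yo[\CX[\ModR]]{\Gamma}$ hit by a left adjoint.

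First I would fix $X\in\Sh{\GlTop}$ and compute for each $\Gamma\in\CX[\ModR]$ the chain of natural isomorphisms
\[
  \Hom{\Sh{\GlTop}}{\pprn{\Gamma}}{X}
  =
  \Hom{\Sh{\GlTop}}{\bi_!\Yo[\CX[\ModR]]{\Gamma}}{X}
  \cong
  \Hom{\JDG[\ModR]}{\Yo[\CX[\ModR]]{\Gamma}}{\bi^*X}
  \cong
  \prn{\bi^*X}\prn{\Gamma},
\]
where the first step uses the adjunction $\bi_!\dashv \bi^*$ and the second uses the Yoneda lemma in $\JDG[\ModR]=\Sh{\ClTop{\ModR}}$. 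Naturality in $\Gamma$ is automatic because each step is natural, so this assembles into an isomorphism of presheaves on $\CX[\ModR]$, \ie of objects of $\JDG[\ModR]$.

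The main thing to check is simply that the two functors compared actually land in the same place; there is no real obstacle here, since $\bi^*X$ is by construction an object of $\Sh{\ClTop{\ModR}}=\JDG[\ModR]$ and the presheaf $\Gamma\mapsto \Hom{\Sh{\GlTop}}{\pprn{\Gamma}}{X}$ is defined with the same domain. No subtle coherence issues arise because the isomorphism is built entirely from the counit/unit of a standard adjunction and the Yoneda lemma, both of which are canonical. The lemma is therefore best viewed as a reformulation of the defining property of $\pprn{-}=\bi_!\circ\Yo[\CX[\ModR]]$ rather than a substantive new fact; its value lies in giving a geometric interpretation of $\bi^*X$ as ``the atomic elements of $X$'', which will be the operative description when computing $\bi^*$ on the normalization model in the sections that follow.
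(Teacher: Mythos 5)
Your proof is correct and is essentially the same argument the paper gives — the paper's proof is the one-liner ``this follows by adjointness and the definition $\pprn{-}=\bi_!\circ\Yo$,'' and your chain $\Hom{\Sh{\GlTop}}{\bi_!\Yo{\Gamma}}{X}\cong\Hom{\JDG[\ModR]}{\Yo{\Gamma}}{\bi^*X}\cong\prn{\bi^*X}\prn{\Gamma}$ via the $\bi_!\dashv\bi^*$ adjunction and Yoneda is exactly what that one-liner unpacks to. One small but worthwhile observation you correctly caught in passing: the lemma as stated in the paper says ``functor of canonical points,'' but since it is about $\pprn{-}$ (which lands in \emph{atomic} computability spaces) rather than $\bbrk{-}$, the intended phrase is ``functor of atomic points,'' exactly as you use it; the subsequent corollary then reads off that $X\Sub{\pprn{-}}\cong X$.
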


\begin{proof}
  This follows by adjointness and the definition $\pprn{-}=\bi_!\circ\Yo$.
\end{proof}

\begin{corollary}
  The restriction $X\Sub{\pprn{-}}$ of any space $X\in\Sh{\GlTop}$ to its
  functor of atomic points is canonically isomorphic to $X$ itself.
\end{corollary}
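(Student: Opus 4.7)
The plan is to identify both sides via the recollement of computability spaces (\cref{fact:mini-recollement}), which asserts that any $X\in\Sh{\GlTop}$ is recovered as the pullback of $\bj_*\bj^*X$ and $\bi_*\bi^*X$ over $\bi_*\bi^*\bj_*\bj^*X$; equivalently, since $\bi^*\bj_*\cong\brho^*$, the space $X$ is determined by the pair $(\bj^*X,\bi^*X)$ together with the canonical comparison map $\bi^*X\to\brho^*\bj^*X$. The $X_F$ construction is visibly a mechanism for producing a computability space from exactly such data: the $\bj^*$-part is $\bj^*X$ by inspection of the target of the defining natural transformation, and the $\brho^*\bj^*$-receiving side is the presheaf $\Hom{\Sh{\GlTop}}{F-}{X}$.

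First I would apply the preceding lemma with $F=\pprn{-}$ to identify the functor of atomic points $\Hom{\Sh{\GlTop}}{\pprn{-}}{X}$ with $\bi^*X$. Under this identification, the space $X_{\pprn{-}}$ is the computability space whose open part is $\bj^*X$, whose closed part is $\bi^*X$, and whose comparison map is the composite $\bi^*X\to\brho^*\bj^*X$ coming from the defining zig-zag in $\Hom{\JDG[\IMod]}{\Yo{\rho-}}{\bj^*X}\to\brho^*\bj^*X$.

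Next I would verify that this comparison map coincides (up to canonical isomorphism) with the map $\bi^*X\to\bi^*\bj_*\bj^*X\cong\brho^*\bj^*X$ obtained by applying $\bi^*$ to the unit $\eta^{\bj_*\bj^*}_X\colon X\to\bj_*\bj^*X$. This is essentially a chase through adjunctions: the first arrow in the defining zig-zag is the application of $\bj^*$ on hom-sets (naturally identified with postcomposition by $\eta^{\bj_*\bj^*}_X$ via $\bj^*\dashv\bj_*$), while the isomorphism $\bj^*\pprn{\Gamma}\cong\Yo{\rho\Gamma}$ is precisely the hypothesis on $F=\pprn{-}=\bi_!\circ\Yo$, and the final Yoneda step is the identification $\bi^*\bj_*\cong\brho^*$.

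Once the comparison map is matched, the claim follows directly from \cref{fact:mini-recollement}: the data $(\bj^*X,\bi^*X,\bi^*X\to\brho^*\bj^*X)$ reconstructs $X$ via the pullback square of the recollement, so $X_{\pprn{-}}\cong X$ naturally in $X$. The main, though mild, obstacle is the bookkeeping of step two — checking that the zig-zag in the definition of $X_F$ really produces the canonical recollement map and not some twist of it; everything else is formal.
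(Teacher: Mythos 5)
Your proposal is correct and matches the argument the paper intends (the corollary is stated without an explicit proof, but it is meant to follow from the preceding lemma together with the gluing description of $\Sh{\GlTop}$): the lemma identifies the closed part of $X_{\pprn{-}}$ with $\bi^*X$, its open part is $\bj^*X$ by construction, and the comparison map is the canonical recollement map, so \cref{obs:comma-computation}/\cref{fact:mini-recollement} recovers $X$. Your flagged bookkeeping step --- that the defining zig-zag is the mate of the unit $X\to\bj_*\bj^*X$ under $\bi_!\dashv\bi^*$ and the compatibility of $\bj^*\bi_!\cong\Yo\circ\rho$ with $\bi^*\bj_*\cong\brho^*$ --- is indeed the only thing to check, and it goes through exactly as you describe.
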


\begin{construction}[{Internalizing the action of $S$ on types}]\label{con:internalize-eval-tp}
  The map $\Mor|>->|[S]{\IMod}{\Modd}$ determined by the universal property of
  the bi-initial model carries an action $S\Sub{\TP}\cdot- $ that transforms a
  type $\Mor[A]{\Yo{\Gamma}}{\TP[\IMod]}$ in the bi-initial model to a type
  $\Mor[S\Sub{\TP}\cdot A]{\Yo{S\Gamma}}{\TP[\Modd]}$ in the normalization
  model. This map internalizes directly into $\Sh{\GlTop}$ as a \emph{vertical
    map} from $\Mor{\TP}{\TP\NS\Sub{\bbrk{-}}}$ from $\TP$ to the restriction
  $\TP\NS\Sub{\bbrk{-}}$ of $\TP\NS$ to its functor of canonical points.
  To define a vertical map $\Mor{\TP}{\TP\NS\Sub{\bbrk{-}}}$ is
  the same as to define a section of the projection map
  $\Mor{\Hom{\Sh{\GlTop}}{\bbrk{-}}{\TP\NS}}{\brho^*\TP[\IMod]}$:
  \[
    \brho^*\TP[\IMod] \xrightarrow{\cong}
    \Hom{\JDG[\IMod]}{\Yo{\rho-}}{\TP[\IMod]}
    \xrightarrow{S\Sub{\TP}\cdot-}
    \Hom{\Psh{\GCat}}{\Yo{\bbrk{-}}}{N\Sub{\GCat}\TP\NS}
    \xrightarrow{\cong}
    \Hom{\Sh{\GlTop}}{\bbrk{-}}{\TP\NS}
  \]
\end{construction}

\begin{remark}[Toward an internal evaluation map]\label{rem:toward-internal-eval}
  The vertical map $\Mor{\TP}{\TP\NS\Sub{\bbrk{-}}}$ internalizing the action
  of $S$ on types from \cref{con:internalize-eval-tp} is a good first step,
  what we \emph{need} for our results is an unrestricted vertical map
  $\Mor{\TP}{\TP\NS}$. We will do so by exhibiting for \emph{any} $X$ a
  canonical vertical map $\Mor{X\Sub{\bbrk{-}}}{X}$; recalling that $X\cong
    X\Sub{\pprn{-}}$, it evidently suffices to define a (suitably vertical)
  natural transformation $\Mor{\pprn{-}}{\bbrk{-}}$ from the functors of atomic
  points to the functors of canonical points, which we shall refer in
  \cref{sec:inserter} as \DefEmph{hydration}.
\end{remark}

\subsubsection{Hydration of variables via Bocquet, Kaposi, and Sattler's \emph{inserter}}\label{sec:inserter}

The goal of this section is to define a suitably vertical natural
transformation $\Mor{\pprn{-}}{\bbrk{-}}$ that ``hydrates'' an element of an
atomic computability space into an element of the corresponding canonical
computability. Reindexing along this natural transformation, we would then
obtain a map $\Mor{X\Sub{\bbrk{-}}}{X\Sub{\pprn{-}}\cong X}$ that we could use
to define an internal evaluation map $\Mor{\TP}{\TP\NS}$ as in
\cref{rem:toward-internal-eval}.

We shall view both $\CX[\ModR]$ and $\GCat$ as categories \emph{displayed} over
$\JDG[\IMod]$ via the functors
$\Mor[\Yo[\CX[\IMod]]\circ \rho]{\CX[\ModR]}{\JDG[\IMod]}$ and
$\Mor[\bj^*]{\GCat}{\JDG[\IMod]}$. We observe that both $\pprn{-}$ and $\bbrk{-}$
lift into the slice $\Sl{\CAT}{\JDG[\IMod]}$, as witnessed by the following
diagram:
\[
  \begin{tikzpicture}[diagram]
    \node (nc) {$\CX[\ModR]$};
    \node[left = of nc] (nw) {$\GCat$};
    \node[right = of nc] (ne) {$\GCat$};
    \node[below = of nc] (sc) {$\JDG[\IMod]$};
    \draw[->] (nc) to node[above] {$\pprn{-}$} (nw);
    \draw[->] (nc) to node[above] {$\bbrk{-}$} (ne);
    \draw[->] (nc) to node[upright desc] {$\Yo[\CX[\IMod]]\circ\rho$} (sc);
    \draw[->] (ne) to node[sloped,below] {$\bj^*$} (sc);
    \draw[->] (nw) to node[sloped,below] {$\bj^*$} (sc);
  \end{tikzpicture}
\]

Stated now with more precision, our goal is then define a 2-cell
$\Mor[\Hydrate]{\pprn{-}}{\bbrk{-}}$ in the slice $\Sl{\CAT}{\JDG[\IMod]}$ that
``hydrates'' an element of an atomic computability space to an element of the
corresponding canonical computability space. Our construction follows that of
\citet[Appendix A]{uemura:2022:coh}, which is itself modeled on the original
more cryptic formulation by \citet{bocquet-kaposi-sattler:2021}. In particular,
we shall define a model of variables $\ModH$ over $\ModR$ from which we can
extract the desired hydration map.  This is essentially an inductive argument
that will be carried out using the universal property of $\ModR$ as the
bi-initial model of variables over $\IMod$.

\begin{construction}[The hydration model]
  We choose $\CX[\ModH]$ to be the \DefEmph{inserter object} determined by the
  morphisms $\pprn{-},\bbrk{-}$ in $\Sl{\CAT}{\JDG[\IMod]}$. An object of the
  inserter $\CX[\ModH]$ is a pair of an object $\Gamma\in\CX[\ModR]$ and a
  \emph{vertical} map $\Mor[\eta_\Gamma]{\pprn{\Gamma}}{\bbrk{\Gamma}}$; a morphism
  from $\prn{\Delta,\eta_\Delta}$ to $\prn{\Gamma,\eta_\Gamma}$ is given by a morphism
  $\Mor[\gamma]{\Delta}{\Gamma}$ such that the following square commutes:
  \[
    \DiagramSquare{
      nw = \pprn{\Delta},
      sw = \bbrk{\Delta},
      ne = \pprn{\Gamma},
      se = \bbrk{\Gamma},
      north = \pprn{\gamma},
      south = \bbrk{\gamma},
      west = \eta_\Gamma,
      east = \eta_\Delta,
    }
  \]

  There is an evident projection functor
  $\Mor[H]{\CX[\ModH]}{\CX[\ModR]}$ sending each
  $\prn{\Gamma,\eta_\Gamma}$ to $\Gamma$. We define $\TP[\ModH]\in\JDG[\ModH]$ to
  be the presheaf $H^*\TP[\ModR]$; in order to define
  $\EL[\ModH]\in\Sl*{\JDG[\ModH]}{\TP[\ModH]}$, we first describe the
  comprehension of a given element
  $A\in \TP[\ModH]\prn{\Gamma,\eta_\Gamma}$ as an object
  $\prn{\Gamma.A, \eta\Sub{\Gamma.A}}\in \Sl*{\CX[\ModH]}{\prn{\Gamma,\eta_\Gamma}}$.
  In particular, let $\Gamma.A$ be the corresponding comprehension in $\ModR$ as
  below:
  \[
    \DiagramSquare{
      nw = \Yo\prn{\Gamma.A},
      sw = \Yo{\Gamma},
      west = \Yo{p_A},
      south = A,
      east = \Proj[\ModR],
      ne = \EL[\ModR],
      se = \TP[\ModR],
      nw/style = pullback,
    }
  \]

  We will define a vertical map
  $\Mor[\eta\Sub{\Gamma.A}]{\pprn{\Gamma.A}}{\bbrk{\Gamma.A}}$ fitting into the
  following commuting square:
  \[
    \DiagramSquare{
      north/style = {exists,->},
      nw = \pprn{\Gamma.A},
      ne = \bbrk{\Gamma.A},
      sw = \pprn{\Gamma},
      se = \bbrk{\Gamma},
      south = \eta_\Gamma,
      north = \eta\Sub{\Gamma.A},
      west = \pprn{p_A},
      east = \bbrk{p_A},
    }
  \]

  Using
  \cref{obs:atomic-space-context-extension,obs:canonical-space-context-extension},
  we see that such a map can be defined using the following \emph{internal} variable
  hydration map defined using the reflection map of any normalization space:

  \iblock{
    \mrow{
      \Con{hydrate} :
      \Prod{A:\TP\NS}
      \Con{Var}\,\bags{A}
      \tovrt
      \EL\NS{A}
    }
    \mrow{
      \Con{hydrate}\,A\,x =
      \Reflect{A}{
        \Con{neVar}\,\prn{\ReifyTp{A}}\,x
      }
    }
  }

  The projection functor $\Mor[H]{\CX[\ModH]}{\CX[\ModR]}$ can now be seen to
  track a morphism of natural models $\Mor[H]{\ModH}{\ModR}$; moreover, this
  morphism exhibits $\ModH$ by definition as a \emph{model of variables} over
  $\ModR$.
\end{construction}

\begin{construction}[The hydration map]
  As $\Mor[H]{\ModH}{\ModR}$ is a model of variables over $\ModR$, the composite
  $\Mor[\rho\circ H]{\ModH}{\IMod}$ is also a model of variables over $\IMod$. As $\ModR$
  is assumed to be the bi-initial model of variables, we have an essentially
  unique section $\Mor{\ModR}{\ModH}$ over $\IMod$. The underlying functor of
  this section sends each context $\Gamma\in\CX[\ModR]$ to a morphism
  $\Mor[\eta_\Gamma]{\pprn{\Gamma}}{\bbrk{\Gamma}}$, and functoriality
  guarantees that this assignment is natural. Therefore, we may define
  $\Mor[\Hydrate]{\pprn{-}}{\bbrk{-}}$ componentwise by
  $\Hydrate_\Gamma=\eta_\Gamma$.
\end{construction}

\subsubsection{The normalization map and its injectivity}

By reindexing along our vertical hydration map
$\Mor[\Hydrate]{\pprn{-}}{\bbrk{-}}$, we therefore obtain a vertical map
$\Mor[X\Sub{\Hydrate}]{X\Sub{\bbrk{-}}}{X\Sub{\pprn{-}}\cong X}$. As we see
below, this is enough to fulfill the problem posed by
\cref{rem:toward-internal-eval}.

\begin{construction}[The internal evaluation map]\label{con:eval-map}
  We shall now exhibit a vertical evaluation map
  $\Mor[\bbrk{-}\Sub{\TP}]{\TP}{\TP\NS}$ within $\Sh{\GlTop}$ sending
  each type $\IMod$-type to the normalization space chosen by our model.
  \[
    \TP\xrightarrow{\text{\cref{con:internalize-eval-tp}}} \TP\NS\Sub{\bbrk{-}} \xrightarrow{\TP\NS\Sub{\Hydrate}} \TP\NS
  \]
\end{construction}

\begin{construction}[The internal normalization map]\label{con:norm-map}
  We may compose the internal evaluation map
  $\Mor[\bbrk{-}\Sub{\TP}]{\TP}{\TP\NS}$ with the vertical projection
  $\Mor[\ReifyTp{-}]{\TP\NS}{\NfTp}$ of normal forms from normalization spaces
  to obtain a vertical \DefEmph{normalization map}
  $\Mor[\Con{norm}\Sub{\TP}]{\TP}{\NfTp}$ that takes any element of
  $\TP$ to its normal form.
\end{construction}

\begin{observation}\label{obs:norm-injective}
  The internal normalization map $\Mor[\Con{norm}\Sub{\TP}]{\TP}{\NfTp}$
  is a monomorphism, as it is a section of the unit map
  $\Mor{\NfTp}{\OpMod{\NfTp}\cong\TP}$.
\end{observation}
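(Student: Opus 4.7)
The plan is to observe that the normalization map is built as a composite of vertical maps, and then to exploit the general fact that a vertical section of the unit map of a $\P$-modal reflection is automatically a split monomorphism.

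First I would verify that each constituent of $\Con{norm}\Sub{\TP}$ is vertical in the appropriate sense. The map $\TP \to \TP\NS\Sub{\bbrk{-}}$ from \cref{con:internalize-eval-tp} is vertical by construction, the hydration-induced reindexing $\TP\NS\Sub{\Hydrate} : \TP\NS\Sub{\bbrk{-}} \to \TP\NS\Sub{\pprn{-}} \cong \TP\NS$ is vertical because $\Hydrate$ was constructed as a vertical 2-cell (\cref{sec:inserter}), and the reification projection $\ReifyTp{-} : \TP\NS \to \NfTp$ is vertical because the normalization space structure equips it with an extension witness $\Ext{\NfTp}{\bags{-}}$. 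Composition of vertical maps is vertical, so $\Con{norm}\Sub{\TP}$ is itself a vertical map $\TP \to \NfTp$.

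Next I would unpack what being vertical means at this type. Since $\NfTp : \Ext{\UU}{\TP}$, the canonical unit $\eta : \NfTp \to \OpMod{\NfTp}$ is isomorphic, via \cref{postulate:open} and the strict restriction clause $\OpMod{\NfTp} = \TP$, to a projection $\NfTp \to \TP$ given internally by $n \mapsto \bags{n}$. By \cref{lem:ext-sl-equiv} (or directly by unwinding the definition of an extension type), a vertical map $\TP \to \NfTp$ is precisely a function whose postcomposition with this projection is the identity on $\TP$ --- in other words, a section of the unit map.

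The proof then concludes with a one-line categorical observation: any section of any map is a split monomorphism, since $\eta \circ \Con{norm}\Sub{\TP} = \Idn{\TP}$ forces $\Con{norm}\Sub{\TP}$ to be left-cancellable. The main non-obstacle, which I would want to state carefully but expect to cause no trouble, is tracing the verticality through the hydration step: one must check that the reindexing $X\Sub{\Hydrate}$ associated to a vertical natural transformation is again vertical, which is immediate from the definition of $X_F$ in \cref{sec:externalizing-normalization-spaces} together with the fact that $\bj^*\Hydrate$ is (by construction in the slice over $\JDG[\IMod]$) the identity.
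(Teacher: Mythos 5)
Your proposal is correct and follows essentially the same route as the paper: the paper's observation simply records that $\Con{norm}\Sub{\TP}$ is vertical (being a composite of the vertical maps from \cref{con:internalize-eval-tp,con:eval-map,con:norm-map}), hence a section of the unit $\NfTp\to\OpMod{\NfTp}\cong\TP$, hence a split monomorphism. Your write-up merely makes explicit the verticality bookkeeping and the identification of ``vertical map into $\NfTp$'' with ``section of the unit,'' which is exactly the content the paper leaves implicit.
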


\subsection{Injectivity of type constructors}

In \cref{question:injectivity}, we have asked whether
$\Mor[\Rightarrow\Sub{\IMod}]{\TP[\IMod]\times\TP[\IMod]}{\TP[\IMod]}$ is a
monomorphism in $\JDG[\IMod]$. We can now answer in the affirmative, by virtue
of the normalization result (\cref{sec:normalization-result}).

\begin{lemma}\label{lem:fringe-functor-is-surjection}
  The morphism of topoi $\Mor[\brho]{\ClTop{\ModR}}{\ClTop{\IMod}}$ is a geometric
  surjection.
\end{lemma}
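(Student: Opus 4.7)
The plan is to recognize this as a direct corollary of \cref{lem:geometric-surjection}, once we verify its democracy hypothesis for the bi-initial model $\IMod$. Recall that earlier in the paper it was observed that the bi-initial natural model of any given type theory is democratic in the sense of \cref{def:democratic}; concretely, every context of $\IMod$ is built up from the terminal context by iterated context comprehension, so by induction every $\Gamma\in\CX[\IMod]$ represents an $\IMod$-contextual object. Thus the hypothesis of \cref{lem:geometric-surjection} is satisfied, and we may invoke it directly to conclude that $\brho$ is a geometric surjection.

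First I would spell out the democracy of $\IMod$. One way to do this cleanly is to note that the full sub-2-category of democratic natural models with function spaces (and the chosen constants) is reflective in the ambient 2-category, so its bi-initial object coincides with the bi-initial object of the unrestricted 2-category; alternatively, one can argue syntactically that every context of the free model is of the form $\ObjTerm{}.A_1.A_2.\ldots.A_n$ and hence $\IMod$-contextual by \cref{def:contextual-class}. Either route confirms the hypothesis.

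Next I would invoke \cref{lem:geometric-surjection} with $\Mod = \IMod$ and $\ModR$ the bi-initial model of variables over $\IMod$. Its proof already contains the implication we need: essential surjectivity of $\Mor[\rho]{\CX[\ModR]}{\CX[\IMod]}$ (which follows from democracy together with the fact that $\TP[\ModR]\cong\rho^*\TP[\IMod]$ by \cref{def:model-of-variables}) makes the precomposition functor $\rho^*$ faithful, and faithfulness of the inverse image component is precisely the definition of a geometric surjection.

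The only potential obstacle is the unstated verification that $\IMod$ really is democratic, but this is genuinely routine: it is a property preserved by the biadjoint construction of free structured natural models, and it has been taken for granted in the paper's earlier exposition. Once that is in place, no further work is required beyond citing \cref{lem:geometric-surjection}.
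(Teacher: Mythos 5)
Your proposal is correct and follows exactly the paper's own route: the paper's proof is a one-liner citing \cref{lem:geometric-surjection} together with the fact that the bi-initial model is always democratic (for which it cites Uemura's thesis, whereas you sketch the justification yourself). The extra detail you give about democracy is fine but not needed beyond the citation.
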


\begin{proof}
  This follows from \cref{lem:geometric-surjection}, since the bi-initial model
  is always democratic~\citep{uemura:2021:thesis}.
\end{proof}

\begin{theorem}[Injectivity of type constructors]\label{thm:main-theorem}
  The function space constructor
  $\Mor[\prn{\Rightarrow\Sub{\IMod}}]{\TP[\IMod]\times\TP[\IMod]}{\TP[\IMod]}$ is a
  monomorphism in $\JDG[\IMod]$.
\end{theorem}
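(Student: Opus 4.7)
The plan is to combine the normalization apparatus of \cref{sec:normalization-result} with the geometric surjectivity of $\brho$ (\cref{lem:fringe-functor-is-surjection}) so as to reduce the statement to the modal injectivity of $\Con{nfFun}$ established in \cref{lem:modal-injectivity}. Since $\brho$ is a geometric surjection, $\brho^*$ is faithful and therefore reflects monomorphisms, so it suffices to show that $\brho^*\prn{\Rightarrow\Sub{\IMod}}$ is a monomorphism in $\JDG[\ModR]$. Under the recollement (\cref{con:topos-recollement}) we have $\brho^*\cong \bi^*\bj_*$, and $\bj_*\prn{\Rightarrow\Sub{\IMod}}$ is precisely the internal function-space map $\prn{\Rightarrow}:\TP\times\TP\to\TP$ on the $\P$-modal universe $\TP$ in $\Sh{\GlTop}$. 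By the full faithfulness of $\bi_*$, the task thus further reduces to showing that $\ClMod\prn{\Rightarrow}:\ClMod\prn{\TP\times\TP}\to\ClMod\TP$ is a monomorphism in $\Sh{\GlTop}$.

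The next step is to link $\prn{\Rightarrow}$ with $\Con{nfFun}$ via the normalization map. By the construction of the internal evaluation map in \cref{con:internalize-eval-tp,con:eval-map}, $\bbrk{-}\Sub{\TP}$ preserves function spaces, sending $A\Rightarrow B$ to $\bbrk{A}\Sub{\TP}\Rightarrow\NS\bbrk{B}\Sub{\TP}$; and by the definition of $\prn{\Rightarrow\NS}$ on normalization spaces in \cref{sec:externalizing-normalization-spaces}, $\ReifyTp{A\Rightarrow\NS B}=\Con{nfFun}\,\ReifyTp{A}\,\ReifyTp{B}$. Composing with the reification projection yields the commutation
\[
  \Con{norm}\Sub{\TP}\circ\prn{\Rightarrow} \;=\; \Con{nfFun}\circ\prn{\Con{norm}\Sub{\TP}\times\Con{norm}\Sub{\TP}}
\]
of morphisms $\TP\times\TP\to\NfTp$ in $\Sh{\GlTop}$.

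Now I would apply the lex reflector $\ClMod$, which preserves finite products and monomorphisms. The right-hand side becomes $\ClMod\Con{nfFun}\circ\prn{\ClMod\Con{norm}\Sub{\TP}\times\ClMod\Con{norm}\Sub{\TP}}$, in which $\ClMod\Con{nfFun}$ is a monomorphism by \cref{lem:modal-injectivity}, and $\ClMod\Con{norm}\Sub{\TP}$ is a monomorphism because $\Con{norm}\Sub{\TP}$ is one by \cref{obs:norm-injective} and $\ClMod$ is left exact. The right-hand side is therefore a composite of monomorphisms; hence $\ClMod\Con{norm}\Sub{\TP}\circ\ClMod\prn{\Rightarrow}$ is a monomorphism, and since in any category whenever $fg$ is a monomorphism so is $g$, we conclude that $\ClMod\prn{\Rightarrow}$ is itself a monomorphism. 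The main obstacle will be administrative rather than conceptual: keeping the recollement identifications $\brho^*\cong\bi^*\bj_*$ and $\bj_*\prn{\Rightarrow\Sub{\IMod}}=\prn{\Rightarrow}$ straight, and verifying that the normalization map actually intertwines $\prn{\Rightarrow}$ with $\Con{nfFun}$; once these bookkeeping items are discharged, the heavy lifting has all been packaged in \cref{lem:modal-injectivity}, \cref{obs:norm-injective}, and \cref{lem:fringe-functor-is-surjection}.
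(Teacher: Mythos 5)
Your proposal is correct and takes essentially the same route as the paper: reduce via the geometric surjection $\brho$ and recollement to showing $\ClMod(\Rightarrow)$ is a monomorphism in $\Sh{\GlTop}$, then use the normalization map to transfer injectivity from $\Con{nfFun}$ (\cref{lem:modal-injectivity}) back through $\Con{norm}\Sub{\TP}$ (\cref{obs:norm-injective}). The only difference is stylistic: you phrase the final step diagrammatically (a composite of monos, then ``$fg$ mono implies $g$ mono''), whereas the paper switches into the internal language of $\Sh{\GlTop}$ and chases elements $A,A',B,B':\TP$; the underlying argument is the same.
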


\begin{proof}
  As $\Mor[\brho]{\ClTop{\ModR}}{\ClTop{\IMod}}$ is a surjection
  (\cref{lem:fringe-functor-is-surjection}), its inverse image functor is
  faithful; as $\Mor|closed immersion|[\bi]{\ClTop{\ModR}}{\GlTop}$ is an
  embedding, its direct image is (fully) faithful. As faithful functors reflect
  monomorphisms, it suffices for us to show that $\bi_*\brho^*\prn{\Rightarrow\Sub{\IMod}}$ is
  a monomorphism in $\JDG[\ModR]$. Since $\brho^* \cong \bi^*\bj_*$ we have
  $\bi_*\brho^*\prn{\Rightarrow\Sub{\IMod}} \cong \bi_*\bi^*\bj_*\prn{\Rightarrow\Sub{\IMod}} =
    \ClMod\bj_*\prn{\Rightarrow\Sub{\IMod}}$. Hence it is enough to show that
  $\ClMod\bj_*\prn{\Rightarrow\Sub{\IMod}}$ is a monomorphism.

  Switching to the internal language, we suppress the embedding $\bj_*$ and aim
  to check that the function $\ClMod\prn{\Rightarrow} :
    \ClMod\prn{\TP\times\TP}\to \ClMod\TP$ is injective. Fixing $A,A',B,B':\TP$
  such that $\prn{A\Rightarrow B} = \prn{A'\Rightarrow B'}$, our goal is to
  check that $\ClMod\prn{\prn{A,B} = \prn{A',B'}}$. We know that
  $\Con{norm}\Sub{\TP}\prn{A\Rightarrow B} =
    \Con{norm}\Sub{\TP}\prn{A'\Rightarrow B'}$; unfolding the definition of
  $\Con{norm}\Sub{\TP}$ induced by the normalization model in
  \cref{con:eval-map,con:norm-map} we conclude that
  $\Con{nfFun}\prn{\Con{norm}\Sub{\TP}A,\Con{norm}\Sub{\TP}B} =
    \Con{nfFun}\prn{\Con{norm}\Sub{\TP}A',\Con{norm}\Sub{\TP}B'}$. Our goal then
  follows from the modal injectivity of normal form constructors
  (\cref{lem:modal-injectivity}) together with our \cref{obs:norm-injective}
  that the normalization function is injective.
\end{proof}

\section{Concluding remarks}

We have at long last shown in \cref{thm:main-theorem} how to prove that the
type constructor for function spaces is a monomorphism in the bi-initial model
of type theory with function spaces on a base type with two constants. A few
things deserve additional comment.

\paragraph{Extension to more sophisticated results}

We have focused on the injectivity of ordinary function spaces for the sake of
simplicity, but the methods exposed herein also apply to dependent product,
dependent sums, \etc. Likewise, our methods extend readily to prove more
difficult results, including the fact that the normalization function is not
only a section but in fact an isomorphism. From these results, one may deduce a
solution to the word problem for Martin-L\"of type theory. Finally, these
methods can be adapted to apply to much more sophisticated type theories,
including cubical type
theory~\citep{sterling:2021:thesis,sterling-angiuli:2021}, multi-modal type
theory~\citep{gratzer:2022:lics}, and even ``$\infty$-type
theories''~\citep{uemura:2022:coh}.

\paragraph{Emphasis of universal properties over explicit constructions}

At every stage in our development, we have worked as much as possible with
invariant universal properties rather than explicit constructions. For
instance, we worked with the (2,1)-categorical universal property of bi-initial
natural model not because we do not think that the concrete syntax of type
theory is important, but because we want our proofs to be flexible enough to
apply to \emph{any} correct implementation of this concrete syntax, \ie any
presentation that can be shown to satisfy the universal property. The concrete
presentation of type theoretic syntax is both highly non-trivial and deeply
obscure: for this reason, it cannot be counted as a virtue for a proof to be
applicable only to a specific obscure presentation that is likely to be
superseded as the winds of fashion blow one way or another.

Likewise, it is possible to give an explicit construction of the ``model of
variables'' in terms of syntactically defined telescopes (see
\citet[\S5.5]{sterling:2021:thesis} for such a construction), but we have
followed the more modular proof technique of
\citet{bocquet-kaposi-sattler:2021} not because we wish to worship abstraction
for abstraction's sake, but because the proof applies to \emph{any}
presentation of the bi-initial model of variables. The flexibility to choose
different presentations is very important for implementation because such
choices can have a significant impact on efficiency; therefore, a modern proof
that is invariant in this way is arguably much closer to practical applications
than the more old-fashioned ones that emphasized explicit constructions.
 
\clearpage
\bibliographystyle{plainnat}
\bibliography{references/refs-bibtex}

\nocite{sga:4}
\nocite{johnstone:topos:1977,johnstone:2002}
\nocite{streicher:2021:fib}
\nocite{coquand:2019}

\nocite{sterling:2021:thesis}
\nocite{awodey:2022:universes}

\end{document}